\documentclass{article}

\PassOptionsToPackage{numbers, compress}{natbib}
\usepackage{doi}
\usepackage[preprint]{neurips_2026}

\usepackage{geometry}
\usepackage{graphicx} 
\usepackage{amsmath}
\usepackage{bbm, bm}
\usepackage{amsthm}
\usepackage{amssymb}
\usepackage{float}
\usepackage{thmtools}
\usepackage{booktabs}

\declaretheorem[name=Proposition,numberwithin=section]{proposition}
\declaretheorem[name=Lemma,numberwithin=section]{lemma}
\declaretheorem[name=Theorem,numberwithin=section]{theorem}

\declaretheorem[name=Remark,numberwithin=section]{remark}
\declaretheorem[name=Corollary,numberwithin=section]{corollary}

\usepackage{apxproof}

\usepackage[utf8]{inputenc} 
\usepackage[T1]{fontenc}    
\usepackage{hyperref}       
\usepackage{url}            
\usepackage{amsfonts}       
\usepackage{nicefrac}       
\usepackage{microtype}      
\usepackage{xcolor}         

\title{The Dynamics of Policy Gradient\\ in Social Dilemmas with Partner Selection}

\author{
  Benedict Russell \\
  Mathematics Institute \\
  Warwick University\\
  \texttt{benedict.i.russell@warwick.ac.uk} \\
  \And
  Chin-wing Leung \\
  Department of Computer Science \\
  Warwick University \\
  \texttt{chin-wing.leung@warwick.ac.uk} \\
  \And
  Paolo Turrini \\
  Department of Computer Science\\
    Warwick University\\
  \texttt{p.turrini@warwick.ac.uk} \\
  }

\begin{document}

\maketitle

\begin{abstract}
In social dilemmas self-interested learning agents face the choice between the societal benefit of cooperation and the immediate reward of defection. Significant evidence exists on the benefits of assortment mechanisms such as partner selection for the emergence of cooperation, but this is largely available through agent-based simulations. In this paper, we provide an analytical solution to the problem, studying the policy-gradient dynamics in a multi-agent environment with partner selection. We show how partner selection changes the opponent distribution and hence the reward landscape, and prove this promotes cooperation under simple rules known from the literature. In particular, we find that population variance is a necessary condition for cooperation to emerge. Using a two-dimensional Wiener process, we extend the dynamics to capture the stochastic effects of partner selection and the resulting opponent distribution. We derive a sufficient condition for the population to be cooperation-promoting and prove the existence of a stationary distribution. Simulations confirm that the stochastic model accurately captures the policy-gradient dynamics and clarifies how the learning rate affects the emergence of cooperation.
\end{abstract}

\section{Introduction}
Social dilemmas capture the tension faced by self-interested agents when given the option to contribute to a common goal, or exploit the contributions of others. Ensuring that autonomous systems can work together for the greater good is currently one of the biggest challenges in the field of Artificial Intelligence. Recent research has explored the cooperation of LLMs \cite{willis_will_2025, nguyen_navigating_nodate}, multi-agent reinforcement learning (MARL) \cite{leung_learning_2024}, and evolutionary game theory \cite{nowak_five_2006}; each providing unique insight into how cooperation can be sustained.

In MARL, this problem is especially challenging, as agents experience a non-stationary environment induced by other learning agents \cite{anastassacos_partner_2020}. Consequently, defection is a common outcome unless the agents are equipped with an additional mechanism such as behavioural signals of the opponent \cite{priklopil_optional_2017, sabater-mir_reputation_2002} or dynamic interactions \cite{bara_enabling_2022}. In his seminal work, \citet{nowak_five_2006} identified partner selection as a key mechanism for the emergence of cooperation. This has led to extensive research on cooperation on dynamic networks \cite{bara_enabling_2022, wang_cooperation_2012, fehl_co-evolution_2011}, as well as pairwise partner selection \cite{anastassacos_partner_2020, leung_learning_2024}.

Researchers in this area commonly find that partner selection rules which increase interaction frequency or duration between cooperative agents are key to the emergence of cooperation. A popular mechanism, coined Out-for-Tat (OFT), enforces cooperative pairs to stay together; whilst re-wiring pairs with a defector \cite{izquierdo_leave_2014,  zhang_opting_2016}. Recent contributions have found allowing agents to learn partner selection is sufficient for cooperation to emerge \cite{defection_russell2026, fan_colearning}; the agents adopt OFT as a mechanism to avoid exploitation. When random matching is enforced, defection takes hold of the population.

Despite the empirical success, the current understanding of partner selection remains heavily reliant on agent-based simulations \cite{leung2026learning, leung_curiosity_2025}. While these simulations provide robust evidence that cooperation can emerge, they often fail to provide a formal theoretical justification for how these selection rules reshape the underlying learning dynamics.
Specifically, there is a lack of analytical work that maps policy gradient (PG) updates onto the non-stationary reward landscapes induced by shifting opponent distributions.
Without this theoretical foundation, it remains unclear which population conditions are strictly necessary for cooperation to take hold, or how stochasticity in action and partner selection influences long-term stability.

\paragraph{Contribution} In this paper, we provide a formal theoretical analysis of policy gradient dynamics in social dilemmas with partner selection.
Using mean-field theory, we derive the conditional partner distribution to capture how partner selection rules alter the reward structure for learning agents.
We prove that partner selection promotes cooperation under well-known rules (OFT and ROFT) and identify that population variance is a necessary condition for cooperation to emerge from an initially non-cooperative state.
We extend the mean dynamics to a stochastic model using a two-dimensional Wiener process to capture the randomness inherent in action selection and partner selection.
We derive a sufficient condition for the population to be cooperation-promoting and prove the existence of a stationary distribution under the derived model.
Simulations corroborate our theoretical findings, demonstrating that our stochastic model accurately captures the evolution of strategies and clarifies the pivotal role of the learning rate in fostering cooperative clusters.

\paragraph{Related Literature}
\textit{Partner selection.}
In evolutionary and network models, enabling cooperators to avoid defectors alters the payoff landscape, generating more assortative interaction patterns than a well-mixed population \cite{pacheco_active_2006, santos_cooperation_2006, pacheco_coevolution_2006}. This can take many forms, including active linking \cite{pacheco_active_2006, pacheco_repeated_2008, bara_enabling_2022}, partner switching \cite{fu_partner_2009, li_aspiration-based_2014}, and conditional dissociation \cite{izquierdo_leave_2014, izquierdo_option_2010}. Rewiring acts as a weak form of ostracism; defectors cannot repeatedly exploit cooperators, but cooperators have continued access to each other. These results carry over to human populations, where experimental work shows that dynamic partner updates can increase cooperation \cite{rand_dynamic_2011, fehl_co-evolution_2011, zhang_opting_2016}.
Recent research in MARL has implemented these ideas into learning environments. Anastassacos et al. \cite{anastassacos_partner_2020} show that partner selection can induce cooperation amongst selfish reinforcement learning agents. This has been extended to enable agents to learn partner selection rules themselves, with behavioural signals \cite{leung_learning_2024, leung_curiosity_2025} and without any prior information on opponents \cite{defection_russell2026}.

\textit{Learning Dynamics.}
The complexity associated with non-stationary environments has led researchers to model learning dynamics using tools from dynamical systems and evolutionary game theory. Prior work has linked learning to coupled replicator equations \cite{sato_coupled_2003, galstyan_continuous_2013}, selection-mutation models \cite{tuyls_selection-mutation_2003, bloembergen_evolutionary_2015}, and mean field approximations \cite{Hu2019ModellingTD}. Learning dynamics have been considered in population games, using mean-field theory to derive the dynamics under the concurrent learning protocol \cite{hu_dynamics_2022, Hu2019ModellingTD}. This has been extended to agents acting on graphs \cite{chu_formal_2022}, and considering stochastic interactions under the social learning protocol \cite{leung_modelling_2022}. In a closely related work, \citet{zheng_simple_2017} uses a pair approximation in an evolutionary setting to show the emergence of cooperation under a partner selection rule; our work differs by analysing a learning population.

\textit{Policy Gradient Dynamics.}
In 2-player symmetric games, \citet{srinivasan_actor-critic_nodate} provides the mean dynamics of policy gradient and compares against the replicator dynamics. \citet{bernasconi_evolutionary_2025} formally establishes the connection between the replicator dynamics and the soft-max policy gradient, and proves asymptotic stability for evolutionarily stable strategies (ESS) in symmetric games. This setting has been extended to analyse the stochastic effects of exploration \cite{leung_stochastic_evolution_2024}, with stochastic stability shown under self-play. Stochastic stability has also been analysed when the payoff faces aggregate shocks \cite{fudenberg_evolutionary_1992} and random perturbations \cite{mertikopoulos_emergence_2010}.

Whilst there has been extensive work on symmetric games, including the stochastic effects of shocks and action selection, there has not been theoretical research which shows the role of partner selection on the behaviour of reinforcement learning (RL) agents. This adds an additional layer of complexity, where the opponent distribution will influence both the mean and stochastic dynamics.

\section{Background}
In this section, we provide the necessary background on social dilemmas with partner selection and the policy gradient algorithm in repeated Prisoner's Dilemma.

\subsection{Repeated Prisoner's Dilemma}\label{section: preliminaries prinoners dilemma}
In the Prisoner’s Dilemma (PD), two players simultaneously choose whether to cooperate (C) or defect (D), receiving payoffs based on their combined actions. Whilst mutual cooperation yields the greatest collective payoff, each player is incentivised to defect and exploit their opponent. Consequently, mutual defection remains the only Nash equilibrium \cite{nash_non-cooperative_1951} and ESS in the one-shot game \cite{smith_evolution_1982}. The general payoff matrix is shown in Table \ref{table: payoff}; the payoff is such that $b>c>0$.
\begin{table}[H]
  \caption{Prisoner's Dilemma payoff matrix.}
  \label{table: payoff}
  \centering
  \begin{tabular}{lll}
    \toprule
         & C & D \\
    \midrule
    C & $b$  & $0$     \\
    D     & $b+c$ & $c$      \\
    \bottomrule
  \end{tabular}
\end{table}
This paper examines a repeated Prisoner’s Dilemma, where for each episode, an agent plays $H$ rounds of PD with their opponents. The opponent changes based on the actions of the pair, according to a predefined partner selection rule. We will consider 4 commonly studied rules \cite{leung_learning_2024,defection_russell2026}:
\begin{itemize}
    \item[--] Out-for-Tat (OFT): stay if and only if both cooperate
    \item[--] Reverse Out-for-Tat (ROFT): stay if and only if both defect
    \item[--] Always Stay (Stay): stay independent of the actions
    \item[--] Always Switch (Switch): switch independent of the actions
\end{itemize}
These provide a baseline to analyse how partner selection can be used to promote cooperation in a population of self-interested learning agents.

\subsection{Policy Gradient}\label{section: preliminaries pg}

We study independent reinforcement learners where each agent aims at optimising their expected return. The policy gradient (PG) algorithm searches over the parameter space $\bm{\psi}\in \mathbb{R}^{m}$ and updates the policy through stochastic gradient ascent to find parameters that optimise the agent's decision-making. We consider softmax parameterisation for each agent's policy, which enables balancing exploration and exploitation. The policy for action $a_k$ is given by
\begin{equation}\label{eqn: softmax function}
    \pi(a_k|\bm{\psi}(t)) := \frac{\exp(\psi_k)}{\sum_{l=1}^m\exp(\psi_{l})}.
\end{equation}
The policy update follows the REINFORCE algorithm \cite{williams_simple_1992}. If action $a_h$ is selected in round $h$ of an episode and the resulting accumulated rewards from step $h$ onwards are $R^h_{a_h}$, where we assume the discount factor is $1$, the approximated gradient of the score-function is $(R^h_{a_h}-\beta)(\bm{e}_{a_h}-\bm{\pi})$, where $\beta \in \mathbb{R}$ is a baseline \cite{Sutton2018ReinforcementL} and we set it to $0$. The parameter update across an episode of length $H$ is
\begin{equation}\label{eqn: psi reinforce update step}
    \partial_t \bm{\psi} :=\bm{\psi}(t+1)- \bm{\psi}(t) = \alpha \sum_{h=0}^{H-1}R^h_{a_h}(\bm{e}_{a_h}-\bm{\pi}).
\end{equation}
where $a_h$ is the action sampled at round $h$, and $\bm{e}_{a_h}=(0,...1,...0)$ is the basis vector. In the PD, the actions are limited to $a \in \{C, D\}$. Denoting the probability of cooperating $x := \pi(C|\bm{\psi}(t))$ and defecting $1-x := \pi(D|\bm{\psi}(t))$, the \textit{parameter} mean dynamics are given by
\begin{align*}
  \partial_t \psi_C = \alpha x(1-x)\sum_{h=0}^{H-1}(G^h_{C}-G^h_{D}), \quad\quad \partial_t \psi_D = -\partial_t \psi_C,
\end{align*}
where $G^h_C := \mathbb{E}[R^h|a_h = C]$ and $G^h_D := \mathbb{E}[R^h|a_h =D]$. Applying the chain rule to Equation \eqref{eqn: softmax function}, the \textit{policy} mean dynamics are
\begin{align}
    \frac{dx}{dt} &=
     2\alpha x^2(1-x)^2 \sum_{h=0}^{H-1}(G^h_{C}-G^h_{D}),
\end{align}
which is a one-dimensional non-linear ODE, where the sign of the episodic reward difference between cooperating and defecting will determine the evolution. Defining $G_a = \sum_{h=0}^{H-1}G_a^h$, we denote the overall difference by $\Delta G = G_C - G_D$.

\section{Model Formulation}\label{section: model}
In this section, we introduce the agent-based model and derive how partner selection alters the opponent distribution and resulting reward structure. 
\subsection{Setup}
We consider a population of agents, where the distribution of strategies is denoted by $\rho(x)$. Each strategy, $x$, corresponds to the probability of cooperating by that agent, and as such is bounded between 0 and 1. Therefore, if $x=0$ the agent always defects, whilst if $x=1$, the agent will always cooperate.

For each episode, an agent $i$ is randomly selected to be the focal agent. At round $h=0$, sample an opponent $j$ from the population; the pair interact by playing the PD game, and the payoff is given according to Table \ref{table: payoff}. The opponent $j$ is then redrawn according to the partner selection rule. This repeats for a fixed number of rounds, $H$. At the end of the episode, the focal agent $i$ updates their strategy according to the REINFORCE algorithm.

The partner selection rule chosen will play a significant role in the evolutionary dynamics. For example, under the Out-For-Tat mechanism, an agent will continue to play with their partner if and only if they both cooperate. Otherwise, a new opponent is sampled (with replacement) from the population. The distribution of these opponents is inherently different: if the opponent stayed, they are more likely to have a higher cooperation rate $x$ than a new draw from the population.

\subsection{Conditional Opponent Distribution \& Reward Structure}\label{sec: conditional opponent distribution and reward}
Consider an arbitrary agent $i$ in the population, where the strategy of this agent is denoted by $x$. At step $h$ of the episode, the distribution of the opponent given agent $i$'s strategy is $\rho^h(y|x)$. For any partner selection rule, the distribution at the next step is
\begin{align*}
    \rho^{h+1}(y|x) &= \int_0^1 \rho^{h+1}(y|x,z)\rho^{h}(z|x) dz
\end{align*}
The one-step change between the distributions is subject to the chosen Markovian partner selection mechanism. For example, under the OFT rule, the pair will stay if and only if both agents cooperate. Hence,
\begin{align*}
    \rho^{h+1}(y|x) 
    &= \int_0^1 \bigg(\delta_{z}(y)xz + \rho^{h+1}(y|x,\text{switch})(1-xz) \bigg)\rho^{h}(z|x) dz\\
    &= x\int_0^1 \delta_{z}(y)z\rho^h(z|x) dz +  \rho^{h+1}(y|x,\text{switch})\int_0^1(1-xz)\rho^{h}(z|x) dz\\
    & = xy\rho^{h}(y|x) + \rho(y)(1-xm^h(x))
\end{align*}
where, upon breaking a connection, a new opponent is sampled from the population. Given the focal agent's strategy $x$, the mean cooperation frequency  of the opponents at step $h$ is denoted by
\begin{align*}
    m^h(x) = \mathbb{E}^h[Y|x] &= \int_0^1 y \rho^h(y|x) dy.
\end{align*}
For the 4 partner selection rules, the recursion is given in Table \ref{table: iterative form and rewards for PS rules}.

The opponent distribution provides an explicit way to capture the reward at each step, $h$. We can utilise this structure to derive the conditional rewards across the episode.

\begin{table}[H]
    \caption{Conditional partner distribution update step and per round reward difference for episode length $H=2$. In each case, the initial distribution is $\rho(y)$.}
    \label{table: iterative form and rewards for PS rules}
    \centering
    \begin{tabular}{cccc}
        \toprule
         Rule &  $\rho^{h+1}(y|x)$ & $\Delta r^{h,h}(x)$ & $\Delta r^{0,1}(x)$ \\
         \midrule
         OFT&  $xy\rho^{h}(y|x) + \rho(y)(1-xm^h(x))$& $-c$ & $b(\mu_2 - \mu_1^2)$\\
         ROFT & $(1-x)(1-y)\rho^{h}(y|x) + \rho(y)(1-(1-x)(1-m^h(x)))$& $-c$ & $b(\mu_2 - \mu_1^2)$\\
         Stay & $\rho(y)$& $-c$ & $0$\\
         Switch & $\rho(y)$& $-c$ & $0$\\
         \bottomrule
    \end{tabular}
\end{table}

Following Equation \eqref{eqn: psi reinforce update step}, the episodic reward is the sum of all accumulated rewards conditioned on an action adopted at step $h=k$, followed by the original policy $x$ in future interactions. 
Consequently, we denote the agent's strategy, conditioned on its action in round $k$, as $x, C^k$ and $x, D^k$, for cooperating and defecting at $k$ respectively. Likewise, the opponent distribution, expected cooperation, and reward at time $h$ are denoted $\rho^h(y|x, C^k)$, $m_C^{k,h}(x)$, and $r_{C}^{k,h}$.

At step $h=k$, the expected difference between cooperation and defection is always $-c$. The reward difference for $h\geq k+1$ is given by 
\begin{align*}
    \Delta r^{k,h}(x) =  r_{C}^{k,h} - r_{D}^{k,h} &= \int[by+(1-x)c](\rho^h(y|x, C^k) - \rho^h(y|x, D^k))dy\\
    &= b\int y(\rho^h(y|x, C^k) - \rho^h(y|x, D^k))dy\\
    &= b (m^{k,h}_C(x) - m^{k,h}_D(x))\\
    &= b \Delta m^{k,h}(x)
\end{align*}
where given an agent has policy $x$ and cooperated in the $k$th round,  $m^{k,h}_C(x)$ is explicitly given by
\begin{align}\label{eqn: conditional mean step h}
    m^{k,h}_C(x) = \mathbb{E}^h[Y|x,C^k] &= \int_0^1 y \rho^h(y|x,C^k) dy.
\end{align}
Let $\mu_l=\int y^l\rho(y)dy$ denote the $l$th moment of the underlying population distribution $\rho(y)$; the reward differences for the first two steps in the episode are shown in Table \ref{table: iterative form and rewards for PS rules} (a full derivation is provided in Appendix \ref{appendix: derivations}).

Regardless of the partner selection rule, the expected reward difference between cooperation and defection in the $k$th round is $-c$. Therefore, for cooperation to emerge the future rewards must compensate by pairing the focal agent with more cooperative agents which can build sustained partnerships. Interestingly, both OFT and ROFT provided the same expected reward across the first two rounds: keeping defective pairs together is as effective as keeping cooperative ones. We will now provide a more general result for when $H>2$.

The conditional reward difference for Always stay and Always switch is zero for all steps $h\geq k+1$. Here, we aim to show that the reward difference for all time steps $h \geq k+1$ is non-negative for the OFT and ROFT mechanisms. 

\begin{proposition}\label{prop: OFT and ROFT have r^h_k geq 0}
For the OFT and ROFT partner selection rules, the reward difference $\Delta r^{k,h}(x) \geq 0$ for all $x \in [0,1]$ and $h \geq k+1$. 
\end{proposition}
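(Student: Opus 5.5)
The plan is to recast both rules as a Markov chain on the \emph{tenure} of the current partnership. Since $\Delta r^{k,h}(x)=b\,\Delta m^{k,h}(x)$ and $b>0$, it suffices to show $m^{k,h}_C(x)\ge m^{k,h}_D(x)$ for all $h\ge k+1$.

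\textbf{Tenure representation for OFT.} Unrolling the recursion in Table \ref{table: iterative form and rewards for PS rules}, the opponent at any step was drawn fresh from $\rho$ at some earlier step. Suppose it has since survived $n$ consecutive stays. Each stay occurs with probability proportional to its cooperation rate $y$ (the factor $x$ from the focal agent does not depend on $y$). So, conditional on a tenure of $n$, the opponent has density $\rho_n(y)=y^n\rho(y)/\mu_n$. Its mean is $g(n)=\mu_{n+1}/\mu_n$.
\begin{itemize}
\item[--] By Cauchy--Schwarz, $\mu_n^2\le\mu_{n-1}\mu_{n+1}$, i.e.\ moments are log-convex. Hence $g$ is nondecreasing in $n$.
\item[--] The tenure $A_h$ is then a Markov chain on $\{0,1,2,\dots\}$. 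From state $n$, the pair stays with probability $p_n=x\,g(n)$ and moves to $n+1$; otherwise the chain resets to $0$.
\item[--] The conditional mean opponent cooperation is $m^h=\mathbb{E}[g(A_h)]$.
\end{itemize}
I will verify that this representation reproduces the recursion $\rho^{h+1}(y|x)=xy\rho^h(y|x)+\rho(y)(1-xm^h(x))$. This is a short computation: mixing $\rho_n$ over the law of $A_h$ recovers $\rho^h$.

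\textbf{Branching at round $k$.} Both branches share the same tenure law at step $k$.
\begin{itemize}
\item[--] Under $D^k$, the pair always breaks, so $A_{k+1}=0$.
\item[--] Under $C^k$, $A_{k+1}$ is either $A_k+1$ or $0$, so $A_{k+1}^{C}\ge A_{k+1}^{D}$ pathwise.
\item[--] From step $k+1$ on, both branches evolve under the same transition kernel (policy $x$).
\end{itemize}

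\textbf{Monotone coupling (the main step).} I drive both chains with a common uniform $U_h$ at each step: state $n$ stays iff $U_h\le p_n$. Since $p_n$ is nondecreasing in $n$, the order $A^C_h\ge A^D_h$ is preserved.
\begin{itemize}
\item[--] If the lower chain stays, so does the higher one, and the order persists.
\item[--] If the lower chain resets, it goes to $0$, which is below anything.
\end{itemize}
Hence $A^C_h\ge A^D_h$ almost surely for all $h\ge k+1$. Monotonicity of $g$ then gives $m^{k,h}_C=\mathbb{E}[g(A^C_h)]\ge\mathbb{E}[g(A^D_h)]=m^{k,h}_D$. The obstacle I expect is establishing the tenure representation rigorously, in particular that the stay probability depends only on tenure and is monotone in it. The log-convexity of moments is exactly what provides this monotonicity.

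\textbf{ROFT.} The argument is symmetric. Staying requires mutual defection, so the tenure-$n$ density is $(1-y)^n\rho(y)/\int(1-z)^n\rho(z)dz$. The stay probability is $(1-x)(1-g(n))$, where $g(n)$ is now the mean of $y$ under this density.
\begin{itemize}
\item[--] Log-convexity of the moments of $1-y$ shows that $1-g(n)$ is nondecreasing, so $g$ is nonincreasing.
\item[--] The chain is again monotone under the same coupling, since the stay probability increases with tenure.
\item[--] Now it is \emph{defecting} at $k$ that can extend tenure, while cooperating resets it. So $A^D_h\ge A^C_h$ pathwise.
\end{itemize}
Since $g$ is nonincreasing, $m^{k,h}_D\le m^{k,h}_C$, giving $\Delta r^{k,h}(x)\ge0$. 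This also explains why OFT and ROFT agree at $h=k+1$: both give $b(\mu_2-\mu_1^2)$ when the prior tenure is $0$.
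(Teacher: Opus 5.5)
Your argument is correct, and it takes a genuinely different route from the paper.

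\textbf{The paper's route.} It proceeds algebraically. It writes $\Delta\rho^{0,h}=x^{h-1}g^h\rho$ with $g^h=\mathcal{T}^{h-1}(y-\mu_1)$ and $\mathcal{T}g=Yg-\mathbb{E}[Yg]$, and shows $\mathcal{T}$ is self-adjoint for $\langle f,g\rangle=\mathbb{E}[Yfg]$. From this it deduces $\mathbb{E}[Yg^{2l}]=\langle g^l,g^l\rangle\ge 0$ and $\mathbb{E}[Yg^{2l+1}]=\text{Var}(Yg^l)\ge 0$. General $k$ is then reduced to $k=0$ through $\Delta m^{k,h}=\sum_{j=0}^{k}\Delta m^{0,h-k+j}$, and ROFT is mapped to OFT through $Z=1-Y$.

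\textbf{What your route does differently.} You replace the parity split and the reduction to $k=0$ by one structural fact: log-convexity of moments makes both $g(n)$ and the stay probability monotone in tenure. After that, the monotone coupling treats every $k$ at once and exposes the mechanism. It also yields more than the sign of the mean. Since $\rho_{n+1}/\rho_n\propto y$ (resp.\ $1-y$), the $\rho_n$ are ordered in likelihood ratio. The pathwise tenure order therefore gives first-order stochastic dominance of the opponent law after $C^k$ over that after $D^k$, for both rules.

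\textbf{What the paper's route buys.} Its computation gives explicit moment formulas, which feed the bound $\Delta r^{k,k+1}\ge b\,\text{Var}(\rho)$ in Corollary~\ref{cor: H=2 is sufficient}. Your picture recovers this too. For OFT, $\Delta m^{k,k+1}=\sum_n\mathbb{P}(A_k=n)\,g(n)\bigl(g(n+1)-g(0)\bigr)\ge g(0)\bigl(g(1)-g(0)\bigr)=\mu_2-\mu_1^2$, and ROFT is symmetric.

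\textbf{The step you flagged as the obstacle.} It needs no posterior argument. Write $q$ for the focal agent's cooperation probability in a given round: $x$ generically, and $1$ or $0$ in the forced round $k$. For OFT one has $q\,g(n)\rho_{n+1}=q\,y\,\rho_n$, and the reset probability is $1-q\,\mathbb{E}[g(A_h)]$. Hence the tenure mixture satisfies the same recursion and initial condition as $\rho^h(\cdot\,|x,C^k)$ and $\rho^h(\cdot\,|x,D^k)$. ROFT is identical with $1-y$ and the defection probability. Be sure to check the forced round as well as the homogeneous one.
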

\begin{proof}
   All proofs are provided in Appendix \ref{appendix: proofs}. 
\end{proof}

\begin{corollary}\label{cor: H=2 is sufficient}
   Under the OFT and ROFT partner selection rules, if $\Delta G[\rho] >0$ for $H=2$ then $\Delta G[\rho] >0$ for all $H\geq 2$.
\end{corollary}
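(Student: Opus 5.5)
The plan is to write $\Delta G$ for general $H$ as a double sum over the round $k$ at which the action is fixed and the later rounds $h$. Proposition \ref{prop: OFT and ROFT have r^h_k geq 0} then controls every term that is new when $H>2$.

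First I would decompose the quantity. By the definition following Equation \eqref{eqn: psi reinforce update step}, $G_a^k$ is the expected reward accumulated from round $k$ to round $H-1$, given that action $a$ is taken at round $k$ and policy $x$ is followed afterwards. Hence
\begin{align*}
\Delta G(x) = \sum_{k=0}^{H-1}\Big( -c + \sum_{h=k+1}^{H-1} \Delta r^{k,h}(x)\Big),
\end{align*}
where the $-c$ term is the reward difference in round $k$ itself, which the paper notes is $-c$ for every rule. For $H=2$ this reads $\Delta G = -2c + \Delta r^{0,1}(x) = -2c + b(\mu_2-\mu_1^2)$, using Table \ref{table: iterative form and rewards for PS rules}.

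Next I would compare $H$ with $H+1$. Because the partner selection dynamics are Markovian, the conditional distributions $\rho^h(y|x,C^k)$ and $\rho^h(y|x,D^k)$ depend only on $h-k$ and not on $H$. So $\Delta r^{k,h} = \Delta r^{0,h-k}$, and the inner sum equals $S_{H-1-k}:=\sum_{j=1}^{H-1-k}\Delta r^{0,j}(x)$. This gives
\begin{align*}
\Delta G_H(x) = \sum_{n=0}^{H-1} \big(-c + S_n\big), \qquad S_0 = 0 .
\end{align*}
Proposition \ref{prop: OFT and ROFT have r^h_k geq 0} says each $\Delta r^{0,j}\ge 0$, so $S_n$ is nondecreasing in $n$. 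In particular $S_n \ge S_1 = b(\mu_2-\mu_1^2)$ for every $n\ge 1$.

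The main obstacle is now visible. The $n=0$ term always contributes $-c$, because the last round in the episode carries no future reward. A naive reading of the hypothesis "$\Delta G>0$ for $H=2$" gives only $-2c + S_1>0$, that is $S_1>2c$. This is enough: $\Delta G_H = -Hc + \sum_{n=1}^{H-1}S_n \ge -Hc + (H-1)S_1 > -Hc + 2(H-1)c = (H-2)c \ge 0$ for $H\ge 2$. So the pairing works: each of the $H-1$ terms with $n\ge1$ gives at least $S_1>2c$, which absorbs its own $-c$ and the $-c$ from the final round, and the surplus covers the rest.

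The step that needs genuine care is the stationarity claim $\Delta r^{k,h}=\Delta r^{0,h-k}$. The subtlety is that the distribution $\rho^k(y|x)$ at round $k$ is not equal to $\rho(y)$, so I would not rely on exact time-homogeneity. Instead I would show $\Delta r^{k,h}\ge \Delta r^{0,1}$ only for $h=k+1$, which is all the bound above needs, together with nonnegativity of the remaining terms from the Proposition. To get the $h=k+1$ bound, I would compute from the recursion in Table \ref{table: iterative form and rewards for PS rules}. For OFT, $\Delta m^{k,k+1} = \int y^2\rho^k(y|x)dy - m^k(x)^2$ plus a nonnegative reweighting. I would then argue that the variance of the round-$k$ opponent distribution is at least $\mu_2-\mu_1^2$, since it is a mixture of $\rho$ with a size-biased version of itself. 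If that inequality fails, the fallback is to prove the claim directly by the same monotone-coupling argument used for the Proposition.
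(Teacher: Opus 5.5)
Your architecture matches the paper's: split $\Delta G=\sum_{k}\sum_{h\ge k}\Delta r^{k,h}$, keep only the $h=k$ and $h=k+1$ terms, drop the rest by Proposition~\ref{prop: OFT and ROFT have r^h_k geq 0}, and use $\Delta r^{k,k+1}\ge b(\mu_2-\mu_1^2)$ for $k\le H-2$. This gives $\Delta G\ge (H-1)b\,\text{Var}(\rho)-Hc>(H-2)c\ge 0$.

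The gap is that per-$k$ bound for $k\ge 1$. It is not implied by the statement of the Proposition. That statement only gives $\Delta r^{k,k+1}\ge 0$, hence $\Delta G\ge b\,\text{Var}(\rho)-Hc$, which is too weak once $H\ge 3$. You are right to drop $\Delta r^{k,h}=\Delta r^{0,h-k}$, which is false, but your replacement fails as well. Under OFT,
\[
\Delta m^{k,k+1}=\int y^2\rho^k(y|x)\,dy-\mu_1 m^k(x)=\text{Var}\big(\rho^k(\cdot|x)\big)+m^k(x)\big(m^k(x)-\mu_1\big),
\]
and you propose to bound the first piece below by $\text{Var}(\rho)$. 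That inequality is false in general. Take $\rho=\tfrac12(\delta_0+\delta_1)$ (or a smooth approximation) and $k=1$. Then $\rho^1(\cdot|x)$ puts mass $\tfrac12+\tfrac x4$ at $1$, so $\text{Var}\big(\rho^1(\cdot|x)\big)=\tfrac14-\tfrac{x^2}{16}<\text{Var}(\rho)$. Size-biasing pushes mass towards $1$ and can shrink the variance. The target inequality still holds in this example, $\Delta m^{1,2}=\tfrac14+\tfrac x8$, but only because the reweighting term over-compensates. So the two pieces cannot be bounded separately, and your fallback ``monotone-coupling argument'' is left unspecified.

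The missing ingredient is the expansion from the proof of the Proposition, $\rho^k(y|x)=\big(1+\sum_{j=1}^k x^j g^j(y)\big)\rho(y)$. It yields
\[
\Delta m^{k,k+1}=\sum_{j=0}^{k}x^j\,\mathbb{E}\big[Yg^{j+1}(Y)\big]=\sum_{j=0}^{k}\Delta m^{0,j+1}\;\ge\;\Delta m^{0,1}=\text{Var}(\rho),
\]
because every $\mathbb{E}[Yg^{j}(Y)]\ge 0$ by Lemma~\ref{lemma: OFT reward is non-negative for t>1}. That lemma rests on the self-adjointness of $\mathcal{T}$ and the identity $\mathbb{E}[Yg^mg^n]=\mathbb{E}[Yg^{m+n}]$. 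ROFT follows by the substitution $Z=1-Y$, as in Lemma~\ref{lemma: ROFT reward is non-negative for t>1}. With this in place, your final chain of inequalities goes through unchanged and reproduces the paper's bound $(H-1)\big(b\,\text{Var}(\rho)-c\big)-c$.
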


The above result allows a much simpler reward structure to be studied, associated with the partner selection mechanisms. In particular, increasing the episodic length can only increase the marginal accumulated rewards for cooperation. Therefore, the results in Section \ref{section: macroscopic model} provide sufficient conditions for the emergence of cooperation for all $H\geq 2$.

\subsection{Evolution of the Mean Dynamics}\label{section: macroscopic model}
In this section, we show how the above reward structure can be embedded within the policy gradient update. We then prove that always stay and always switch converge to the pure defection equilibrium, regardless of the initial distribution. Finally, we show that both OFT and R-OFT necessarily increase cooperation when the initial population has sufficient variance.

Under the random matching and always stay mechanisms, the reward difference simplifies to $-Hc$ for any episode length $H$. As such, when $H=2$ the policy update is given by
\begin{align*}
    \frac{dx}{dt} &= -4\alpha cx^2(1-x)^2
\end{align*}
which generates the mean-field continuity PDE
\begin{align*}\label{eqn: mean pde for stay/switch}
    \partial_t \rho + \partial_ x(-4\alpha x^2(1-x)^2 \rho) &=0.
\end{align*}
\begin{theorem}\label{thm: convergence to pure defection}
    Let $\rho_0 \in L^1(0,1), \rho_0\geq 0$, and $\int_0^1\rho_0 = 1$. Under the Stay and Switch rules, the population converges to pure defection. 
\end{theorem}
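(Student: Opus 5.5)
Since the continuity equation is linear transport by an autonomous one-dimensional velocity field, we will solve it by characteristics rather than work with the PDE directly. Write $v(x) = -4\alpha c\, x^2(1-x)^2$ (taking $c>0$ from the payoff assumptions, so the factor $c$ is implicit in the displayed PDE). Let $\Phi_t$ be the flow of $\dot x = v(x)$ on $[0,1]$. Then $\rho(t) = (\Phi_t)_\#\rho_0$ is the (unique, measure-valued) solution. We interpret ``converges to pure defection'' as weak convergence $\rho(t)\rightharpoonup \delta_0$, or equivalently that the mean cooperation $\int_0^1 x\rho(t,x)\,dx \to 0$. For every test function $\varphi\in C([0,1])$ the pushforward identity gives
\[
\int_0^1 \varphi(x)\rho(t,x)\,dx=\int_0^1 \varphi(\Phi_t(x_0))\rho_0(x_0)\,dx_0 .
\]
So the result reduces to pointwise convergence of characteristics followed by dominated convergence.

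\textbf{Step 1: the flow is well-defined and invariant.} The field $v$ is a polynomial, hence Lipschitz on $[0,1]$. It vanishes at $x=0$ and $x=1$, so both endpoints are equilibria. By uniqueness, trajectories starting in $(0,1)$ stay in $(0,1)$ for all $t\geq 0$.

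\textbf{Step 2: every interior characteristic tends to $0$.} On $(0,1)$ we have $v<0$, so $\Phi_t(x_0)$ is strictly decreasing and bounded below by $0$. Hence it has a limit $L\in[0,x_0)$, and that limit must be an equilibrium of $\dot x=v(x)$. The only equilibria are $0$ and $1$, and $L<x_0<1$, so $L=0$. One can also integrate explicitly by separation of variables, using $\int dx/(x^2(1-x)^2)$ (partial fractions give $-1/x + 1/(1-x) + 2\log\frac{x}{1-x}$). This shows the decay is algebraic, of order $1/(4\alpha c t)$, but the rate is not needed.

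\textbf{Step 3: pass to the density.} This is the step I expect to need the most care, though it is short. The hypothesis $\rho_0\in L^1$ means $\rho_0$ has no atoms, so the set $\{x_0=1\}$ has $\rho_0$-measure zero. Hence $\Phi_t(x_0)\to 0$ for $\rho_0$-a.e.\ $x_0$, including $x_0=0$, which is fixed. Since $\varphi$ is bounded on $[0,1]$, dominated convergence gives
\[
\int_0^1 \varphi\,\rho(t)\to\varphi(0),
\]
that is, $\rho(t)\rightharpoonup\delta_0$. In particular the mean cooperation and every moment $\mu_l(t)$ tend to $0$. The absolute-continuity assumption is exactly what rules out retaining mass at the unstable equilibrium $x=1$; an atom there would persist, so I would remark on this. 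Finally, I would note that the same argument works for any $H\geq 2$, since the drift is then $-2\alpha Hc\,x^2(1-x)^2$, which has the same sign structure.
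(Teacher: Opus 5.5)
Your proposal is correct and follows the paper's proof: characteristics, pushforward $\rho(t)=(X_t)_\#\rho_0$, pointwise convergence of interior characteristics to $0$, then dominated convergence to get $\rho(t)\rightharpoonup\delta_0$. Two small differences from the paper. The paper gets $X_t\to 0$ from the explicit relation $F(X_t)=F(x_0)-4\alpha c t$, whereas your main argument uses the monotone-limit-is-an-equilibrium step and keeps the explicit integration as an aside. You also make explicit the null-set point at $x_0=1$, which the paper uses only implicitly.
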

Consequently, we know that always-switch and always-stay will, under the mean-dynamics, cause mass defection to take over the population. Therefore, we seek to show that partner selection rules such as OFT and R-OFT can induce a more cooperative population.
Under the OFT mechanism, the continuity equation is
\begin{align*}
\partial_t \rho + \partial_ x(u[\rho]( x) \rho) &=0, & u[\rho]( x)=2\alpha\Delta G[\rho] x^2(1- x)^2,
\end{align*}
where for $H=2$, we have
\begin{align*}
    \Delta G[\rho] & = \sum_{h=0}^{H-1} \Delta G^h[\rho]\\
    &= \Delta r^{0,0}(x) + \Delta r^{0,1}(x) + \Delta r^{1,1}(x)\\
    &= -c + b(\mu_2-\mu_1^2) -c \\
    &= b\text{Var}(\rho(y)) -2c.
\end{align*}
By Corollary \ref{cor: H=2 is sufficient}, for $H>2$ this value is a lower bound for the sign of the drift. As a consequence, if cooperation emerges for $H=2$, it provides a sufficient condition for longer episode lengths. Denote $\mathcal{P}([0,1])$ as the set of all probability measures on the domain.

\begin{proposition}\label{prop: existence to continuity equation}
    Let $\rho_0 \in \mathcal{P}([0,1])$, then there exists a unique solution to the continuity equation. Furthermore, the unique solution can be expressed as 
    \begin{align*}
        \rho(t) = (X_{t})_\# \rho_0
    \end{align*}
    where $(X_t)_\# \rho_0$ is the push-forward of $\rho_0$ by the characteristic flow.
\end{proposition}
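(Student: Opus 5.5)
The plan is to treat the continuity equation as a nonlocal (mean-field) transport equation whose velocity field has the product form $u[\rho](x)=2\alpha\,\Delta G[\rho]\,x^2(1-x)^2$, with $\Delta G[\rho]=b(\mu_2-\mu_1^2)-2c$ a scalar depending on $\rho$ only through its first two moments. This separable structure lets us reduce the problem to a finite-dimensional fixed point. Write $g(x)=2\alpha x^2(1-x)^2$ and let $\Phi(s,x)$ denote the flow of the autonomous ODE $\dot x=g(x)$ on $[0,1]$. Because $g$ is smooth and vanishes at $0$ and $1$, $\Phi$ is globally defined, maps $[0,1]$ into itself, and is a smooth group in $s\in\mathbb{R}$. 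For any continuous scalar path $\lambda(t)$, the characteristics of $\dot X=\lambda(t)g(X)$ are $X_t(x)=\Phi(\Lambda(t),x)$ with $\Lambda(t)=\int_0^t\lambda$. The solution of the linear transport equation with this velocity is then $(X_t)_\#\rho_0$.

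The nonlinear problem therefore becomes a fixed-point problem for $\Lambda$:
\begin{align*}
\dot\Lambda(t)=F(\Lambda(t)),\qquad F(s):=b\Big(\int\Phi(s,x)^2\,d\rho_0-\Big(\int\Phi(s,x)\,d\rho_0\Big)^2\Big)-2c,\qquad \Lambda(0)=0.
\end{align*}
The steps are as follows.
\begin{itemize}
\item[(i)] Show that $F$ is globally Lipschitz and bounded. Since $\Phi(s,\cdot)\in[0,1]$ and $|\partial_s\Phi|=|g(\Phi)|\le \alpha/8$, differentiating under the integral gives $|F'(s)|\le C(b,\alpha)$ uniformly in $s$. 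Also $|F|\le b/4+2c$.
\item[(ii)] Apply Picard--Lindel\"of to get a unique global $C^1$ solution $\Lambda$.
\item[(iii)] Set $X_t=\Phi(\Lambda(t),\cdot)$ and $\rho(t)=(X_t)_\#\rho_0$. Verify that $\mu_1,\mu_2$ of $\rho(t)$ equal the moments appearing in $F(\Lambda(t))$, so that $\Delta G[\rho(t)]=\dot\Lambda(t)$. Then check the weak formulation by testing against $\varphi\in C^1_c$: $\frac{d}{dt}\int\varphi(X_t)\,d\rho_0=\int\varphi'(X_t)\,u[\rho(t)](X_t)\,d\rho_0$. This gives existence in $C([0,\infty);\mathcal P([0,1]))$ with the narrow topology.
\end{itemize}

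For uniqueness, take any weak (narrowly continuous) solution $\tilde\rho$. The map $t\mapsto \tilde\lambda(t):=\Delta G[\tilde\rho(t)]$ is continuous, since moments are continuous under narrow convergence on the compact set $[0,1]$. Hence $\tilde\rho$ solves the \emph{linear} continuity equation with the Lipschitz velocity $\tilde\lambda(t)g(x)$. Uniqueness for linear continuity equations with Lipschitz fields (via the superposition principle or a duality argument, e.g.\ Ambrosio--Gigli--Savar\'e) then forces $\tilde\rho(t)=(\tilde X_t)_\#\rho_0$ with $\tilde X_t=\Phi(\tilde\Lambda(t),\cdot)$. It follows that $\tilde\Lambda$ solves the same ODE $\dot{\tilde\Lambda}=F(\tilde\Lambda)$, so $\tilde\Lambda=\Lambda$ by (ii), and therefore $\tilde\rho=\rho$.

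The main obstacle is making this uniqueness step rigorous, specifically invoking uniqueness for the linear equation at the right level of generality: measure-valued data, time-dependent coefficient only continuous in $t$, and a field tangent to the boundary so no boundary conditions are needed. I would cite the standard result for bounded Lipschitz-in-space fields on $\mathbb{R}$, after extending $g$ by zero outside $[0,1]$. As an alternative, a direct Wasserstein stability estimate $W_1(\rho(t),\tilde\rho(t))\le e^{Ct}W_1(\rho_0,\tilde\rho_0)$ would work, using that $\Delta G$ is Lipschitz in $W_1$ on $\mathcal P([0,1])$ because $y$ and $y^2$ are Lipschitz there.
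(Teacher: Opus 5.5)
Your proof is correct, but it takes a different route from the paper. The paper's proof is a one-line appeal to a general well-posedness theorem for nonlocal continuity equations (Theorem 2 of \cite{bonnet_pontryagin_2019}). It first checks, in two auxiliary lemmas, that $u[\rho]$ is Lipschitz and sublinear in $x$ and Lipschitz in $\rho$ with respect to $W_1$. Existence, uniqueness and the push-forward representation are then all inherited from that external result.

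You instead use the product structure $u[\rho](x)=\Delta G[\rho]\,g(x)$ to reduce the nonlinearity to the single scalar $\Lambda(t)=\int_0^t\Delta G[\rho(s)]\,ds$, which is exactly the paper's $K(t)$. After that you need only Picard--Lindel\"of and uniqueness for the \emph{linear} continuity equation with a Lipschitz field.

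The paper's route is more general. It does not need $\Delta G$ to be independent of $x$, and that independence fails for $H>2$, since $\Delta m^{k,h}$ carries powers of $x$. Your argument is tied to the $H=2$ velocity field.

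Your route is more elementary and more transparent:
\begin{itemize}
\item It handles the boundary explicitly, via the extension of $g$ by zero and tangency at $0$ and $1$. The paper's citation of a result posed on $\mathbb{R}^d$ passes over this point.
\item It yields at once the representation $\rho(t)=(X_{K(t)})_\#\rho_0$ and the globally Lipschitz scalar ODE $K'=h(K)$. The paper derives that ODE only afterwards, in the proof of Theorem~\ref{thm: convergence to cooperation}, and your step (i) is essentially the Lipschitz estimate on $h$ proved there.
\end{itemize}
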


Define the time integral of the non-local velocity induced by the episodic reward,
\begin{align*}
    K(t) = \int_0^t \Delta G[\rho(s)]\; ds
\end{align*}
Then the continuity PDE depends on time through $K$, and hence satisfies
\begin{equation}
    \rho(t,\cdot) = (X_{K(t)})_\#\rho_0.
\end{equation}
See Appendix \ref{appendix: proofs} for details. This enables the flow of the mass to be uniquely determined by the behaviour of $K$.
\begin{theorem}\label{thm: convergence to cooperation}
    Let $\rho_0 \in L^1(0,1), \rho_0\geq 0$, and $\int_0^1\rho_0 = 1$. If $\Delta G[\rho_0]>0$, then $K(t) \nearrow K^*$ where $K^* \in \mathbb{R}$ and $\rho(t) \rightharpoonup (X_{K^*})_\#\rho_0$. That is, if the initial population variance is high enough, the limiting distribution is shifted towards higher cooperation.
\end{theorem}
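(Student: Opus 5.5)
We reduce the nonlocal dynamics to a scalar ODE for $K$ by using the representation $\rho(t)=(X_{K(t)})_\#\rho_0$ from Proposition \ref{prop: existence to continuity equation}. Here $X_s$ is the flow of the autonomous field $v(x)=2\alpha x^2(1-x)^2$, run for \emph{reparametrised time} $s$. Since $\Delta G[\rho]=b\,\mathrm{Var}(\rho)-2c$, define
\begin{align*}
g(s) := b\,\mathrm{Var}\big((X_s)_\#\rho_0\big)-2c,
\end{align*}
so that $K'(t)=g(K(t))$ with $K(0)=0$ and $g(0)=\Delta G[\rho_0]>0$. The whole theorem then becomes a statement about a one-dimensional autonomous ODE. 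We need three facts: $g$ is continuous (indeed locally Lipschitz), $K$ is increasing, and $K$ stays bounded.

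\textbf{Step 1 (regularity of $g$).} The map $s\mapsto X_s(x)$ is smooth. Because $v$ vanishes at $0$ and $1$, the flow satisfies $X_s(x)\in[0,1]$ for all $s\in\mathbb{R}$. The moments $\mu_l(s)=\int X_s(x)^l\rho_0(dx)$ are therefore differentiable in $s$ by dominated convergence, with bounded derivatives since $|v|\le 2\alpha/16$. Hence $g$ is Lipschitz, and the ODE has a unique global solution, which coincides with the $K$ of the statement.

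\textbf{Step 2 (monotone and bounded).} Because $\rho_0\in L^1$ has no atoms, every point $x\in(0,1)$ satisfies $X_s(x)\to1$ as $s\to\infty$. By dominated convergence, $\mathrm{Var}((X_s)_\#\rho_0)\to0$, so $g(s)\to-2c<0$. Continuity then gives a first zero $s^*:=\inf\{s>0:g(s)=0\}<\infty$, with $g>0$ on $[0,s^*)$.

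The standard scalar ODE comparison now applies. $K$ cannot cross the equilibrium $s^*$, because uniqueness holds at the constant solution $K\equiv s^*$. Since $g(K)>0$ for $K\in[0,s^*)$, $K$ is strictly increasing and bounded by $s^*$. So $K(t)\nearrow K^*\le s^*$. Moreover $g(K^*)=0$: otherwise $K'\ge g(K^*)/2>0$ eventually, contradicting boundedness. Thus $K^*=s^*$, finite.

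\textbf{Step 3 (weak convergence).} For $\varphi\in C([0,1])$ we have
\begin{align*}
\int\varphi\,d\rho(t)=\int\varphi(X_{K(t)}(x))\,\rho_0(dx).
\end{align*}
By continuity of $s\mapsto X_s(x)$ and dominated convergence, this tends to $\int\varphi(X_{K^*}(x))\,\rho_0(dx)$, which gives $\rho(t)\rightharpoonup(X_{K^*})_\#\rho_0$. Since $K^*>0$ and $v\ge0$, we have $X_{K^*}(x)\ge x$, so the limit is shifted towards cooperation, strictly so on $(0,1)$.

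\textbf{Main obstacle.} The main obstacle is the time-reparametrisation claim $\rho(t)=(X_{K(t)})_\#\rho_0$ and its use for a nonlocal velocity. The paper asserts it; I would verify it by noting that the characteristic ODE $\dot x=\Delta G[\rho(t)]\,v(x)$ is the autonomous ODE $\dot x=v(x)$ run in time $K(t)$. The second delicate point is the limit $\mathrm{Var}\to0$, which needs $\rho_0$ to put no mass at $0$. This holds because $\rho_0\in L^1$; it is exactly where absolute continuity is used.
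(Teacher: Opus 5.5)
Your proposal is correct and follows the paper's own argument. Both reduce to the autonomous scalar ODE $K'=h(K)$ with $h(K)=b\,\mathrm{Var}((X_K)_\#\rho_0)-2c$, get Lipschitz continuity of $h$ from $|\partial_K X_K|\le\alpha/8$, use $\mathrm{Var}\to0$ and the intermediate value theorem to find the first zero $K^*$, and conclude monotone convergence. Your version is slightly more complete than the paper's: you use uniqueness to rule out $K$ crossing $K^*$, show the limit is exactly that first zero, carry out the weak-convergence step by dominated convergence, and note that absolute continuity of $\rho_0$ is what excludes mass at $x=0$ when taking $\mathrm{Var}\to0$.
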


We highlight that a finite $K^*$ is attained, which means $X_{K^*}(x) < 1$. This enforces that the population does not converge to a single strategy, and hence does not converge to pure cooperation. The positivity of the reward difference $\Delta G[\rho_0]$ constrains the payoff; the variance on the domain $[0,1]$ is bounded above by $\frac{1}{4}$, meaning $b>4c$ is a necessary condition when $H=2$. This constraint relaxes as $H$ increases, as the long term benefits of OFT and ROFT increase the reward difference.

\section{Stochastic Dynamics}
To capture the stochasticity induced by action exploration and the opponent distribution, we model the episodic REINFORCE update as a random increment in the parameter space, $\bm{\psi}$. To do this we follow the approach in \cite{kifer_random_1988, leung_stochastic_evolution_2024}, characterising the random increment by the first two moments. In particular, we approximate the discrete stochastic updates by a continuous-time diffusion process:
\begin{align}\label{eqn: parameter SDE}
    d\bm{\psi} = \bm{\mu} dt + \sqrt{\Sigma}\;d\bm{W}_t,
\end{align}
where $\bm{\mu}$ is the expected update, $\bm{W}_t$ is a two-dimensional standard Wiener process, and $\Sigma:=\Sigma(\bm{\psi},t)$ denotes the covariance matrix. In 2-action repeated games, we can exploit the logit difference to simplify the stochastic dynamics. Applying Ito's lemma \cite{ito_stochastic_1944}, we derive the policy evolution as
\begin{align}\label{eqn: probability of cooperating stochastic update}
    dx = [2\alpha x^2(1-x)^2\Delta G[\rho]  + 2x(1-x)(1-2x)\Sigma_{CC}]\;dt + 2x(1-x)\sqrt{\Sigma_{CC}}\;dW_t 
\end{align}
where $\Sigma_{CC}$ denotes the variance of the parameter update. Denote the $h$th update step of the REINFORCE algorithm as $U^h := (R_{a_h}^h -\beta)(1\{a_h = C\}-x)$. The covariance is 
\begin{align*}
     \Sigma_{CC} &:= \alpha^2\sum_{h=0}^{H-1} \Big(x(1-x)^2S^h_C + x^2(1-x)S^h_D - x^2(1-x)^2(\Delta G^h[\rho])^2\Big) + 2\alpha^2\sum_{h<l} \text{Cov}(U^h,U^l),
\end{align*}
where $S^h_C  = \mathbb{E}[(R^h-\beta)^2|a_h=C]$ and $S^h_D  = \mathbb{E}[(R^h-\beta)^2|a_h=D]$ are the second moments of the reward conditioned on actions $C$ and $D$ at round $h$. For any action $a$, this is given by
\begin{align*}
    S^h_a &= 
    \sum_{l=h}^{H-1}\text{Var}(r_l|a_h=a) + 2 \sum_{h\leq l<k}\text{Cov}(r_l,r_k|a_h=a) + (G^h_a-\beta)^2.
\end{align*}
Details of the variance and covariance terms are provided in Appendix \ref{appendix: derivations}.

\subsection{Population Evolution}
By propagating the stochastic updates across the population, we can study the strategy evolution at a population scale. The Fokker-Planck equation (FPE) describes the evolution of the probability density function as a diffusion process \cite{risken_fokker-planck_1996}. For the strategy space $x$, the FPE corresponding to \eqref{eqn: probability of cooperating stochastic update} is
\begin{align}\label{eqn: fokker plank}
    \partial_t\rho(t,x) = -\partial_x[A_\rho(t,x)\rho(t,x)] +\frac{1}{2}\partial_{xx}[B^2_\rho(t,x)\rho(t,x)]
\end{align}
where
\begin{align*}
    A_\rho(t,x) &= 2\alpha x^2(1-x)^2 \Delta G[\rho] + 2x(1-x)(1-2x)\Sigma_{CC},\quad B^2_\rho(t,x) = 4x^2(1-x)^2\Sigma_{CC}.
\end{align*}
Given an initial strategy distribution $\rho(0,x)=\rho_0(x)$, the time evolution can be solved numerically. 

These numerical solutions will enable us to analyse the long-term dynamics; for the short-term we can provide an equivalent sufficient condition to the mean dynamics for an increasing level of cooperation. 

In particular, the next result shows that given a sufficiently low learning rate, the cooperation levels in the population will initially increase for sufficiently high variance in the initial population.
\begin{proposition}\label{prop: increase cooperation for some finite time}
    Suppose $\rho_0$ has non-zero interior mass and $\Delta G[\rho_0] >0$. Then there exists a $T\in \mathbb{R}^+$  and $\alpha^*$ such that for all $\alpha < \alpha^*$, the mean level of cooperation is increasing on $[0, T]$.
\end{proposition}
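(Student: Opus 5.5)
We track the mean cooperation level $M(t) := \int_0^1 x\,\rho(t,x)\,dx$ and differentiate it along the Fokker--Planck equation \eqref{eqn: fokker plank}. Testing the equation against $\varphi(x)=x$ and integrating by parts, the diffusion term drops out because $\varphi''=0$. The boundary terms also vanish, since $A_\rho$ and $B^2_\rho$ carry the factors $x(1-x)$ and $x^2(1-x)^2$. We are left with
\begin{align*}
    \frac{dM}{dt} = \int_0^1 A_\rho(t,x)\rho(t,x)\,dx = 2\alpha\,\Delta G[\rho(t)]\int_0^1 x^2(1-x)^2\rho\,dx + 2\int_0^1 x(1-x)(1-2x)\Sigma_{CC}\,\rho\,dx .
\end{align*}
The plan is to show that at $t=0$ the first term is strictly positive and of order $\alpha$, that the second term is $O(\alpha^2)$, and then to use continuity in time to extend positivity to some interval $[0,T]$.

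\emph{Step 1 (sign at $t=0$).} By hypothesis $\Delta G[\rho_0]>0$. Since $\rho_0$ has non-zero interior mass, $I_0:=\int_0^1 x^2(1-x)^2\rho_0\,dx>0$. Hence the drift term equals $2\alpha\,\Delta G[\rho_0] I_0 = c_1\alpha$ with $c_1>0$ independent of $\alpha$.

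\emph{Step 2 (noise-induced drift is higher order).} From its definition, $\Sigma_{CC}=\alpha^2\widetilde\Sigma(x,\rho)$. Here $\widetilde\Sigma$ is built from the conditional second moments $S^h_a$, the quantities $\Delta G^h$ and the covariances of $U^h$. All of these are polynomials in the rewards, which are bounded by $H(b+c)$, and in moments of $\rho$ on $[0,1]$. So there is a constant $C_2$, depending only on $b,c,H$, with $|\widetilde\Sigma|\le C_2$ uniformly in $x\in[0,1]$ and in $\rho\in\mathcal P([0,1])$. Consequently
\begin{align*}
    \Big|2\int_0^1 x(1-x)(1-2x)\Sigma_{CC}\rho\,dx\Big| \le \tfrac{1}{2}C_2\alpha^2 ,
\end{align*}
using the bound $|x(1-x)(1-2x)|\le 1/4$ on $[0,1]$. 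At $t=0$ this gives $\frac{dM}{dt}(0)\ge c_1\alpha - \tfrac12 C_2\alpha^2>0$ whenever $\alpha<\alpha^*:=c_1/C_2$. This is where the small-learning-rate hypothesis enters, and we may shrink $\alpha^*$ to leave a margin, e.g.\ $\frac{dM}{dt}(0)\ge c_1\alpha/2$.

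\emph{Step 3 (persistence on $[0,T]$).} This is the main obstacle: we need $t\mapsto \Delta G[\rho(t)]$ and $t\mapsto\int x^2(1-x)^2\rho(t)\,dx$ to be continuous. It suffices to show that $\rho(t)\rightharpoonup\rho_0$ narrowly as $t\to0$, uniformly enough. Here $\Delta G[\rho]=b\,\mathrm{Var}(\rho)-2c$ for $H=2$, and in general it is a polynomial in finitely many moments of $\rho$, so it is continuous under narrow convergence on the compact set $[0,1]$. To get the narrow continuity we would test the Fokker--Planck equation against smooth $\varphi$. This gives $\left|\frac{d}{dt}\int\varphi\rho\right|\le \|A_\rho\|_\infty\|\varphi'\|_\infty+\tfrac12\|B_\rho^2\|_\infty\|\varphi''\|_\infty$, and the right-hand side is bounded uniformly because $A_\rho$ and $B^2_\rho$ are bounded coefficients. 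Hence the moments of $\rho(t)$ are Lipschitz in $t$, with constants independent of $t$. We are assuming that a weak solution exists, as is implicit in the statement. It then follows that $\frac{dM}{dt}(t)\ge c_1\alpha/2-Lt$ for some $L$, so positivity holds for $t\le T:=c_1\alpha/(2L)$. If one wants $T$ independent of $\alpha$, note that every term in $\frac{dM}{dt}$ and in $L$ scales like $\alpha$ or $\alpha^2$. After rescaling time by $\alpha$, the same argument gives $T$ uniform for $\alpha<\alpha^*$. Either version yields the proposition.
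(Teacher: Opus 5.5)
Your proposal is correct and follows the paper's proof essentially step for step. You test the Fokker--Planck equation with $f(x)=x$, isolate the order-$\alpha$ term $2\alpha\int_0^1 x^2(1-x)^2\Delta G[\rho]\rho\,dx$, bound the It\^o correction by a constant times $\alpha^2$ using a uniform bound on $\Sigma_{CC}/\alpha^2$, choose $\alpha^*$ from the ratio of the two, and extend to $[0,T]$ by continuity in time. Your Step 3 is more careful than the paper's proof, which simply asserts that $I(t)$ is continuous and leaves $T$ implicitly dependent on $\alpha$ through $\rho$; your Lipschitz-in-time bound on moments and the $\alpha$-scaling argument justify both points and give the quantifier order the statement asserts.
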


It does not, however, guarantee that a stable cooperative equilibrium will be reached. When $H=2$, the population variance remaining sufficiently high would prove a sufficient condition, but the evolution of the variance depends upon higher moments. To analyse the equilibrium which is attained in the stochastic case, we prove the limiting distribution is well-posed and turn to numerical solutions to analyse the behaviour.

\subsection{Stationary Distribution}
The stationary distribution occurs at the point such that the time derivative in the PDE is zero. Applying no-flux boundary conditions, this condition then reduces to finding a solution to the ODE given by
\begin{align}\label{eqn: steady state ODE}
    \partial_x [B_\rho^2(x)\rho(x)] &= 2A_\rho(x)\rho(x).
\end{align}
A major obstacle in proving such an equation is well-defined is the behaviour at the boundary. At $x=0$ and $x=1$, the noise term $B^2(x)=0$ which can prevent using standard techniques. We first consider an $\varepsilon-$regularised version, then extend the existence as $\varepsilon\rightarrow0$. Fix $\eta \in L^\infty([0,1])$ as a trial density such that $A_\eta(x)$ and $B^2_\eta(x)$ are a function of $x$ and $\eta$ only. The regularised versions are $A_\eta^\varepsilon(x) = A_\eta(x)$ and $B^{2,\varepsilon}_\eta(x) = B^2_\eta(x)+\varepsilon$. Define the set 
\begin{align*}
    S^\varepsilon =\{\eta \in C([0,1]): \eta \geq 0, \int_0^1\eta(x) dx = 1, \|\eta\|_\infty \leq M^\varepsilon\}
\end{align*}
For a given $\eta \in S$, the unique solution $\mathcal{F}^\varepsilon[\eta](x)$ to the linear ODE is given by
\begin{align*}
   \mathcal{F}^\varepsilon[\eta](x) &= \frac{w^\varepsilon_\eta(x)}{\int_0^1 w^\varepsilon_\eta(x)dx},\qquad w^\varepsilon_\eta(x) = \frac{1}{B^{2,\varepsilon}_\eta(x)}\exp(\int_0^x\frac{2A^\varepsilon_\eta(y)}{B_\eta^{2,\varepsilon}(y)}dy)
\end{align*}
By updating the value of $\eta$ using the operator $\mathcal{F}$, we aim to converge to a fixed point of the system, which will correspond to a stationary distribution of the PDE in \eqref{eqn: fokker plank}. The required regularity of the operator and space is proven in Appendix \ref{appendix: proofs}.
\begin{proposition}\label{prop: existence to regularised problem}
    Fix $\varepsilon>0$, then there exists at least one fixed point of the mapping $\mathcal{F}^\varepsilon[\rho] = \rho$. That is, there is at least one solution to the regularised steady-state equations.
\end{proposition}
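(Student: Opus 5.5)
We will apply the Schauder fixed-point theorem to $\mathcal{F}^\varepsilon$ on $S^\varepsilon$, viewed as a subset of the Banach space $C([0,1])$ with the sup norm. The set $S^\varepsilon$ is nonempty (it contains $\eta\equiv 1$ as soon as $M^\varepsilon\geq 1$), convex, and closed. Convexity holds because the conditions $\eta\geq 0$, unit mass and $\|\eta\|_\infty\leq M^\varepsilon$ are preserved under convex combinations. Closedness holds because uniform convergence preserves all three conditions. The plan therefore has three steps: show that $\mathcal{F}^\varepsilon$ maps $S^\varepsilon$ into itself for a suitable choice of $M^\varepsilon$; show that $\mathcal{F}^\varepsilon(S^\varepsilon)$ is relatively compact; and show that $\mathcal{F}^\varepsilon$ is continuous.

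\textbf{Uniform coefficient bounds.} The basic ingredient is a bound on $A_\eta$ and $B^2_\eta$ that is uniform over $\eta\in S^\varepsilon$. The quantities $\Delta G[\eta]$ and $\Sigma_{CC}$ are finite sums of products of moments $\mu_l=\int y^l\eta$ of the trial density, together with rewards bounded by $H(b+c)$. Since every moment lies in $[0,1]$, there is a constant $C_0$, depending only on $b,c,H,\alpha$, such that $|\Delta G[\eta]|\leq C_0$ and $0\leq\Sigma_{CC}\leq C_0$. From this we get $|A_\eta(x)|\leq C_1$ and $\varepsilon\leq B^{2,\varepsilon}_\eta(x)\leq C_1+\varepsilon$. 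It follows that
\begin{align*}
\frac{1}{C_1+\varepsilon}e^{-2C_1/\varepsilon}\leq w^\varepsilon_\eta(x)\leq \frac{1}{\varepsilon}e^{2C_1/\varepsilon},
\end{align*}
and hence $\|\mathcal{F}^\varepsilon[\eta]\|_\infty\leq \frac{C_1+\varepsilon}{\varepsilon^2}e^{4C_1/\varepsilon}$. We choose $M^\varepsilon$ to be this constant, which is at least $1$. The image $\mathcal{F}^\varepsilon[\eta]$ is continuous, nonnegative and normalised by construction, so $\mathcal{F}^\varepsilon(S^\varepsilon)\subset S^\varepsilon$.

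\textbf{Compactness and continuity.} Differentiating $w^\varepsilon_\eta$ gives
\begin{align*}
(w_\eta^\varepsilon)'=w_\eta^\varepsilon\big(2A_\eta-\partial_x B^2_\eta\big)/B^{2,\varepsilon}_\eta .
\end{align*}
Since $A_\eta$ and $B^2_\eta$ are polynomials in $x$ whose coefficients are bounded uniformly in $\eta$, this derivative is uniformly bounded. The image is therefore uniformly bounded and equi-Lipschitz, and Arzelà--Ascoli gives relative compactness.

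For continuity, suppose $\eta_n\to\eta$ uniformly. Then every moment converges, so $\Delta G[\eta_n]\to\Delta G[\eta]$ and $\Sigma_{CC}[\eta_n]\to\Sigma_{CC}[\eta]$, because both are polynomial in the moments. Consequently $A_{\eta_n}\to A_\eta$ and $B^{2,\varepsilon}_{\eta_n}\to B^{2,\varepsilon}_\eta$ uniformly. Because the denominators stay above $\varepsilon$, the exponent integrand converges uniformly, so $w^\varepsilon_{\eta_n}\to w^\varepsilon_\eta$ uniformly. The normalising integrals converge and are bounded below, which gives $\mathcal{F}^\varepsilon[\eta_n]\to\mathcal{F}^\varepsilon[\eta]$ uniformly. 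Schauder's theorem then yields a fixed point.

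\textbf{Main obstacle.} The hardest step is verifying rigorously that $\Delta G$ and $\Sigma_{CC}$, including the covariance terms $\mathrm{Cov}(U^h,U^l)$ and $S^h_a$, are bounded and continuous functionals of $\eta$. These quantities are built from the conditional opponent recursion in Table~\ref{table: iterative form and rewards for PS rules}. My approach would be to show by induction on $h$ that $m^{k,h}_C(x)$, $m^{k,h}_D(x)$ and the related conditional moments are polynomials in $x$ and in finitely many moments $\mu_l$. For example, under OFT the recursion multiplies $\rho^h(y|x)$ by $xy$ and adds $\rho(y)$ times an expression in earlier means, so each step only produces higher moments of $\rho$. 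Continuity under uniform convergence, and boundedness by constants depending only on $H$, then follow immediately.
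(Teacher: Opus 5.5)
Your proposal is correct and follows the paper's proof essentially step for step: Schauder's theorem on $S^\varepsilon\subset C([0,1])$, invariance from the two-sided bounds on $w^\varepsilon_\eta$, relative compactness via a uniform Lipschitz bound and Arzel\`a--Ascoli, and continuity of $\mathcal{F}^\varepsilon$. The only difference is in the continuity step: the paper proves a Lipschitz estimate in $W_1$, while you argue sequential continuity under uniform convergence through the moments, which is equally sufficient.

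There is one harmless slip. Dividing your bounds gives $\|\mathcal{F}^\varepsilon[\eta]\|_\infty\leq \frac{C_1+\varepsilon}{\varepsilon}e^{4C_1/\varepsilon}$, with $\varepsilon$ rather than $\varepsilon^2$ in the denominator. This corrected constant is always at least $1$, whereas yours can fall below $1$ when $\varepsilon>1$, which would leave $S^\varepsilon$ empty.
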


\begin{remark}
    One can consider a PDE with additional noise not captured by the first two moments as an independent stochastic fluctuation of size $\sigma$. The corresponding stationary distribution would be equivalent to the regularisation above, and therefore existence would follow. Moreover, uniqueness would follow for sufficiently high $\sigma$ by analysing the Lipschitz constant of the operator, $\mathcal{F}$. 
\end{remark}
We can now extend this to find an unregularised solution by sending the parameter $\varepsilon \rightarrow 0$. 
\begin{theorem}\label{thm: existence to un-regularised problem}
    There exists at least one stationary probability measure $\rho \in \mathcal{P}([0,1])$ which solves \eqref{eqn: steady state ODE} in the weak sense.
\end{theorem}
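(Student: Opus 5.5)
We take a sequence $\varepsilon_n \to 0$ and, by Proposition \ref{prop: existence to regularised problem}, a fixed point $\rho_n = \mathcal{F}^{\varepsilon_n}[\rho_n]$ for each $n$. Each $\rho_n$ is a probability density on $[0,1]$. The set $\mathcal{P}([0,1])$ is weakly-* compact by Prokhorov's theorem (or Banach--Alaoglu on $C([0,1])^*$), so a subsequence satisfies $\rho_n \rightharpoonup \rho \in \mathcal{P}([0,1])$. The goal is to pass to the limit in the weak form of \eqref{eqn: steady state ODE}. Multiplying the regularised equation by a test function $\varphi \in C^1([0,1])$ and integrating by parts gives
\begin{align*}
    \int_0^1 \Big(\tfrac{1}{2}\varphi'(x)\big(B^2_{\rho_n}(x)+\varepsilon_n\big) + \varphi(x) A_{\rho_n}(x)\Big)\rho_n(x)\,dx = 0 .
\end{align*}
This is the stationary weak formulation of \eqref{eqn: fokker plank} with no-flux conditions. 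The boundary terms vanish because $\rho_n$ has zero flux by construction: $w^\varepsilon_\eta$ solves the first-order ODE exactly.

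The key step is to show that the coefficients converge uniformly on $[0,1]$ once the measures converge weakly. The coefficients $A_\eta$ and $B^2_\eta$ depend on $\eta$ only through finitely many scalar functionals:
\begin{itemize}
    \item $\Delta G[\eta]$, which for $H=2$ equals $b\,\mathrm{Var}(\eta)-2c$, and for general $H$ is polynomial in finitely many moments $\mu_l$ via the recursion in Table \ref{table: iterative form and rewards for PS rules};
    \item $\Sigma_{CC}$, built from conditional second moments $S^h_a$ and covariances, which are likewise polynomial in $x$ and the moments of $\eta$.
\end{itemize}
Since $y\mapsto y^l$ is continuous on $[0,1]$, weak convergence $\rho_n\rightharpoonup\rho$ gives $\mu_l[\rho_n]\to\mu_l[\rho]$ for every $l$. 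It follows that $A_{\rho_n}\to A_\rho$ and $B^2_{\rho_n}\to B^2_\rho$ uniformly in $x$, because the $x$-dependence is polynomial with coefficients that are continuous in the moments. I would state this as a lemma, checking that every term listed in Appendix \ref{appendix: derivations} is continuous under weak convergence.

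With uniform convergence in hand, the limit follows. Each integrand $\tfrac12\varphi'(B^2_{\rho_n}+\varepsilon_n)+\varphi A_{\rho_n}$ converges uniformly to the continuous function $\tfrac12\varphi' B^2_\rho+\varphi A_\rho$. Uniform convergence of the integrands combined with weak convergence of the measures gives convergence of the integrals. The $\varepsilon_n$ term vanishes because $\int \varphi'\rho_n \le \|\varphi'\|_\infty$. Hence $\rho$ satisfies $\int(\tfrac12\varphi' B^2_\rho+\varphi A_\rho)\,d\rho=0$ for all $\varphi\in C^1$, which is the weak form of \eqref{eqn: steady state ODE}.

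The main obstacle is the boundary. Near $x=0$ and $x=1$ the weights $w^{\varepsilon}$ can concentrate, because $B^2$ degenerates and the bound $M^\varepsilon$ blows up as $\varepsilon\to0$. The limit $\rho$ may therefore contain Dirac masses at $0$ or $1$, so the result cannot be stated as a density; this is why the theorem is phrased for a measure solving the equation weakly. Since $A_\rho$ and $B^2_\rho$ both vanish at the endpoints, such atoms are consistent with the weak equation, and no tightness issue arises on the compact interval. The only real care needed is verifying that the test-function formulation is the correct notion of weak solution including the no-flux condition, and that the continuity lemma holds for $\Sigma_{CC}$, whose covariance terms are the most intricate.
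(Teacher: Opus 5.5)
Your overall route is the paper's: fixed points $\rho_n=\mathcal{F}^{\varepsilon_n}[\rho_n]$, weak compactness of $\mathcal{P}([0,1])$, uniform convergence of $A_{\rho_n},B^2_{\rho_n}$ through finitely many moments, then passage to the limit. The gap is that the regularised identity you start from is false for general $\varphi\in C^1([0,1])$.

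By construction $(B^2_{\rho_n}+\varepsilon_n)\rho_n=e^{\Psi_n}/Z_n$, where $\Psi_n(x)=\int_0^x 2A_{\rho_n}/(B^2_{\rho_n}+\varepsilon_n)\,dy$ and $Z_n=\int_0^1 w^{\varepsilon_n}_{\rho_n}\,dx$. Integrating $\varphi\,\partial_x[(B^2_{\rho_n}+\varepsilon_n)\rho_n]=2\varphi A_{\rho_n}\rho_n$ by parts therefore gives
\begin{align*}
\int_0^1\Big(\tfrac12\varphi'\big(B^2_{\rho_n}+\varepsilon_n\big)+\varphi A_{\rho_n}\Big)\rho_n\,dx=\tfrac12\Big[\varphi\big(B^2_{\rho_n}+\varepsilon_n\big)\rho_n\Big]_0^1=\frac{\varphi(1)e^{\Psi_n(1)}-\varphi(0)}{2Z_n}.
\end{align*}
This is not zero in general. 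For instance, with $\varphi\equiv1$ your identity would force $\int_0^1A_{\rho_n}\rho_n\,dx=0$, whereas it actually equals $(e^{\Psi_n(1)}-1)/(2Z_n)$.

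Zero flux means $A_{\rho_n}\rho_n-\tfrac12\partial_x[(B^2_{\rho_n}+\varepsilon_n)\rho_n]\equiv0$. It does not make the boundary values of $(B^2_{\rho_n}+\varepsilon_n)\rho_n$ vanish; at the regularised level these are strictly positive (for example $1/Z_n$ at $x=0$). So the step ``the boundary terms vanish because $\rho_n$ has zero flux'' conflates the flux with $B^{2,\varepsilon}\rho_n$. As a result, the limit identity you claim for all $\varphi\in C^1$ is not justified.

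The simplest repair is to restrict to $\varphi\in C^1([0,1])$ with $\varphi(0)=\varphi(1)=0$. Writing $\varphi=\phi'$, this is exactly the paper's test class $\phi\in C^2([0,1])$ with $\phi'(0)=\phi'(1)=0$, and your identity becomes its second-order weak form $\int[A\phi'+\tfrac12(B^2+\varepsilon)\phi'']\,d\mu^\varepsilon=0$. With that change, your compactness and coefficient-continuity argument goes through verbatim and coincides with the paper's proof.

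If you instead want the stronger statement for all $\varphi\in C^1$ (which also asserts that $B^2\rho$ has no boundary trace in the limit), you need an extra estimate showing the boundary term tends to zero, i.e.\ $Z_n^{-1}\max(1,e^{\Psi_n(1)})\to0$. This does hold here. Every term of $\Sigma_{CC}$ carries a factor $x(1-x)$, so $A_\eta=O(x^2(1-x)^2)$ and $B^2_\eta=O(x^3(1-x)^3)$ uniformly in $\eta$. Then windows of width $\varepsilon_n^{1/3}$ at each endpoint give $Z_n\gtrsim\varepsilon_n^{-2/3}\max(1,e^{\Psi_n(1)})$. Nothing in your proposal supplies this estimate.
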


The result implies that the steady state solutions can contain Dirac measures at the boundary. Consequently, we would expect in the long run for mass to accumulate in clusters of pure defection and cooperation. Moreover, the steady state solution is non-unique, explaining how the dynamics are strongly influenced by the initial population and parameter regimes.

\section{Experiments}
In this section, we show how the theoretical analysis and model capture the dynamics displayed in the agent-based simulations. To do so, we solve Equation \eqref{eqn: fokker plank} for the 4 different partner selection rules. 

\subsection{Game Setting}
We use a finite volume method over the domain $[0,1]$ to capture the time and spatial evolution. The agent-based simulations act as a ground truth in the experiments; we conduct 30 simulations of a population of 1,000 agents training over $E=$5e6 episodes. The population distribution at time $t$ is obtained by averaging the simulations. For comparison, the simulation time when plotting is scaled such that $t = E/N$.

The initial strategy values are sampled from a specified distribution $\rho_0$, with the parameter $\bm{\psi}(0)$ obtained through the inverse of the softmax function \eqref{eqn: softmax function}. The payoff is given by Table \ref{table: payoff}, where $b,c = 3,0.1$ respectively. Unless specified, the parameters are $\alpha = 0.01, H=2$. Evolution under a wide range of initial conditions is shown in Appendix \ref{appendix: empirical study}.

\subsection{Result}
We compare the distribution of strategies in the population over time. Figure \ref{fig: comparison of evolutions for 4 PS rules} presents the results of both the theoretical solution (solid line) and the simulations (histogram). We clearly see that the FPE derived captures the distribution of strategies as it evolves through time. This close alignment is present for all 4 partner selection rules, where the behaviour of the two cooperation-inducing rules (OFT and ROFT) displays very similar dynamics which lead to the emergence of a defecting and cooperating cluster. On the other hand, the rules Stay and Switch induce defection to take over the population. 
\begin{figure}[h]
     \centering
    \includegraphics[width=0.8\textwidth]{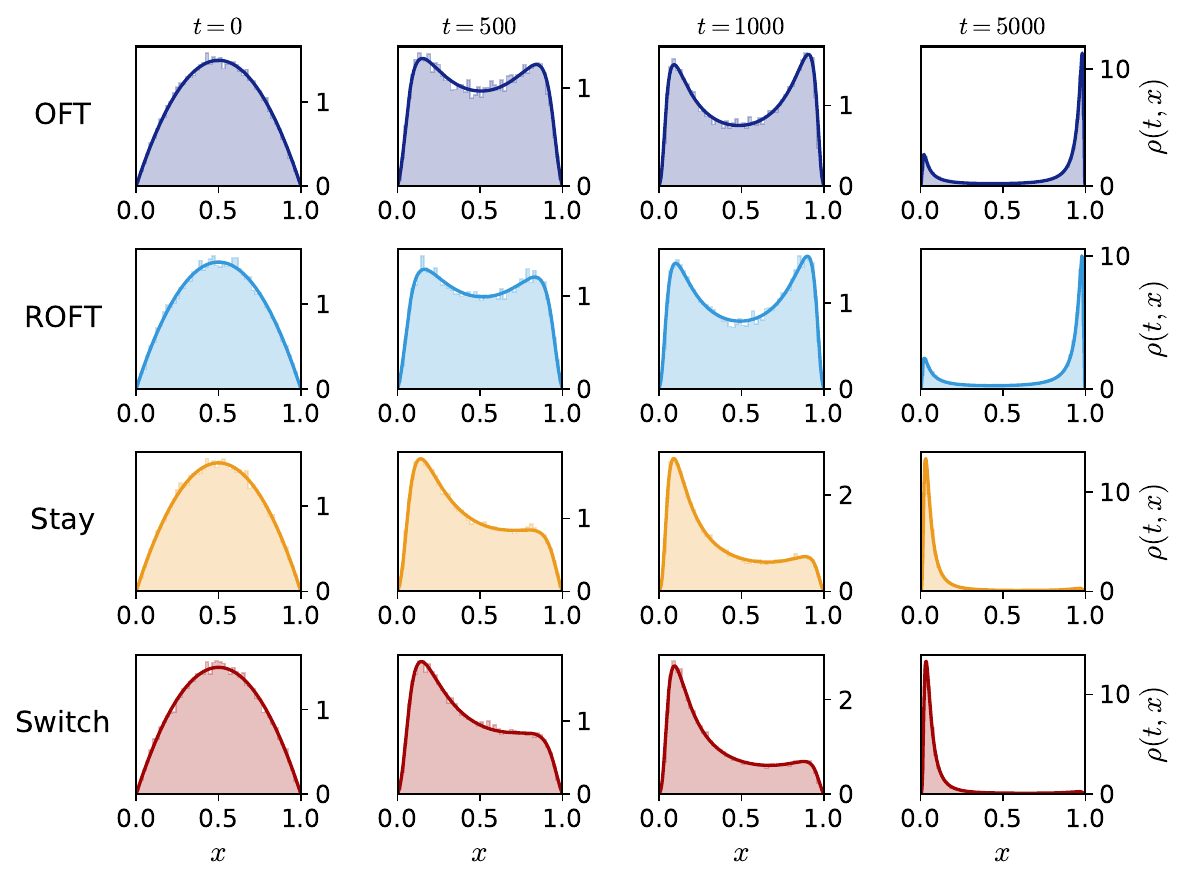}
     \caption{Evolution of the strategy distribution where the population is initialised with $X\sim\beta(2,2)$ for the 4 rules. The theoretical solution (solid line) matches the simulations (histogram).}
     \label{fig: comparison of evolutions for 4 PS rules}
\end{figure}

As highlighted in Section \ref{section: macroscopic model}, the underlying distribution requires sufficient variance for cooperation to emerge under the mean dynamics. This begs the question of how and why cooperation can emerge when the population is initialised to a singular policy \cite{leung_learning_2024, defection_russell2026}. The answer lies in how the learning rate can induce population variance. With low learning rates, the stochastic effects are diminished and therefore policies are updated close to the underlying expectation. A higher value of $\alpha$ enables sampled trajectories to influence the dynamics and therefore induce a faster-growing population variance. Figure \ref{fig: mean evolution with dirac initial condition} shows how for an initially unbiased population, a cooperation cluster can be supported through only increasing the learning rate. Whilst higher learning rate can induce greater variance and support cooperation, the FPE approximation loses fidelity.

\begin{figure}[H]
     \centering
    \includegraphics[width=0.82\textwidth]{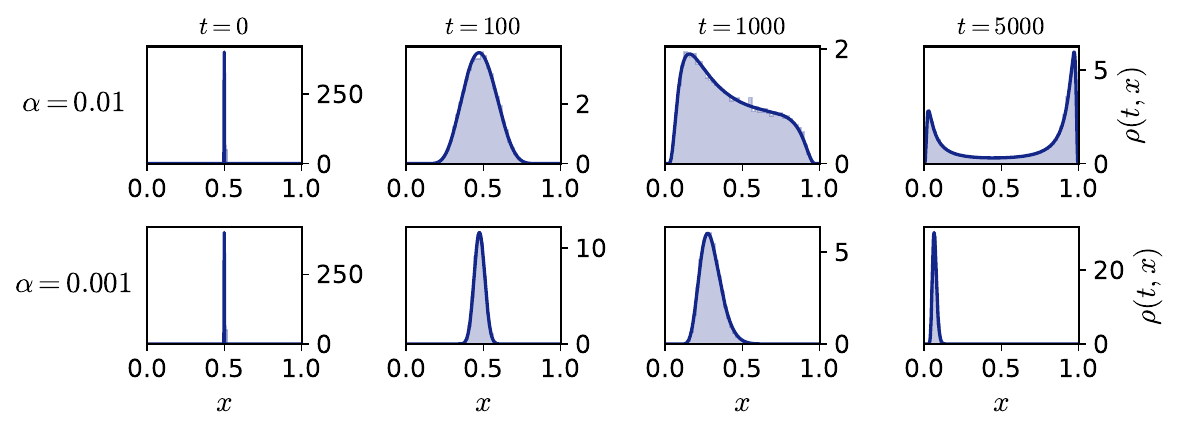}
     \caption{Evolution of the strategy distribution under OFT where the population is initialised with a Dirac at 0.5. The effect of the learning rate is clear: higher values induce additional variance in the population which in turn induces cooperation. Time has been scaled with the learning rate for comparison.}
     \label{fig: mean evolution with dirac initial condition}
\end{figure}

\section{Conclusion}
In this paper, we study the learning dynamics of policy gradient in a multi-agent environment when partner selection rules are enforced. We show how the opponent distribution alters the reward landscape, and prove that the underlying dynamics promote cooperation for both the OFT and ROFT partner selection rules. In particular, we find that population variance is a fundamental requirement for the emergence of cooperation under partner selection. We extend the model to capture the stochastic dynamics using a 2-dimensional Wiener process which models the stochasticity induced by action and opponent selection. In this setting, we show how the same variance requirement is sufficient for an initial increase in cooperation with a sufficiently small learning rate, before proving the existence of a stationary distribution. This analysis revealed that the structure of the long-term dynamics constitutes mass at the pure defection and pure cooperation strategies. In the experiments, we demonstrate that the stochastic model accurately describes the policy gradient dynamics and can be used to understand the role of the learning rate in inducing cooperation. Future directions include extending the model to other learning algorithms and considering a synchronous population partner selection model.

\newpage
\bibliography{refs}
\newpage

\appendix
\section{Derivations}\label{appendix: derivations}

\subsection{Conditional Partner Distribution and Rewards}
For additional clarity, we provide the explicit iterative form for the first two steps of the episode with the 4 partner selection rules. All steps are computed using the iterative formulas provided in Table \ref{table: iterative form and rewards for PS rules}. For each, the initial opponent distribution is set as $\rho(y)$.
\begin{table}[H]
    \caption{Conditional partner distribution and reward difference for OFT}
    \label{table: condition partner distribution OFT}
    \centering
    \begin{tabular}{cccccc}
    \toprule
         $h$ & $\rho^h(y|x, C^0)$ & $\rho^h(y|x, D^0)$  & $m^{0,h}_C(x)$ & $m^{0,h}_D(x)$ & $\Delta r^{0,h}(x)$ \\
         \midrule
         0 & $\rho(y)$ & $\rho(y)$ & $\mu_1$ & $\mu_1$ & $-c$\\
         1 & $y\rho(y) + (1-\mu_1)\rho(y)$ & $\rho(y)$ &$\mu_2-\mu_1^2 +\mu_1$ & $\mu_1$ &b$(\mu_2-\mu_1^2)$ \\
         \bottomrule
    \end{tabular}
\end{table}


\begin{table}[H]
    \caption{Conditional partner distribution and reward difference for ROFT}
    \label{table: condition partner distribution ROFT}
    \centering
    \begin{tabular}{cccccc}
    \toprule
         $h$ & $\rho^h(y|x, C^0)$ & $\rho^h(y|x, D^0)$  & $m^{0,h}_C(x)$ & $m^{0,h}_D(x)$ & $\Delta r^{0,h}(x)$ \\
         \midrule
         0 & $\rho(y)$ & $\rho(y)$ & $\mu_1$ & $\mu_1$ & $-c$\\
         1 & $\rho(y)$ & $ (1+\mu_1-y)\rho(y)$ &$\mu_1$ & $\mu_1 + \mu_1^2-\mu_2 $ &b$(\mu_2-\mu_1^2)$ \\
    \bottomrule
    \end{tabular}
\end{table}
 
\begin{table}[H]
    \caption{Conditional partner distribution and reward difference for Stay and Switch}
    \label{table: condition partner distribution Stay / switch}
    \centering
    \begin{tabular}{cccccc}
    \toprule
         $h$ & $\rho^h(y|x, C^0)$ & $\rho^h(y|x, D^0)$  & $m^{0,h}_C(x)$ & $m^{0,h}_D(x)$ & $\Delta r^{0,h}(x)$ \\
         \midrule
         0 & $\rho(y)$ & $\rho(y)$ & $\mu_1$ & $\mu_1$ & $-c$\\
         $\geq 1$ & $\rho(y)$ & $\rho(y)$ & $\mu_1$ & $\mu_1$ & $0$\\
    \bottomrule
    \end{tabular}
\end{table}
When $H=2$, the only remaining reward difference to calculate is $\Delta r^{1,1}(x) = -c$ for any partner selection rule. For larger values of $H$, the rewards depend on higher-order moments and can be calculated using the iterative formulas. Taking the sum across these rewards gives $\Delta G$ for each rule. 

\subsection{Stochastic Dynamics in 2-action Symmetric Games}
Recall the update to the parameter is
\begin{align*}
    \Delta \psi_C = \alpha \sum_{h=0}^{H-1} U^h,
\end{align*}
where $U^h= (R^h_{a_h}-\beta)(1\{a_h=C\} -x)$.
Define the mean vector
\begin{align*}
    \mu(\bm{\psi},t) := \mathbb{E}[\Delta\bm{\psi}(t)],
\end{align*}
and the covariance matrix by
\begin{align*}
    \Sigma(\bm{\psi},t) &= \mathbb{E}[(\Delta\bm{\psi}(t) - \mu)(\Delta\bm{\psi}(t)-\mu)^T].
\end{align*}
In the two-action case, since $\Delta\psi_C = -\Delta\psi_D$, we have
\begin{align*}
\mu = (\mu_C,-\mu_C), \qquad 
    \Sigma = \begin{pmatrix}
        \Sigma_{CC} & -\Sigma_{CC}\\
        -\Sigma_{CC} & \Sigma_{CC}
    \end{pmatrix}
\end{align*}
where $\mu_C = \mathbb{E}[\Delta\psi_C] = \alpha x(1-x)\Delta G[\rho]$, and $\Delta G[\rho] = G_C - G_D$. To compute the covariance, note that 
\begin{align*}
    \Sigma_{CC} &= \text{Var}(\Delta\psi_C) = \alpha^2 \sum_{h=0}^{H-1} \text{Var}(U^h) + 2 \alpha^2 \sum_{0\leq h <k \leq H-1} \text{Cov}(U^h,U^k).
\end{align*}
For each round $h$, the variance term is
\begin{align*}
    \text{Var}(U^h) &= x(1-x)^2 S^h_C + x^2(1-x) S^h_D - x^2(1-x)^2 (\Delta G^h)^2
\end{align*}
where $S^h_a = \mathbb{E}[(R^h-\beta)^2|a_h=a]$ and $\Delta G^h = G^h_C - G^h_D$. For $h<k$, define the conditional moment
\begin{align*}
    M^{h,k}_{a,b} &= \mathbb{E}[(R^h-\beta)(R^k-\beta)|a_h = a, a_k =b],
\end{align*}
then each covariance term is 
\begin{align*}
    \text{Cov}(U^h,U^k) &= x^2(1-x)^2(M^{h,k}_{C,C} -M^{h,k}_{D,C}-M^{h,k}_{C,D}+M^{h,k}_{D,D} -\Delta G^h \Delta G^k).
\end{align*}

We then approximate the discrete stochastic updates with the continuous time diffusion process as defined in Equation \eqref{eqn: parameter SDE}. To simplify the equations, consider the logit difference given by $z =\psi_C-\psi_D$. Then $\Delta z =\Delta \psi_C - \Delta \psi_D = 2\Delta \psi_C$. The corresponding one-dimensional parameter dynamics are
\begin{align}
    dz = \mu_z dt + \sigma_z dW_t
\end{align}
where 
\begin{align*}
    \mu_z = 2\alpha x(1-x)\Delta G[\rho],\quad    \sigma_z^2 = 4 \Sigma_{CC}.
\end{align*}
The cooperation probability of an agent is then computed with the softmax function \begin{align*}
    x(z) = \frac{1}{1+e^{-z}}.
\end{align*}
Applying Ito's lemma with 
\begin{align*}
    x'(z) = x(1-x), \qquad x''(z) = x(1-x)(1-2x)
\end{align*}
we obtain
\begin{align*}
    dx &= [x(1-x)\mu_z + \frac{1}{2}x(1-x)(1-2x)\sigma^2_z]\; dt + x(1-x) \sigma_z\; dW_t. 
\end{align*}
Substituting in the formulas of $\mu_z$ and $\sigma_z$ gives the provided equation.
\subsection{Conditional Moments}
For action $a \in\{C,D\}$, the second moment of the episodic reward is
\begin{align*}
    S^h_a &= \mathbb{E}[(R^h-\beta)^2|a_h=a]\\
    &= \text{Var}(R^h-\beta|a) + \mathbb{E}[R^h-\beta|a_h=a]^2\\
    &= \text{Var}(R^h|a_h=a) + (G^h_a-\beta)^2\\
    &=\sum_{l=h}^{H-1}\text{Var}(r_l|a_h=a) + 2 \sum_{h\leq l<k}\text{Cov}(r_l,r_k|a_h=a) + (G^h_a-\beta)^2.
\end{align*}
Let the opponent type at timestep $l$ be given by $Y_l$. For the focal agent, the action at the conditioned round is fixed:
\begin{align*}
    z_h = 1\{a_h = C\},
\end{align*}
and for all round $l>h$, the policy is given by $x$. Then the actions sampled in each round $l>h $ are 
\begin{align*}
    \zeta_l\sim \text{Ber}(Y_l),\quad  z_l \sim \text{Ber}(x)
\end{align*}
for the opponent and focal agent, respectively. Note that the focal agent's actions are i.i.d. across the episode. The reward at round $l$ is given by 
\begin{align*}
    r_l = b\zeta_l + c(1-z_l)
\end{align*}
At the initial round $l=h$, the action of the focal agent is fixed, so 
\begin{align*}
    \mathbb{E}[r_h|a_h=a] &= b\mathbb{E}[\zeta_h |a_h=a] + c \cdot 1\{a_h = D\}\\
    &= b m^h(x) + c \cdot 1\{a_h = D\}.
\end{align*}
For all $l> h$, the conditional mean for the focal agent taking action $a$ at time $h$ is
\begin{align*}
    \mathbb{E}[\zeta_l|a_h=a] = \mathbb{E}[Y_l|a_h=a]  = m_a^{h,l}(x) 
\end{align*}
which means the expected conditional reward is 
\begin{align*}
    \mathbb{E}[r_l|a_h=a] &= b \mathbb{E}[\zeta_l|a_h=a] + c\mathbb{E}[1-z_l]\\
    &= b m_a^{h,l}(x)  +c(1-x)
\end{align*}
Likewise, the conditional variance when $l=h$ is
\begin{align*}
    \text{Var}(r_h|a_h=a) &= b^2 \text{Var}(\zeta_h|a_h=a)\\
     &= b^2m^h(x)(1-m^h(x)),
\end{align*}
and for $l>h$ it is
\begin{align}
     \text{Var}(r_l|a_h=a) &= b^2 \text{Var}(\zeta_l|a_h=a)  + c^2\text{Var}(1-z_l) \nonumber \\
     &= b^2m_a^{h,l}(x)  (1-m_a^{h,l}(x) ) + c^2 x(1-x).\label{eqn: conditional variance on a}
\end{align}
The conditional means can be computed using repeated substitution of the iterative partner distribution update. For the covariance terms, let $h\leq l<k\leq H-1$. Then
\begin{align*}
    \text{Cov}(r_l,r_k|a_h=a) & = \text{Cov}(b\zeta_l+(1-z_l)c, b\zeta_k+c(1-z_k)|a_h=a)\\
    &=b^2\text{Cov}(\zeta_l,\zeta_k|a_h=a) + bc\text{Cov}(\zeta_l,1-z_k |a_h=a) + bc\text{Cov}(\zeta_k,1-z_l |a_h=a)\\ &\quad + c^2\text{Cov}(1-z_l,1-z_k|a_h=a)\\
    &=b^2\text{Cov}(Y_l,Y_k|a_h=a) + bc\text{Cov}(Y_l,1-z_k |a_h=a) + bc\text{Cov}(Y_k,1-z_l |a_h=a)
\end{align*}
where the final term in the second line is zero due to the independence of action selection by the focal agent. Note that for $H=2$, the final two terms are zero as $l=h$ is the only relevant term and the focal action is fixed. In this case, this covariance is given by
\begin{align*}
    \text{Cov}(Y_l,Y_k|a_h=a) &=\mathbb E[Y_lY_k|a_h=a] - m_a^{h,l}(x) m_a^{h,k}(x) .
\end{align*}
\subsubsection*{Computing the Covariance for $H=2$}
Beginning with the variance terms, we have for $h=0$:
\begin{align*}
    \mathbb{E}[r_0|Y_0,a_0 = C] &= b\mathbb{E}[\zeta_0|Y_0] = bY_0\\
    \text{Var}(r_0|Y_0,a_0 = C) &= b^2\text{Var}(\zeta_0|Y_0) = b^2Y_0(1-Y_0)\\
    \Rightarrow \text{Var}(r_0|a_0 = C) &= \mathbb{E}[b^2Y_0(1-Y_0)] + \text{Var}( bY_0) \\
    &=b^2(\mu_1-\mu_1^2).
\end{align*}
Likewise for defection:
\begin{align*}
    \mathbb{E}[r_0|Y_0,a_0 = D] &= b\mathbb{E}[\zeta_0|Y_0] + c = bY_0+c\\
    \text{Var}(r_0|Y_0,a_0 = D) &= b^2\text{Var}(\zeta_0|Y_0) = b^2Y_0(1-Y_0)\\
    \Rightarrow \text{Var}(r_0|a_0 = D) &= \mathbb{E}[b^2Y_0(1-Y_0)] + \text{Var}(bY_0+c) \\
    &=b^2(\mu_1-\mu_1^2).
\end{align*}
For the following step: $h=1$,
\begin{align*}
    \mathbb{E}[r_1|Y_1,a_0 = C] &= b\mathbb{E}[\zeta_1|Y_1] + c\mathbb{E}[1-z_1]= bY_1 + c(1-x)\\
    \text{Var}(r_1|Y_1,a_0 = C) &= b^2\text{Var}(\zeta_1|Y_1) + c^2\text{Var}(1-z_1) = b^2Y_1(1-Y_1) + c^2x(1-x)\\
    \Rightarrow \text{Var}(r_1|a_0 = C) &= \mathbb{E}[b^2Y_1(1-Y_1) + c^2x(1-x)|a_0 = C] + \text{Var}(bY_1 + c(1-x)|a_0 = C) \\
    &=b^2\Big(\mathbb{E}[Y_1|a_0 = C] - \mathbb{E}[Y_1|a_0 = C]^2\Big)+c^2x(1-x).
\end{align*}
Note this is the same form as Equation \eqref{eqn: conditional variance on a}. Using the one-step Markov operator on the opponent distribution, we can derive the reward variances for each of the partner selection rules.

The updated covariances can be computed using the conditional moment
\begin{align*}
    M^{0,1}_{a,b} &= \mathbb{E}[(R^0-\beta)(R^1-\beta)|a_0 = a, a_1 = b]\\
    &= \mathbb{E}[(r_0 + r_1-\beta)(r_1-\beta)|a_0 = a, a_1 = b]\\
    &= \mathbb{E}[r_0r_1|a_0 = a, a_1 = b] + \mathbb{E}[r_1^2|a_0 = a, a_1 = b] - \beta \mathbb{E}[r_0|a_0 = a, a_1 = b] - 2\beta \mathbb{E}[r_1|a_0 = a, a_1 = b] +\beta^2.
\end{align*}
\paragraph{OFT}
The expected cooperation rate of the opponent in the next step given the current opponent and the focal agent's action is
\begin{align*}
    \mathbb{E}[Y_1|Y_0,a_0 = C] = Y_0\cdot Y_0 + (1-Y_0) \mu_1.
\end{align*}
Then, computing the conditional mean we have
\begin{align*}
    \mathbb{E}[Y_1|a_0 = C] &= \mathbb{E}[\mathbb{E}[Y_1|Y_0,a_0 = C]]\\
    &= \mathbb{E}[Y_0^2 + (1-Y_0)\mu_1]\\
    &= \mu_2 + \mu_1 -\mu_1^2\\
     \mathbb{E}[Y_1|a_0 = D] &= \mu_1 
\end{align*}
For the covariance terms, we have
\begin{align*}
    \mathbb{E}[Y_0 Y_1|a_0 = C] &= \mathbb{E}[Y_0\mathbb{E}[Y_1|Y_0,a_0 = C]]\\
    &= \mathbb{E}[Y_0(Y_0^2 + (1-Y_0)\mu_1)]\\
    &= \mu_3 + \mu_1^2 - \mu_2\mu_1\\
    \mathbb{E}[Y_0 Y_1|a_0 = D] &= \mathbb{E}[Y_0\mathbb{E}[Y_1|Y_0,a_0 = D]]\\
    &= \mathbb{E}[Y_0\mu_1]\\
    &= \mu_1^2
\end{align*}
The variance terms are
\begin{align*}
    \text{Var}(r_1|a_0 = C) = b^2(\mu_2+\mu_1-\mu_1^2)(1-\mu_2-\mu_1+\mu_1^2) + c^2x(1-x),\;\; \text{Var}(r_1|a_0 = D) = b^2\mu_1(1-\mu_1) + c^2x(1-x).
\end{align*}
The covariance terms are
\begin{align*}
    \text{Cov}(r_0,r_1| a_0 = C) &= b^2\text{Cov}(Y_0,Y_1|a_0 = C)\\
    &= b^2(\mathbb{E}[Y_0 Y_1|a_0 = C] - \mathbb{E}[Y_0|a_0 = C]\mathbb{E}[Y_1|a_0 = C])\\
    &=b^2(\mu_3 + \mu_1^2 - \mu_2\mu_1 - \mu_1(\mu_2+\mu_1-\mu_1^2))\\
    &= b^2(\mu_3 -2\mu_2\mu_1 + \mu_1^3)\\
    \text{Cov}(r_0,r_1| a_0 = D) &= b^2\text{Cov}(Y_0,Y_1|a_0 = D)\\
    &= b^2(\mathbb{E}[Y_0 Y_1|a_0 = D] - \mathbb{E}[Y_0|a_0 = D]\mathbb{E}[Y_1|a_0 = D])\\
    &=b^2(\mu_1^2 - \mu_1^2)= 0
\end{align*}
It remains to condition on the second step, $h=1$. Explicitly, the terms are given by
\begin{align*}
    \mathbb{E}[r_1|a_1=C] &= bm^1(x)\\
        &= b\int_0^1 y[xy\rho(y) + \rho(y)(1-x\mu_1)]dy\\
        &= b(\mu_1+x(\mu_2-\mu_1^2))\\
    \mathbb{E}[r_1|a_1=D] &= bm^1(x)+c\\
        &= b(\mu_1+x(\mu_2-\mu_1^2))+c\\
    \text{Var}(r_1|a_1=a) &= b^2m^1(x)(1-m^1(x))\\
        &= b^2(\mu_1+x(\mu_2-\mu_1^2))(1-(\mu_1+x(\mu_2-\mu_1^2)))
\end{align*}
Collating these terms, we can summarise the elements of the second moment in Table \ref{table: second moment oft}.

\begin{table}[H]
    \caption{Second moments ($S^h_a$) of episodic reward when $H=2$ under OFT.}
    \label{table: second moment oft}
    \centering
    \begin{tabular}{ccp{20em}c}
    \toprule
         $h$ & $a$ & $\text{Var}(R^h|a_h=a)$ & $(G^h_a-\beta)^2$\\
         \midrule
         $0$ & $C$ & $b^2[(\mu_2+\mu_1-\mu_1^2)(1-\mu_2-\mu_1+\mu_1^2)+(\mu_1-\mu_1^2)\newline +2(\mu_3 -2\mu_2\mu_1 + \mu_1^3)]+c^2x(1-x)$ & $[b(\mu_2+2\mu_1-\mu_1^2)+c(1-x)-\beta]^2$\\
         & $D$ & $2b^2\mu_1(1-\mu_1) + c^2x(1-x)$ & $[2b\mu_1 + c(2-x)-\beta]^2$ \\
      \midrule
         $1$ & $C$ & $b^2(\mu_1+x(\mu_2-\mu_1^2))(1-\mu_1-x(\mu_2-\mu_1^2))$ & $[b(\mu_1+x(\mu_2-\mu_1^2))-\beta]^2$\\
         & $D$ & $b^2(\mu_1+x(\mu_2-\mu_1^2))(1-\mu_1-x(\mu_2-\mu_1^2))$ & $[b(\mu_1+x(\mu_2-\mu_1^2)) +c-\beta]^2$ \\
    \bottomrule
    \end{tabular}
\end{table}

\paragraph{ROFT}
The expected cooperation rate of the opponent in the next step given the current opponent and the focal agent's action is
\begin{align*}
    \mathbb{E}[Y_1|Y_0,a_0 = C] &= \mu_1\\
    \mathbb{E}[Y_1|Y_0,a_0 = D] &= Y_0\cdot(1-Y_0) + Y_0 \mu_1   
\end{align*}
Then computing the conditional mean,
\begin{align*}
    \mathbb{E}[Y_1|a_0 = C] &= \mu_1\\
     \mathbb{E}[Y_1|a_0 = D] &=  \mathbb{E}[\mathbb{E}[Y_1|Y_0,a_0 = D]]\\
    &= \mathbb{E}[Y_0(1-Y_0) + Y_0 \mu_1]\\
    &= \mu_1 +\mu_1^2 - \mu_2
\end{align*}
For the covariance terms, we have
\begin{align*}
    \mathbb{E}[Y_0 Y_1|a_0 = C] &= \mathbb{E}[Y_0\mathbb{E}[Y_1|Y_0,a_0 = C]]\\
    &= \mathbb{E}[Y_0\mu_1]\\
    &= \mu_1^2\\
    \mathbb{E}[Y_0 Y_1|a_0 = D] &= \mathbb{E}[Y_0\mathbb{E}[Y_1|Y_0,a_0 = D]]\\
    &= \mathbb{E}[Y_0(Y_0(1-Y_0) + Y_0 \mu_1)]\\
    &= \mu_2 - \mu_3 + \mu_1\mu_2
\end{align*}
The variance terms are then:
\begin{align*}
    \text{Var}(r_1|a_0=C) = b^2\mu_1(1-\mu_1) + c^2x(1-x),\;\; \text{Var}(r_1|a_0=D) = b^2(\mu_1 +\mu_1^2 - \mu_2)(1-\mu_1 -\mu_1^2 + \mu_2) + c^2x(1-x)
\end{align*}
The covariance terms are
\begin{align*}
    \text{Cov}(r_0,r_1| a_0 = C) &= b^2\text{Cov}(Y_0,Y_1|a_0 = C)\\
    &= b^2(\mathbb{E}[Y_0 Y_1|a_0 = C] - \mathbb{E}[Y_0|a_0 = C]\mathbb{E}[Y_1|a_0 = C])\\
    &=b^2(\mu_1^2- \mu_1^2) = 0\\
    \text{Cov}(r_0,r_1| a_0 = D) &= b^2\text{Cov}(Y_0,Y_1|a_0 = D)\\
    &= b^2(\mathbb{E}[Y_0 Y_1|a_0 = D] - \mathbb{E}[Y_0|a_0 = D]\mathbb{E}[Y_1|D])\\
    &=b^2(\mu_2-\mu_3+\mu_1\mu_2 -\mu_1(\mu_1+\mu_1^2-\mu_2))\\
    &=b^2(\mu_2-\mu_1^2+2\mu_1\mu_2 -\mu_1^3-\mu_3)
\end{align*}

It remains to condition on the second step, $h=1$. Explicitly, the terms are given by
\begin{align*}
    \mathbb{E}[r_1|a_1=C] &= bm^1(x)\\
        &= b\int_0^1 y[(1-x)(1-y)\rho(y) + \rho(y)(1-(1-x)(1-\mu_1))]dy\\
        &= b(\mu_1-(1-x)(\mu_2-\mu_1^2))\\
    \mathbb{E}[r_1|a_1=D] &= bm^1(x)+c\\
        &= b(\mu_1-(1-x)(\mu_2-\mu_1^2))+c\\
    \text{Var}(r_1|a_1=a) &= b^2m^1(x)(1-m^1(x))\\
        &= b^2(\mu_1-(1-x)(\mu_2-\mu_1^2))(1-\mu_1+(1-x)(\mu_2-\mu_1^2)))
\end{align*}
Collating these terms, we can summarise the elements of the second moment in Table \ref{table: second moment roft}.

\begin{table}[H]
    \caption{Second moments ($S^h_a$) of episodic reward when $H=2$ under ROFT.}
    \label{table: second moment roft}
    \centering
    \begin{tabular}{ccp{25em}c}
    \toprule
         $h$ & $a$ & $\text{Var}(R^h|a_h=a)$ & $(G^h_a-\beta)^2$\\
         \midrule
         $0$ & $C$ & $2b^2\mu_1(1-\mu_1) + c^2x(1-x)$ & $[2b\mu_1 + c(1-x)-\beta]^2$\\
         &$D$ & $b^2[(\mu_1 +\mu_1^2 - \mu_2)(1-\mu_1 -\mu_1^2 + \mu_2)+\mu_1(1-\mu_1) \newline + 2(\mu_2-\mu_1^2+2\mu_1\mu_2 -\mu_1^3-\mu_3)]  + c^2x(1-x)$ & $[b(2\mu_1+\mu_1^2-\mu_2) +c(2-x)-\beta]^2$ \\
         \midrule
         $1$ & $C$ & $b^2(\mu_1-(1-x)(\mu_2-\mu_1^2))(1-\mu_1+(1-x)(\mu_2-\mu_1^2))$ & $[b(\mu_1-(1-x)(\mu_2-\mu_1^2))-\beta]^2$\\
         &$D$ & $b^2(\mu_1-(1-x)(\mu_2-\mu_1^2))(1-\mu_1+(1-x)(\mu_2-\mu_1^2))$ & $[b(\mu_1-(1-x)(\mu_2-\mu_1^2))+c-\beta]^2$ \\
         \bottomrule
    \end{tabular}
\end{table}

\paragraph{Always Stay}
Since the transition is independent of the action, for both actions $a \in \{C, D\}$ the expected cooperation rate of the opponent in the next step given the current opponent and action is
\begin{align*}
    \mathbb{E}[Y_1|Y_0,a_0=a] = Y_0.
\end{align*}
Then computing the conditional mean,
\begin{align*}
    \mathbb{E}[Y_1|a_0=a] &= \mathbb{E}[\mathbb{E}[Y_1|Y_0,a_0=a]]=\mathbb{E}[Y_0]=\mu_1
\end{align*}
For the covariance terms, we have
\begin{align*}
    \mathbb{E}[Y_0 Y_1|a_0=a] &= \mathbb{E}[Y_0\mathbb{E}[Y_1|Y_0,a_0=a]]\\
    &= \mathbb{E}[Y_0Y_0]\\
    &= \mu_2
\end{align*}
The variance terms are then:
\begin{align*}
    \text{Var}(r_1|a_0=a) = b^2\mu_1(1-\mu_1)+c^2x(1-x)
\end{align*}
The covariance terms are
\begin{align*}
    \text{Cov}(r_0,r_1| a_0=a) &= b^2\text{Cov}(Y_0,Y_1|a_0=a)\\
    &= b^2(\mathbb{E}[Y_0 Y_1|a_0=a] - \mathbb{E}[Y_0|a]\mathbb{E}[Y_1|a_0=a])\\
    &=b^2(\mu_2 - \mu_1^2)\\
    &=b^2 \text{Var}(\rho)
\end{align*}

It remains to condition on the second step, $h=1$. Explicitly, the terms are given by
\begin{align*}
    \mathbb{E}[r_1|a_1=C] &= bm^1(x)\\
        &= b\int_0^1 y\rho(y)dy\\
        &= b\mu_1\\
    \mathbb{E}[r_1|a_1=D] &= bm^1(x)+c\\
        &= b\mu_1+c\\
    \text{Var}(r_1|a_1=a) &= b^2m^1(x)(1-m^1(x))\\
        &= b^2\mu_1(1-\mu_1)
\end{align*}
Collating these terms, we can summarise the elements of the second moment in Table \ref{table: second moment stay}.

\begin{table}[H]
    \caption{Second moments ($S^h_a$) of episodic reward when $H=2$ under Always Stay.}
    \label{table: second moment stay}
    \centering
    \begin{tabular}{cccc}
    \toprule
         $h$& $a$ & $\text{Var}(R^h|a_h=a)$ & $(G^h_a-\beta)^2$\\
         \midrule
         $0$ & $C$ & $2b^2(\mu_1 +\mu_2-2\mu_1^2) + c^2x(1-x)$ & $[2b\mu_1 + c(1-x)-\beta]^2$\\
         & $D$ & $2b^2(\mu_1 +\mu_2-2\mu_1^2) + c^2x(1-x)$ & $[2b\mu_1 + c(2-x)-\beta]^2$ \\
        \midrule
          $1$ & $C$ & $b^2\mu_1(1-\mu_1)$ & $[b\mu_1 -\beta]^2$\\
         & $D$ & $b^2\mu_1(1-\mu_1)$ & $[b\mu_1+c-\beta]^2$ \\
         \bottomrule
    \end{tabular}
\end{table}

\paragraph{Always Switch}
Since the transition is independent of the action, for both actions $a \in \{C, D\}$ the expected cooperation rate of the opponent in the next step given the current opponent and action is
\begin{align*}
    \mathbb{E}[Y_1|Y_0,a_0=a] = \mu_1.
\end{align*}
Then computing the conditional mean,
\begin{align*}
    \mathbb{E}[Y_1|a] &= \mathbb{E}[\mathbb{E}[Y_1|Y_0,a_0=a]]=\mathbb{E}[\mu_1]=\mu_1
\end{align*}
For the covariance terms, we have
\begin{align*}
    \mathbb{E}[Y_0 Y_1|a_0=a] &= \mathbb{E}[Y_0\mathbb{E}[Y_1|Y_0,a_0=a]]\\
    &= \mathbb{E}[Y_0\mu_1]\\
    &= \mu_1^2
\end{align*}
The variance terms are then:
\begin{align*}
    \text{Var}(r_1|a_0=a) = b^2\mu_1(1-\mu_1)+c^2x(1-x)
\end{align*}
The covariance terms are
\begin{align*}
    \text{Cov}(r_0,r_1| a_0=a) &= b^2\text{Cov}(Y_0,Y_1|a_0=a)\\
    &= b^2(\mathbb{E}[Y_0 Y_1|a_0=a] - \mathbb{E}[Y_0|a_0=a]\mathbb{E}[Y_1|a_0=a])\\
    &=b^2(\mu_1^2 - \mu_1^2)= 0
\end{align*}
Note that for conditioning at step $h=1$, the same derivation as Always Stay holds. Collating these terms, we can summarise the elements of the second moment in Table \ref{table: second moment switch}.
\begin{table}[H]
    \caption{Second moments ($S^h_a$) of episodic reward when $H=2$ under Always Switch.}
    \label{table: second moment switch}
    \centering
    \begin{tabular}{cccc}
    \toprule
         $h$ & $a$ & $\text{Var}(R^h|a_h=a)$ & $(G^h_a-\beta)^2$\\
         \midrule
         $0$ & $C$ & $2b^2\mu_1(1-\mu_1) + c^2x(1-x)$ & $[2b\mu_1 + c(1-x)-\beta]^2$\\
         &$D$ & $2b^2\mu_1(1-\mu_1) + c^2x(1-x)$ & $[2b\mu_1 + c(2-x)-\beta]^2$ \\
         \midrule
          $1$ & $C$ & $b^2\mu_1(1-\mu_1)$ & $[b\mu_1 -\beta]^2$\\
         & $D$ & $b^2\mu_1(1-\mu_1)$ & $[b\mu_1+c-\beta]^2$ \\
         \bottomrule
    \end{tabular}
\end{table}

\section{Missing Proofs}\label{appendix: proofs}

\subsection{Conditional Rewards}
We begin by analysing the base case of $k=0$. Defining $\Delta \rho^{0,h+1}(y) = \rho^{h+1}(y|x,C^0) -\rho^{h+1}(y|x,D^0)$, the system reduces to analysing the following recursion:
\begin{align*}
    \Delta \rho^{0,h+1}(y) &= xy\Delta\rho^{0,h}(y) - x\rho(y)\Delta m^{0,h}\\
    \Delta m^{0,h} &= \int_0^1 y\Delta\rho^{0,h}(y)dy
\end{align*}
Define the operator $\mathcal{T}$ as the recursive update on a function $g$. Formally,
\begin{align*}
    \mathcal{T}g := Yg- \mathbb{E}[Yg]
\end{align*}
then $g^{h+1} =\mathcal{T}g^h$. For $h>1$
\begin{align}\label{eqn: g iterative definition}
    g^h(y) &= yg^{h-1}(y) - \mathbb{E}[Yg^{h-1}(Y)],\quad    g^1(y) = y-\mu_1.
\end{align}
Then we can isolate the partner dependence $y$, and from the policy $x$.
\begin{lemma}\label{lemma: recursion simplification oft, k=0}
    Let $\rho(y)$ be the initial distribution, and $g^h$ be defined by Equation \eqref{eqn: g iterative definition}. Then for all $h\geq1$, the difference in distribution and mean of the opponent is given by
    \begin{align*}
    \Delta \rho^{0,h}(y) &= x^{h-1}g^h(y)\rho(y),\quad  \Delta m^{0,h} = x^{h-1}\mathbb{E}[Yg^h(Y)].
\end{align*}
\end{lemma}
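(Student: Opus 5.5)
We argue by induction on $h\geq 1$, using the recursion for $\Delta\rho^{0,h+1}$ stated just before the lemma together with the definition of $g^h$ in Equation \eqref{eqn: g iterative definition}. The idea is that the recursion is linear in $\Delta\rho^{0,h}$ and applies multiplication by $xy$ followed by a rank-one correction. Hence the factor $x$ accumulates as a power, and the $y$-dependence is carried entirely by the operator $\mathcal{T}$ acting on a function multiplied by $\rho$.

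\emph{Base case ($h=1$).} From the OFT rows of Table~\ref{table: condition partner distribution OFT} (or directly from the one-step update in Table~\ref{table: iterative form and rewards for PS rules}), conditioning on $C^0$ gives $\rho^1(y|x,C^0)=y\rho(y)+(1-\mu_1)\rho(y)$, since the round-$0$ opponent is drawn from $\rho$ and stays with probability $y$. Conditioning on $D^0$ forces a switch, so $\rho^1(y|x,D^0)=\rho(y)$. Their difference is $(y-\mu_1)\rho(y)=x^0g^1(y)\rho(y)$. Integrating against $y$ then gives $\Delta m^{0,1}=\mathbb{E}[Yg^1(Y)]=\mu_2-\mu_1^2$, consistent with the table.

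\emph{Inductive step.} Assume $\Delta\rho^{0,h}(y)=x^{h-1}g^h(y)\rho(y)$. Then $\Delta m^{0,h}=\int y\,x^{h-1}g^h(y)\rho(y)\,dy=x^{h-1}\mathbb{E}[Yg^h(Y)]$, which is the second claimed identity at level $h$. Substituting both into the recursion gives
\begin{align*}
\Delta\rho^{0,h+1}(y)=x\cdot y\,x^{h-1}g^h(y)\rho(y)-x\rho(y)\,x^{h-1}\mathbb{E}[Yg^h(Y)]=x^{h}\big(yg^h(y)-\mathbb{E}[Yg^h(Y)]\big)\rho(y)=x^hg^{h+1}(y)\rho(y).
\end{align*}
This is the first identity at level $h+1$. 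The mean identity at level $h+1$ follows by the same integration as before.

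The one point needing care is the recursion for $\Delta\rho^{0,h+1}$ itself, for $h\geq1$. For $h\ge 1$ both conditioned chains evolve with the same OFT kernel, because the focal agent plays policy $x$ after round $0$. So $\rho^{h+1}(y|x,a^0)=xy\rho^h(y|x,a^0)+\rho(y)(1-xm^{0,h}_a)$ for $a\in\{C,D\}$. Subtracting the two equations cancels the $\rho(y)$ terms except for $-x\rho(y)\Delta m^{0,h}$, which gives exactly the stated linear recursion. I expect no real obstacle beyond this bookkeeping. The only subtlety is that the base case is not produced by the recursion, because at round $0$ the action is fixed rather than drawn from $x$. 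That is why the power is $x^{h-1}$ rather than $x^h$.
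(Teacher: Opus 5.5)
Your proposal is correct and matches the paper's proof: induction on $h$, with base case $\Delta\rho^{0,1}(y)=(y-\mu_1)\rho(y)$ and the same one-line substitution into the recursion $\Delta\rho^{0,h+1}=xy\,\Delta\rho^{0,h}-x\rho\,\Delta m^{0,h}$. You also derive that recursion, which the paper uses without derivation: both conditioned chains evolve under the same OFT kernel once the round-$0$ action is fixed, and this is also why the exponent is $h-1$.
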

\begin{proof}
    We proceed by induction. Let $h=1$, then 
    \begin{align*}
    \Delta \rho^{0,1}(y) &= (y-\mu_1)\rho(y) = x^{0}g^1(y)\rho(y)\\
    \Delta m^{0,1} &= \text{Var}(\rho) = \mathbb{E}[Y^2-\mu Y] = x^0\mathbb{E}[Yg^1(Y)]
\end{align*}
Assume the form holds at time step $h$, then for $h+1$,
\begin{align*}
        \Delta \rho^{0,h+1}(y) &= xy\Delta\rho^{0,h}(y) - x\rho(y)\Delta m^{0,h}\\
        &= xy\Big(x^{h-1}g^h\rho(y)\Big) - x\rho(y) x^{h-1}\mathbb{E}[Yg^h(Y)]\\
        &= x^{h}\Big(yg^{h} - \mathbb{E}[Yg^h(Y)]\Big)\rho(y)\\
        &= x^hg^{h+1}(y)\rho(y)\\
     \Delta m^{0,h+1} &= x^h\int_0^1 y g^{h+1}(y)\rho(y) dy \\
     &= x^h \mathbb{E}[Yg^{h+1}(Y)]
\end{align*}
\end{proof}
This form enables us to provide a simple condition for the future rewards to be non-negative. Specifically, for $\Delta m^{0,h} \geq 0$ we need $\mathbb{E}[Yg^h(Y)] \geq 0$. This can be analysed through the properties of the operator $\mathcal{T}$, by splitting the steps of the update. Define the inner product of two functions as
\begin{equation}\label{inner product}
    \langle f,g\rangle := \mathbb{E}[Yf(Y)g(Y)].
\end{equation}
\begin{lemma}\label{lemma: operator is self adjoint}
    The operator $\mathcal{T}$ is self-adjoint with respect to the inner product in Equation \eqref{inner product}.
\end{lemma}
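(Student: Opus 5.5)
The plan is a direct computation: expand both sides of $\langle \mathcal{T}f, g\rangle = \langle f, \mathcal{T}g\rangle$ using the definition $\mathcal{T}f = Yf - \mathbb{E}[Yf]$ and the inner product in Equation \eqref{inner product}, and check that the resulting expression is symmetric in $f$ and $g$.

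First fix the function class. We take $f,g$ measurable on $[0,1]$ with $\mathbb{E}[Yf(Y)^2]<\infty$ and $\mathbb{E}[Yg(Y)^2]<\infty$, with expectations over $Y\sim\rho$. Since $0\le Y\le 1$, Cauchy--Schwarz gives $\mathbb{E}[Y|f|]\le \mathbb{E}[Yf^2]^{1/2}$, so $\mathbb{E}[Yf]$ is finite and $\mathcal{T}f$ is well defined. For the same reason $Y^2|fg|\le Y|fg|$, so every expectation below is finite. In particular the iterates $g^h$ of Equation \eqref{eqn: g iterative definition} are polynomials, so they lie in this class.

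Next expand the left-hand side. By linearity,
\begin{align*}
\langle \mathcal{T}f,g\rangle &= \mathbb{E}\big[Y\,(Yf(Y)-\mathbb{E}[Yf(Y)])\,g(Y)\big] \\
&= \mathbb{E}[Y^2 f(Y)g(Y)] - \mathbb{E}[Yf(Y)]\,\mathbb{E}[Yg(Y)].
\end{align*}
The constant $\mathbb{E}[Yf]$ is pulled out of the outer expectation, and what remains of it is $\mathbb{E}[Yg]$.

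The expression is symmetric under swapping $f$ and $g$. Repeating the computation for $\langle f,\mathcal{T}g\rangle$ therefore gives the same quantity, which establishes self-adjointness.

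No step is a real obstacle. The only care needed is justifying integrability and noting that the centring constant is a scalar, so it factors out of the expectation. It also helps to record the identity $\langle \mathcal{T}f,g\rangle = \mathbb{E}[Y^2fg]-\mathbb{E}[Yf]\mathbb{E}[Yg]$ explicitly, because with $f=g$ it gives a covariance-type quadratic form. That form is likely what is used afterwards to show $\mathbb{E}[Yg^h(Y)]\ge 0$ via $\mathbb{E}[Yg^h]=\langle 1, g^h\rangle = \langle 1,\mathcal{T}^{h-1}g^1\rangle$ and moving powers of $\mathcal{T}$ across the inner product.
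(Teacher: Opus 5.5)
Your proposal is correct and matches the paper's proof: both expand $\langle \mathcal{T}f,g\rangle$ to $\mathbb{E}[Y^2fg]-\mathbb{E}[Yf]\,\mathbb{E}[Yg]$ and use its symmetry in $f$ and $g$, with the paper refactoring this explicitly as $\langle f,\mathcal{T}g\rangle$. Your integrability check on the function class is a small addition that the paper omits.
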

\begin{proof}
\begin{align*}
    \langle \mathcal{T}f,g\rangle  &= \langle Yf- \mathbb{E}[Yf], g\rangle\\
    &= \mathbb{E}[Y(Yf- \mathbb{E}[Yf])g]\\
    &= \mathbb{E}[Y^2fg] - \mathbb{E}[Yf]\mathbb{E}[Yg]\\
    &= \mathbb{E}[Yf(Yg- \mathbb{E}[Yg])]\\
    &= \langle f, Yg- \mathbb{E}[Yg]\rangle\\
    &= \langle f, \mathcal{T}g\rangle.
\end{align*}
\end{proof}

\begin{lemma}\label{lemma: gm,gn is g^m+n}
Let $g^h$ be defined by Equation \eqref{eqn: g iterative definition}. For any $m,n\in \mathbb{N}$,
\begin{align*}
    \mathbb{E}[Yg^mg^n]= \mathbb{E}[Yg^{m+n}].
    \end{align*}
\end{lemma}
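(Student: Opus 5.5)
The plan is to recognise every $g^h$ as an iterate of the operator $\mathcal{T}$ applied to the constant function, and then use Lemma \ref{lemma: operator is self adjoint} to move powers of $\mathcal{T}$ across the inner product \eqref{inner product}. The statement to prove is exactly
\begin{align*}
\langle g^m, g^n\rangle = \langle g^{m+n}, 1\rangle ,
\end{align*}
since $\mathbb{E}[Yg^mg^n]=\langle g^m,g^n\rangle$ and $\mathbb{E}[Yg^{m+n}]=\langle g^{m+n},1\rangle$.

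First, set $g^0 :\equiv 1$. We have $\mathcal{T}g^0 = Y\cdot 1 - \mathbb{E}[Y] = y-\mu_1 = g^1$, so this is consistent with Equation \eqref{eqn: g iterative definition}. Hence $g^h = \mathcal{T}^h 1$ for all $h\ge 0$. Well-definedness is immediate. Since $Y\in[0,1]$, induction shows each $g^h$ is a polynomial in $y$ with bounded coefficients, and the recursion only ever subtracts constants. So every expectation appearing is finite, and the manipulations in Lemma \ref{lemma: operator is self adjoint} are legitimate for these functions.

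Next, prove the shifted identity $\langle \mathcal{T}^m 1, \mathcal{T}^n 1\rangle = \langle \mathcal{T}^{m-1}1, \mathcal{T}^{n+1}1\rangle$ for $m\ge1$. This follows from one application of Lemma \ref{lemma: operator is self adjoint} with $f=\mathcal{T}^{m-1}1$ and $g=\mathcal{T}^n 1$. Iterating $m$ times, formally by induction on $m$ with $n$ arbitrary, gives
\begin{align*}
\langle g^m,g^n\rangle = \langle \mathcal{T}^m 1,\mathcal{T}^n 1\rangle = \langle 1, \mathcal{T}^{m+n}1\rangle = \mathbb{E}[Y g^{m+n}].
\end{align*}
The base case $m=0$ is trivial because $g^0=1$. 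If $\mathbb{N}$ is meant to exclude $0$, the same argument covers $m,n\ge1$.

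There is no real obstacle. The only points needing care are the introduction of $g^0=1$, so that the recursion starts from a single fixed vector, and checking that self-adjointness holds on the class of bounded polynomial functions where all $g^h$ live.
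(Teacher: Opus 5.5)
Your proof is correct and follows the paper's argument: both transfer powers of $\mathcal{T}$ across the inner product \eqref{inner product} using Lemma~\ref{lemma: operator is self adjoint}. The only difference is that you start the recursion at $g^0\equiv 1$, so the paper's separate final computation $\langle g^1,g^k\rangle=\mathbb{E}[Yg^{k+1}]$ is absorbed as one more application of self-adjointness, which is a slightly cleaner way to close the argument.
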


\begin{proof}
By Lemma \ref{lemma: operator is self adjoint}, we have
\begin{align*}
    \langle g^{m}, g^n\rangle &= \langle\mathcal{T}^{m-1}g^1, \mathcal{T}^{n-1}g^1 \rangle\\
    &= \langle g^1, \mathcal{T}^{m+n-2}g^1 \rangle\\
    &= \langle g^1, g^{m+n-1} \rangle
\end{align*}
and using the definition of the inner product, for any $k \in \mathbb{N}$
\begin{align*}
    \langle g^{1}, g^k\rangle &= \mathbb{E}[Yg^1g^k]\\
    &= \mathbb{E}[Y(Y-\mu_1)g^k]\\
    &= \mathbb{E}[Y^2g^k]- \mu_1\mathbb{E}[Yg^k]
\end{align*}
From the recursion, we can take expectation over $Yg^{k+1}$, giving the identity
\begin{align*}
    \mathbb{E}[Yg^{k+1}] &= \mathbb{E}[Y^2g^{k}] - \mu_1\mathbb{E}[Yg^{k}]
\end{align*}
Therefore,
\begin{align*}
    \langle g^{1}, g^k\rangle &= \mathbb{E}[Yg^1g^k] = \mathbb{E}[Yg^{k+1}]
\end{align*}
Finally, 
\begin{align*}
    \mathbb{E}[Yg^mg^n]&=\langle g^{m}, g^n\rangle = \langle g^1, g^{m+n-1}\rangle
    = \mathbb{E}[Yg^{m+n}].
\end{align*}
\end{proof}

\begin{lemma}\label{lemma: OFT reward is non-negative for t>1}
    Under the OFT rule, for all $h \geq 1$ the reward difference $\Delta r^{0,h} = b \Delta m^{0,h} \geq 0$.
\end{lemma}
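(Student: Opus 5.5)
By Lemma \ref{lemma: recursion simplification oft, k=0}, for every $h\geq 1$ we have $\Delta m^{0,h} = x^{h-1}\mathbb{E}[Yg^h(Y)]$. Since $x\in[0,1]$ and $b>0$, the factor $b\,x^{h-1}$ is non-negative. So it suffices to show that $\mathbb{E}[Yg^h(Y)]\geq 0$ for all $h\geq 1$. The plan is to express this quantity as a ``square'' in the weighted inner product $\langle f,g\rangle=\mathbb{E}[Yfg]$ of Equation \eqref{inner product}. This inner product is positive semi-definite because $Y\in[0,1]$ implies $Y\geq 0$, so $\langle f,f\rangle=\mathbb{E}[Yf^2]\geq 0$.

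For the even case $h=2n$ with $n\geq 1$, apply Lemma \ref{lemma: gm,gn is g^m+n} with $m=n$:
\begin{align*}
\mathbb{E}[Yg^{2n}] = \mathbb{E}[Yg^ng^n] = \mathbb{E}[Y(g^n)^2]\geq 0.
\end{align*}

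For the odd case $h=2n+1$ with $n\geq 1$, write $h=n+(n+1)$. Then Lemma \ref{lemma: gm,gn is g^m+n} gives $\mathbb{E}[Yg^{2n+1}]=\langle g^n,g^{n+1}\rangle=\langle g^n,\mathcal{T}g^n\rangle$. By the definition of $\mathcal{T}$,
\begin{align*}
\langle f,\mathcal{T}f\rangle = \mathbb{E}[Y^2f^2]-\big(\mathbb{E}[Yf]\big)^2 .
\end{align*}
To show this is non-negative, apply Cauchy--Schwarz to $Yf = Y\cdot f$ against the probability measure $\rho$:
\begin{align*}
\big(\mathbb{E}[Yf]\big)^2\leq \mathbb{E}[Y^2f^2]\,\mathbb{E}[1]=\mathbb{E}[Y^2f^2].
\end{align*}
Equivalently, $\langle f,\mathcal{T}f\rangle=\mathrm{Var}(Yf(Y))\geq 0$. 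Hence the odd terms are non-negative as well.

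The remaining base case is $h=1$, which is not covered by the odd case since it would require $n=0$. Here $\mathbb{E}[Yg^1]=\mu_2-\mu_1^2=\mathrm{Var}(\rho)\geq 0$ directly.

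The only delicate point is the odd case. Splitting $2n+1$ into two equal halves is impossible, so I pair $g^n$ with $\mathcal{T}g^n$ and need positivity of the quadratic form of $\mathcal{T}$. The key observation is that this form is exactly a variance. This uses nothing beyond $Y\geq 0$ and the fact that $\rho$ is a probability measure, so I expect it to close the argument. Combining the three cases gives $\Delta r^{0,h}=b\,x^{h-1}\mathbb{E}[Yg^h]\geq 0$ for all $h\geq1$.
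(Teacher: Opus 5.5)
Your proposal is correct and follows the paper's proof essentially step for step: you handle the even case $h=2l$ by taking $m=n=l$ in Lemma \ref{lemma: gm,gn is g^m+n}, and the odd case $h=2l+1$ by pairing $g^l$ with $g^{l+1}=\mathcal{T}g^l$, which gives $\mathbb{E}[(Yg^l)^2]-\mathbb{E}[Yg^l]^2\geq 0$. Your separate treatment of $h=1$ is a small tidying-up, since the paper's odd case with $l=0$ tacitly uses a $g^0$ (namely $g^0\equiv 1$) that it never defines.
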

\begin{proof}
    We will separate into odd and even cases. First, let $m=n=l$ in Lemma \ref{lemma: gm,gn is g^m+n}, then
    \begin{align*}
        \Delta m^{0,2l} &= x^{2l-1}\mathbb{E}[Y g^{2l}] = x^{2l-1}\mathbb{E}[Y g^{l}g^l] =  x^{2l-1}\mathbb{E}[Y (g^l)^2] \geq 0
    \end{align*}
    Likewise, let $m=l$ and $n=l+1$. By Lemma \ref{lemma: gm,gn is g^m+n}
        \begin{align*}
        \Delta m^{0,2l+1} &= x^{2l}\mathbb{E}[Y g^{2l+1}]\\
        &= x^{2l}\mathbb{E}[Y g^{l}g^{l+1}]\\
        &= x^{2l}\mathbb{E}[Y g^{l}(Yg^l-\mathbb{E}[Yg^l])]\\
        &=x^{2l}\Big(\mathbb{E}[(Y g^{l})^2] - \mathbb{E}[Y g^{l}]^2\Big)
        \geq 0
    \end{align*}
    Hence, since $b>0$, $\Delta r^{0,h}\geq 0$ for all $h \geq 1$.
\end{proof}
Here, we show how the same result applies to ROFT. 
\begin{lemma}\label{lemma: ROFT reward is non-negative for t>1}
    Under the ROFT rule, for all $h \geq 1$ the reward difference $\Delta r^{0,h} = b \Delta m^{0,h} \geq 0$.
\end{lemma}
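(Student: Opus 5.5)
The plan is to reduce ROFT to the OFT computation by the symmetry $y \mapsto 1-y$, $x\mapsto 1-x$. Under ROFT the pair stays iff both defect, so with $\tilde Y := 1-Y$ and $\tilde x := 1-x$ the staying probability is $\tilde x\tilde y$, exactly the OFT form in the reflected variables.

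\emph{Step 1: the recursion.} Set $\Delta\rho^{0,h} = \rho^h(y|x,C^0)-\rho^h(y|x,D^0)$. From the ROFT update in Table \ref{table: iterative form and rewards for PS rules},
\begin{align*}
\Delta\rho^{0,h+1}(y) = (1-x)(1-y)\Delta\rho^{0,h}(y) - (1-x)\rho(y)\,\mathbb{E}_{\Delta\rho^{0,h}}[1-Y],
\end{align*}
where $\mathbb{E}_{\Delta\rho^{0,h}}[1-Y] = \int(1-y)\Delta\rho^{0,h}(y)\,dy = -\Delta m^{0,h}$, since $\Delta\rho^{0,h}$ has zero total mass. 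From Table \ref{table: condition partner distribution ROFT}, the initial step is $\Delta\rho^{0,1}(y) = (y-\mu_1)\rho(y) = -(\tilde y - \tilde\mu_1)\rho(y)$, with $\tilde\mu_1 = 1-\mu_1$.

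\emph{Step 2: transfer Lemma \ref{lemma: recursion simplification oft, k=0}.} Define $\tilde g^1(\tilde y) = \tilde y-\tilde\mu_1$ and $\tilde g^{h} = \tilde{\mathcal T}\tilde g^{h-1}$, where $\tilde{\mathcal T}g = \tilde Y g - \mathbb{E}[\tilde Y g]$; this is $\mathcal T$ for the reflected random variable $\tilde Y\in[0,1]$. The same induction then gives
\begin{align*}
\Delta\rho^{0,h} = -\tilde x^{h-1}\tilde g^h\rho, \qquad \Delta m^{0,h} = \int y\,\Delta\rho^{0,h}\,dy = -\int (1-\tilde y)\,\Delta\rho^{0,h}\,dy = \int\tilde y\,\Delta\rho^{0,h}\,dy,
\end{align*}
using zero mass again. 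Hence $\Delta m^{0,h} = -\tilde x^{h-1}\mathbb{E}[\tilde Y\tilde g^h(\tilde Y)]$.

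\emph{Step 3: the sign.} This is the step I expect to be the main obstacle. Lemmas \ref{lemma: operator is self adjoint}, \ref{lemma: gm,gn is g^m+n} and \ref{lemma: OFT reward is non-negative for t>1} apply verbatim to $\tilde Y$, since they only use $\tilde Y\geq 0$. They give $\mathbb{E}[\tilde Y\tilde g^h]\ge 0$, so the sign I obtain is the wrong one. The fix is to recheck the sign bookkeeping. Since $\mathbb{E}_{\Delta\rho}[1-Y] = -\Delta m$, the feedback term is $+(1-x)\rho\,\Delta m^{0,h}$, not the $-x\rho\,\Delta m^{0,h}$ of OFT. Writing $\Delta\rho^{0,h} = -\tilde x^{h-1}\phi^h\rho$ with $\phi^1 = \tilde y-\tilde\mu_1$, I would verify directly that $\phi^{h+1} = \tilde y\phi^h - \mathbb{E}[\tilde Y\phi^h]$. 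If that holds, $\phi^h=\tilde g^h$, and I would then recompute $\Delta m^{0,h}$ carefully as $\int y\,\Delta\rho$. Reflection flips $\Delta\rho$'s sign and $\int y\,\cdot = -\int \tilde y\,\cdot$ flips it again, so $\Delta m^{0,h} = \tilde x^{h-1}\mathbb{E}[\tilde Y\tilde g^h]$. As a check, at $h=1$ this equals $\mathrm{Var}(\tilde Y) = \mu_2-\mu_1^2$, matching Table \ref{table: condition partner distribution ROFT}.

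The odd/even argument of Lemma \ref{lemma: OFT reward is non-negative for t>1} then gives $\mathbb{E}[\tilde Y(\tilde g^l)^2]\ge0$ and $\mathbb{E}[(\tilde Y\tilde g^l)^2]-\mathbb{E}[\tilde Y\tilde g^l]^2\ge0$. Since $b>0$, it follows that $\Delta r^{0,h} = b\,\Delta m^{0,h}\ge 0$.
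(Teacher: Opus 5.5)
Your argument is correct and is essentially the paper's proof. The paper writes $\Delta\rho^{0,h}=(1-x)^{h-1}f^h\rho$ and sets $Z=1-Y$, $\phi^h(z)=-f^h(1-z)$, which is exactly your $\Delta\rho^{0,h}=-\tilde x^{h-1}\phi^h(\tilde y)\rho$; this recovers the OFT recursion and gives $\Delta m^{0,h}=(1-x)^{h-1}\mathbb{E}[Z\phi^h(Z)]\ge 0$. In the write-up, drop the Step 2 formula for $\Delta m^{0,h}$, which relies on the false identity $\int y\,\Delta\rho^{0,h}\,dy=-\int(1-\tilde y)\,\Delta\rho^{0,h}\,dy$ (zero mass gives $\int y\,\Delta\rho^{0,h}\,dy=-\int\tilde y\,\Delta\rho^{0,h}\,dy$), and keep only your Step 3 computation, which has the right sign.
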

\begin{proof}
Under the ROFT partner selection rule,
\begin{align*}
    \Delta\rho^{0,h+1}(y) = (1-x)(1-y)\Delta\rho^{0,h}(y)  + (1-x)\rho(y)\Delta m^{0,h}(x)
\end{align*}
and using the same inductive proof in Lemma \ref{lemma: recursion simplification oft, k=0}, we get
\begin{align*}
    \Delta\rho^{0,h}(y) &= (1-x)^{h-1}f^h(y)\rho(y)\\
    \Delta m^{0,h} &= (1-x)^{h-1}\mathbb{E}[Yf^h(Y)]
\end{align*}
where $f^h(y) = (1-y)f^{h-1}(y) + \mathbb{E}[Yf^{h-1}(Y)]$, with $f^1(y) = y - \mu_1$ and $\mathbb{E}[f^h(Y)] =0$. This means we can write
\begin{align*}
    f^h(y) &= (1-y)f^{h-1}(y) - \mathbb{E}[(1-Y)f^{h-1}(Y)].
\end{align*}
Now consider the transformation $Z = 1-Y$, and the function $\phi^h(z)= -f^h(1-z)$. Then,
\begin{align*}
    \phi^{h}(z) = z\phi^{h-1}(z) - \mathbb{E}[Z\phi^{h-1}(Z)],\quad \phi^1 = z - \mathbb{E}[Z]
\end{align*}
this is the exact same recursion defined for the OFT partner selection rule, and therefore the result of Lemma \ref{lemma: OFT reward is non-negative for t>1} follows. Concretely,
\begin{align*}
    \Delta m^{0,h} &= (1-x)^{h-1}\mathbb{E}[Yf^h(Y)]\\
    &= (1-x)^{h-1}\mathbb{E}[Z\phi^h(Z)] \geq 0.
\end{align*}
Since $b>0$, the result follows.
\end{proof}
We now consider the general case of conditioning on the action at step $k$. Note that by dividing through by $\rho(y)$, the one-step update for OFT becomes
\begin{align*}
    q^{h+1}(y|x) &= xy q^h(y|x) + (1-xm^h(x)),\quad m^h(x) = \mathbb{E}[Yq^h(Y|x)]
\end{align*}
This distribution can then be expressed in terms of the recursive function, $g^h(y)$.
\begin{lemma}
    For all $h$, the distribution $q^h(y|x)$ is given by 
    \begin{align*}
        q^h(y|x) &= 1 + \sum^{h}_{j=1}x^jg^j(y)
    \end{align*}
\end{lemma}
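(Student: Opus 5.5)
We argue by induction on $h$, treating $q^h(y|x)=\rho^h(y|x)/\rho(y)$ as the relative density of the opponent at step $h$ with initial condition $q^0\equiv 1$. The base case $h=0$ is immediate, since the sum is empty and $q^0=1$ (we read $h\ge 0$, with the convention that $g^j$ is given by Equation \eqref{eqn: g iterative definition} starting from $g^1(y)=y-\mu_1$). As a sanity check, for $h=1$ the recursion gives
\begin{align*}
q^1(y|x)=xy+1-x\mu_1=1+xg^1(y),
\end{align*}
which matches the claim.

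For the inductive step, assume $q^h(y|x)=1+\sum_{j=1}^h x^j g^j(y)$. The first task is the opponent mean. Linearity gives
\begin{align*}
m^h(x)=\mathbb{E}[Yq^h(Y|x)]=\mu_1+\sum_{j=1}^h x^j\,\mathbb{E}[Yg^j(Y)].
\end{align*}
Substituting into the one-step update yields
\begin{align*}
q^{h+1}(y|x)&=xy+\sum_{j=1}^h x^{j+1}yg^j(y)+1-x\mu_1-\sum_{j=1}^h x^{j+1}\mathbb{E}[Yg^j(Y)]\\
&=1+x(y-\mu_1)+\sum_{j=1}^h x^{j+1}\big(yg^j(y)-\mathbb{E}[Yg^j(Y)]\big).
\end{align*}
By the definition of $g^1$ and the recursion $g^{j+1}=\mathcal{T}g^j$, this equals $1+xg^1(y)+\sum_{j=2}^{h+1}x^jg^j(y)$. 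That is exactly the claimed form at $h+1$, which closes the induction.

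This is purely algebraic bookkeeping: collect the constant terms, which produce $1-x\mu_1$, pair the $xy$ term with $-x\mu_1$ to form $xg^1$, and reindex the sum. The only point needing care is the convention on the starting index and initial condition. The statement's \emph{for all $h$} requires $q^0\equiv 1$, that is, the episode starts from the population distribution $\rho$. As a consistency check, the result should agree with Lemma \ref{lemma: recursion simplification oft, k=0}. Conditioning on cooperating at step $0$ amounts to starting the recursion with the $x=1$ step, and the difference against defection recovers $x^{h-1}g^h\rho$. Verifying this agreement confirms the indexing.
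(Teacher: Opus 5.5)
Your proof is correct and follows the paper's argument essentially line for line. Both argue by induction from $q^0\equiv 1$, expand $m^h(x)$ by linearity, pair $xy$ with $-x\mu_1$ to form $xg^1(y)$, and reindex the remaining sum via $g^{j+1}=\mathcal{T}g^j$; your closing consistency check against the $k=0$ difference lemma is not needed but is accurate.
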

\begin{proof}
    We proceed by induction. Let $h=0$, then 
    \begin{align*}
        q^0(y|x) &= \frac{\rho^0(y|x)}{\rho(y)} = 1.
    \end{align*}
    Assume that the statement holds for $h=k$. Then for $h=k+1$,
    \begin{align*}
        q^{k+1}(y|x)&= xy q^{k}(y|x) + 1-xm^k(x)\\
        &= xy \Big(1+\sum_{j=1}^kx^jg^j(y)\Big)+1-x\mathbb{E}\bigg[Y\Big(1+\sum_{j=1}^kx^jg^j(Y)\Big)\Big]\\
        &= 1+ x(y-\mu_1) + \sum_{j=1}^kx^{j+1}\Big(yg^j(y)-\mathbb{E}[Yg^j(Y)]\Big)\\
        &= 1 + xg^1(y) + \sum_{j=1}^k x^{j+1}g^{j+1}
        (y)\\
        &= 1 + \sum_{j=1}^{k+1} x^jg^j(y)
    \end{align*}
\end{proof}

\subsubsection*{Proof of Proposition \ref{prop: OFT and ROFT have r^h_k geq 0}}
\begin{proof}
Consider the $k$th step of the REINFORCE algorithm. Then the subsequent update under the OFT rule is,
\begin{align*}
    \Delta \rho^{k,k+1} &= y\rho^k(y|x) - m^k(x) \rho(y)\\
    g^{k,k+1} &= y q^k(y|x) - m^k(x)\\
    &= y \bigg(1 + \sum^{k}_{j=1}x^jg^j(y) \bigg) - \bigg(\mu_1  + \sum^{k}_{j=1}x^j\mathbb{E}[Yg^j(Y)]\bigg)\\
    &= y- \mu_1 + \sum^{k}_{j=1}x^j\Big(yg^j(y)-\mathbb{E}[Yg^j(Y)]\Big)\\
    &= y-\mu_1 + \sum^k_{j=1} x^j g^{j+1}(y)\\
    &=\sum_{j=0}^k x^jg^{j+1}(y)
\end{align*}
Then for $h>k+1$, we have
\begin{align*}
    \Delta \rho^{k,h+1}(y) &= xy \Delta \rho^{k,h}(y) -x\rho(y)\Delta m^{k,h}  
\end{align*}
and therefore the same operator $\mathcal{T}$ applies. Consequently, 
\begin{align*}
    g^{k,h}(y) &= \mathcal{T}^{h-k-1} g^{k,k+1}(y)\\
    &= \sum_{j=0}^k x^j \mathcal{T}^{h-k-1}g^{j+1}(y)\\
    &= \sum_{j=0}^k x^j g^{h-k+j}(y)
\end{align*}
The difference in the conditional means at this step is then
\begin{align*}
    \Delta m^{k,h} &= x^{h-k-1}\mathbb{E}[Yg^{k,h}(Y)]\\
    &= x^{h-k-1} \mathbb{E}[Y\sum_{j=0}^k x^j g^{h-k+j}(Y)]\\
    &= \sum_{j=0}^k x^{h-k+j-1} \mathbb{E}[Y g^{h-k+j}(Y)]\\
    &= \sum_{j=0}^k \Delta m^{0,h-k+j} \geq 0
\end{align*}
where the final line follows by Lemma \ref{lemma: OFT reward is non-negative for t>1}. Note that the proof has an identical structure for the ROFT rule, invoking Lemma \ref{lemma: ROFT reward is non-negative for t>1} in the final step.
\end{proof}

\subsubsection*{Proof of Corollary \ref{cor: H=2 is sufficient}}
\begin{proof}
    For each step $k$ of the REINFORCE algorithm, we obtain a reward difference of $\Delta r^{k,k}=-c$. The future rewards must compensate for this. Consider the actions conditioned at step $k$. By the proof of Proposition \ref{prop: OFT and ROFT have r^h_k geq 0}, at step $k+1$ 
    \begin{align*}
        \Delta r^{k,k+1}(x) &= b \Delta m^{k,k+1}(x)\\
        &= b \sum_{j=0}^k \Delta m^{0,j+1} \\
        &\geq b\Delta m^{0,1}.
    \end{align*}
    Since we have $\Delta m^{0,1} = \text{Var}(\rho)$, and for all $h>k+1$ the reward difference $\Delta r^{k,h}(x) \geq 0$, then 
    \begin{align*}
        \sum^{H-1}_{h=0} \Delta G^h[\rho] &= \sum^{H-1}_{k=0} \sum_{h=k}^{H-1} \Delta r^{k,h}\\ &\geq \sum^{H-1}_{k=0}  \Big(\Delta r^{k,k} + \Delta r^{k,k+1}\Big)\\ &\geq  \sum^{H-1}_{k=0}  \Big(-c + b\text{Var}(\rho)\Big) \\&\geq (H-1)(b\text{Var}(\rho) - c) -c.
    \end{align*}
    which is increasing in $H$ when $\Delta G[\rho] >0$ for $H=2$.
\end{proof}

\subsection{Mean Dynamics}

\subsubsection*{Proof of Theorem \ref{thm: convergence to pure defection}}
\begin{proof}
    The characteristic flow is generated by 
    \begin{align*}
        \frac{d}{dt}X_t(x) = -4\alpha cX^2_t(x)(1-X_t(x))^2, \quad X_0(x) = x_0,
    \end{align*}
    This is a separable ODE, which integrating gives
    \begin{align*}
         \int \frac{1}{X^2(1-X)^2} dX &= \int -4\alpha c\; dt\\
          \frac{1}{1-X_t}-\frac{1}{X_t}+ 2 \ln |\frac{X_t}{1-X_t}| &= -4\alpha ct + C
    \end{align*}
    Let 
    \begin{align*}
            F(x) &= \frac{1}{1-x}-\frac{1}{x}+ 2 \ln |\frac{x}{1-x}|
    \end{align*}
    then the solution at time $t$ is given by
    \begin{align*}
        F(X_t) = F(x_0) -4\alpha ct.
    \end{align*}
    For $c>0$ and $\alpha>0$, then as $t\rightarrow\infty$ we have $F(X_t) \rightarrow -\infty$ for every $x_0 \in (0,1)$. Since $F(x) \rightarrow-\infty$ as $x \rightarrow0$, we have that $X_t \rightarrow 0$.  Since the velocity field $u(x)=-4\alpha cx^2(1-x)^2\in C^1([0,1])$, the solution to the continuity equation is given by the pushforward of the initial point under the characteristic flow:
    \begin{align*}
        \rho(t,\cdot)=(X_t)_\#\rho_0.
    \end{align*}
    Then for any bounded test function $\varphi\in C([0,1])$,
    \begin{align*}
        \int_0^1 \varphi(y)\rho(t,y)\; dy = \int_0^1\varphi(X(t,x))\rho_0(x)dx
    \end{align*}
    Since $X(t,x) \rightarrow 0$ pointwise for all $x\in (0,1)$, then $\varphi(X(t,x))\rightarrow\varphi(0)$. Moreover,
    \begin{align*}
        |\varphi(X(t,x))\rho_0(x)|&\leq \|\varphi\|_\infty|\rho_0(x)|
    \end{align*}
    By the dominated convergence theorem,
    \begin{align*}
        \int_0^1\varphi(X(t,x))\rho_0(x)dx \rightarrow\varphi(0) \int_0^1\rho_0(x) dx = \varphi(0)
    \end{align*}
    which means
    \begin{align*}
        \int_0^1 \varphi(y)\rho(t,y)\; dy \rightarrow \varphi(0)
    \end{align*}
    and therefore $\rho(t) \rightharpoonup \delta_0$.  
\end{proof}

\begin{lemma}\label{lem: advection spatially lipschitz}
    The non-local advection is spatially Lipschitz and sub-linear. That is, there exists an $L_1, M \in \mathbb{R}$ such that 
    \begin{align*}
        |u[\rho]( x) - u[\rho](y)| \leq L_1| x-y|,\qquad |u[\rho]( x)|\leq M(1+| x|).
    \end{align*}
\end{lemma}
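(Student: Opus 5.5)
The approach is to separate the velocity into a scalar non-local factor and a fixed smooth profile,
\begin{align*}
    u[\rho](x) = 2\alpha\,\Delta G[\rho]\,p(x),\qquad p(x) := x^2(1-x)^2,
\end{align*}
and then bound each factor on its own. The key observation is that $\Delta G[\rho]$ depends on $\rho$ only through finitely many moments $\mu_l=\int_0^1 y^l\rho(y)\,dy$, and these do not depend on the spatial variable $x$. For $H=2$ we have $\Delta G[\rho]=b\,\text{Var}(\rho)-2c$ exactly. For general $H$, the recursion of Lemma~\ref{lemma: recursion simplification oft, k=0} and the proof of Proposition~\ref{prop: OFT and ROFT have r^h_k geq 0} express each $\Delta r^{k,h}$ through quantities $x^{j}\mathbb{E}[Yg^{j}(Y)]$. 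The $x$-dependence there is polynomial in the focal strategy, and by the paper's convention it is absorbed into the coefficient $\Delta G[\rho]$. I will argue primarily in the $H=2$ case stated in the continuity equation, and remark that the general case only adds a bounded polynomial factor in $x$, which is handled the same way.

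First, I bound the scalar factor uniformly in $\rho$. Since $\rho\in\mathcal{P}([0,1])$, every moment satisfies $0\le\mu_l\le 1$, and hence $0\le \text{Var}(\rho)\le \tfrac14$. This gives
\begin{align*}
    |\Delta G[\rho]|\le \tfrac{b}{4}+2c =: C_G .
\end{align*}
For larger $H$, a similar bound holds with $C_G$ depending on $H$, $b$ and $c$, because each $g^j$ is a polynomial in $y$ with bounded coefficients on $[0,1]$.

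Second, I handle the spatial profile. The function $p$ is a polynomial, so $p\in C^1([0,1])$ with $p'(x)=2x(1-x)(1-2x)$. On $[0,1]$ we have $|p'|\le \sup_{[0,1]}|2x(1-x)(1-2x)|\le \tfrac{1}{2}$, which can be crudely bounded by $1$. The mean value theorem then gives $|p(x)-p(y)|\le |x-y|$, and so we may take
\begin{align*}
    L_1 = 2\alpha C_G \sup|p'|.
\end{align*}
For sub-linearity, $|p(x)|\le \tfrac1{16}$, so $|u[\rho](x)|\le 2\alpha C_G/16\le M(1+|x|)$ with $M=\alpha C_G/8$. If one wants the bounds off $[0,1]$ as well, for instance to apply standard ODE theory on $\mathbb{R}$, I would extend $p$ by zero outside $[0,1]$. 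This preserves the Lipschitz bound because $p$ and $p'$ vanish at the endpoints.

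The main obstacle is making the uniformity in $\rho$ explicit for general $H$, since $\Delta G$ then involves iterated functions $g^j$. I would handle this by induction on the recursion~\eqref{eqn: g iterative definition}: $\|g^{j}\|_\infty\le 2^{j}$ on $[0,1]$, because $|yg^{j-1}|\le\|g^{j-1}\|_\infty$ and $|\mathbb{E}[Yg^{j-1}]|\le\|g^{j-1}\|_\infty$. Summing over the finitely many terms then gives a constant depending only on $H$, $b$ and $c$. The resulting constants are independent of $\rho$, which is what is needed later for Proposition~\ref{prop: existence to continuity equation}.
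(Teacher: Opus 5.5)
Your proposal is correct and follows essentially the same route as the paper. You factor $u[\rho](x)=2\alpha\,\Delta G[\rho]\,x^2(1-x)^2$, apply the mean value theorem with the bounded derivative $2x(1-x)(1-2x)$ and a uniform bound on $|\Delta G[\rho]|$, and get sub-linearity from $x^2(1-x)^2\le \tfrac{1}{16}$. The explicit constant $C_G=\tfrac{b}{4}+2c$, the $2^j$ bound for general $H$, and the extension by zero are extra care; the paper replaces all of these with the single remark that episode length and per-round rewards are bounded.
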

\begin{proof}
    The spatial derivative is 
    \begin{align*}
        \partial_ x u[\rho]( x) = 4\alpha \Delta G[\rho]\cdot x(1-x)(1-2x)
    \end{align*}
    which is uniformly bounded by $\Delta G[\rho]$ on $[0,1]$. Therefore by the mean value theorem,
    \begin{align*}
        |u[\rho]( x) - u[\rho](y)| &\leq \text{sup}_{z}  |\partial_x u[\rho](z)| \;| x-y|\\
        &\leq 2\alpha|\Delta G[\rho]|| x-y|
    \end{align*}
    Since the episode length and per-game reward are bounded, $|\Delta G[\rho]|$ is uniformly bounded. Therefore exists an $L_1$ such that $u[\rho](y)$ is spatially Lipschitz with constant $L_1$. The condition for sub-linearity follows from 
    \begin{align*}
        |u[\rho]( x)| &= | 2\alpha\Delta G[\rho] x^2(1- x)^2|\\
        &\leq2\alpha|\Delta G[\rho]|| x^2(1- x)^2|\\
        &\leq \frac{\alpha}{8}|\Delta G[\rho]| 
    \end{align*}
    which is uniformly bounded.
\end{proof}

\begin{lemma}\label{lem: wasserstein}
    For every $\rho_1,\rho_2 \in \mathcal{P}$, there exists a positive constant $L_2 \in \mathbb{R}$ such that 
    \begin{align*}
        \|u[\rho_1](\cdot) - u[\rho_2](\cdot)\|_{\mathcal{C}([0,1])} \leq L_2 W_1(\rho_1,\rho_2).
    \end{align*}
\end{lemma}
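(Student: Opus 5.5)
Since $u[\rho](x) = 2\alpha\,\Delta G[\rho]\,x^2(1-x)^2$ factorises into a $\rho$-dependent scalar times a fixed spatial profile, the plan is to reduce the claim to a Lipschitz estimate on the scalar $\Delta G$. First,
\begin{align*}
    \|u[\rho_1](\cdot) - u[\rho_2](\cdot)\|_{\mathcal{C}([0,1])} = 2\alpha\,|\Delta G[\rho_1]-\Delta G[\rho_2]|\sup_{x\in[0,1]}x^2(1-x)^2 = \frac{\alpha}{8}|\Delta G[\rho_1]-\Delta G[\rho_2]|.
\end{align*}
So it suffices to show $|\Delta G[\rho_1]-\Delta G[\rho_2]|\leq L\,W_1(\rho_1,\rho_2)$, and then take $L_2=\alpha L/8$.

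For $H=2$, the mean-dynamics section gives $\Delta G[\rho] = b(\mu_2-\mu_1^2)-2c$, so the problem becomes controlling differences of moments. Writing $\mu_l^{(i)}=\int y^l\,d\rho_i$, we use Kantorovich--Rubinstein duality on $[0,1]$: any $1$-Lipschitz $\varphi$ satisfies $|\int\varphi\,d\rho_1-\int\varphi\,d\rho_2|\leq W_1(\rho_1,\rho_2)$. Since $y\mapsto y^l$ is $l$-Lipschitz on $[0,1]$, this yields $|\mu_l^{(1)}-\mu_l^{(2)}|\leq l\,W_1$. For the square of the mean,
\begin{align*}
    |(\mu_1^{(1)})^2-(\mu_1^{(2)})^2| = |\mu_1^{(1)}+\mu_1^{(2)}|\,|\mu_1^{(1)}-\mu_1^{(2)}|\leq 2W_1,
\end{align*}
because all means lie in $[0,1]$. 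Combining the two gives $|\Delta G[\rho_1]-\Delta G[\rho_2]|\leq 4b\,W_1$, so $L_2=\alpha b/2$ works.

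For general $H$, the route is to show that $\Delta G[\rho]$ is a polynomial in finitely many moments $\mu_1,\dots,\mu_{H}$. This follows from the recursion for $g^h$ in Lemma~\ref{lemma: recursion simplification oft, k=0}: $g^h$ is a polynomial in $y$ of degree $h$ whose coefficients are polynomials in the moments, and $\Delta m^{k,h}$ is a sum of terms $x^{j}\,\mathbb{E}[Yg^{j}(Y)]$ as in the proof of Proposition~\ref{prop: OFT and ROFT have r^h_k geq 0}. The same holds for ROFT via the reflection $Z=1-Y$. A polynomial in moments that each lie in $[0,1]$ is Lipschitz on that box with respect to the moment vector, and each moment is $W_1$-Lipschitz by the previous paragraph, so the composition gives the bound. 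The constant depends on $b$, $H$ and $\alpha$.

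I expect the main obstacle to be a subtlety in the general-$H$ case: $\Delta G$ there depends on the focal strategy $x$ through the factors $x^j$. In that case $u[\rho](x)$ is no longer a scalar times $x^2(1-x)^2$, and the supremum must be taken jointly over $x$. Because $x\in[0,1]$, these factors are bounded by $1$ and can be absorbed into the constant, so the bound still holds uniformly in $x$. Everything else reduces to Kantorovich--Rubinstein duality plus boundedness of moments on $[0,1]$.
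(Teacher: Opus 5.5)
Your proposal is correct and follows the paper's proof essentially line for line for $H=2$. You pull out $2\alpha\sup_{x}x^2(1-x)^2=\alpha/8$, then bound $|\text{Var}(\rho_1)-\text{Var}(\rho_2)|\le 4W_1(\rho_1,\rho_2)$ using Kantorovich--Rubinstein duality with the Lipschitz functions $y^l$ and the factorisation of the difference of squared means. Your constant $L_2=\alpha b/2$ is what this computation actually yields (the paper's stated $L_2=b/4$ is a slip), and your sketch for general $H$ goes beyond what the paper proves; that sketch absorbs the $x^j\le 1$ factors and uses that $\Delta G$ is a polynomial in $\mu_1,\dots,\mu_H$.
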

\begin{proof}
    \begin{align*}
        |u[\rho_1](x) - u[\rho_2](x)| &= |2\alpha\Delta G[\rho_1]x^2(1-x)^2 -2\alpha\Delta G[\rho_2]x^2(1-x)^2|\\
        &= 2\alpha|x^2(1-x)^2 (\Delta G[\rho_1]-\Delta G[\rho_2])|\\
        &\leq \frac{\alpha}{8}|\Delta G[\rho_1]-\Delta G[\rho_2]|
    \end{align*}
    Therefore, 
    \begin{align*}
    \|u[\rho_1](\cdot) - u[\rho_2](\cdot)\|_{C^0([0,1])} \leq \frac{\alpha}{8}|\Delta G[\rho_1]-\Delta G[\rho_2]|
    \end{align*}
    It suffices to show that 
    \begin{align*}
    |\Delta G[\rho_1]-\Delta G[\rho_2]|  &\leq C W_1(\rho_1,\rho_2)
    \end{align*}
    for some constant $C \in \mathbb{R}$. Using the Kantorovich-Rubinstein duality \cite{villani}, the Wasserstein distance with $p=1$ can be expressed as 
    \begin{align*}
        W_1(\rho_1,\rho_2) = \sup \bigg\{\int_{[0,1]} f(x)d(\rho_1-\rho_2)\; \big|\; f \in \mathcal{C}^1,\; f: [0,1] \rightarrow\mathbb{R},\; \text{Lip}(f) \leq 1\bigg\}.
    \end{align*}
    For any integer $i$, the $i$th moment is 
    \begin{align*}
        m_i &= \int_0^1 x^i d\rho(x)
    \end{align*}
    Let $f(x) = \frac{x^i}{i}$, then $f$ is continuous and has Lipschitz constant 1. By the Wasserstein metric, 
    \begin{align*}
        \frac{1}{i}|m_i(\rho_1) - m_i(\rho_2)| \leq W_1(\rho_1,\rho_2).
    \end{align*}
    
    Therefore,
    \begin{align*}
       |\Delta G[\rho_1] - \Delta G[\rho_2]| &= b|\text{Var}(\rho_1) - \text{Var}(\rho_2)|\\
       &\leq b|m_2(\rho_1) - m_2(\rho_2)| + b |m_1(\rho_1)^2 - m_1(\rho_2)^2|\\
       &\leq 2b W_1(\rho_1,\rho_2) + b|m_1(\rho_1)+m_1(\rho_2)||m_1(\rho_1)-m_1(\rho_2)|\\
       &\leq 2bW_1(\rho_1,\rho_2) + b \cdot 2 \cdot W_1(\rho_1,\rho_2)\\
       &= 4bW_1(\rho_1,\rho_2)
    \end{align*}
    Therefore the result holds with $L_2 = \frac{b}{4}$.
\end{proof}

\subsubsection{Proof of Proposition \ref{prop: existence to continuity equation}}
\begin{proof}
    By Theorem 2 in \cite{bonnet_pontryagin_2019}, it suffices to show the conditions of Lemma \ref{lem: advection spatially lipschitz} and Lemma \ref{lem: wasserstein} are satisfied. These require that $\Delta G[\rho]$ is bounded, which is always satisfied for $\rho_0 \in \mathcal{P}([0,1)]$.
\end{proof}

The characteristic ODE admits a solution $X_t$
\begin{align*}
    \frac{d}{dt}X_t&=2\alpha\Delta G[\rho]X_t^2(1-X_t)^2\\
    \int\frac{1}{X^2(1-X)^2}dX&=2\alpha\int \Delta G[\rho]dt\\
    F(X_t) &= F(x_0) + 2\alpha K(t) 
\end{align*}
where $F(x)$ is defined in the proof of Theorem \ref{thm: convergence to pure defection}. Clearly, the time evolution is uniquely determined by $K$. Since $F'(x)>0$ for all $x \in (0,1)$, $F$ is strictly increasing and therefore invertible on the domain. Therefore the solution can be expressed by
\begin{align*}
    X_K(x) = F^{-1}(F(x_0) + 2\alpha K).
\end{align*}
The next result proves that under this flow, the cooperation levels increase when the initial population has sufficient variance.
\subsubsection*{Proof of Theorem \ref{thm: convergence to cooperation}}
\begin{proof}
    Define the autonomous equations
    \begin{align*}
        K'&= h(K), \\
        h(K) &= b \text{Var}\big((X_K)_\#\rho_0\big) - 2c
    \end{align*}
    The initial condition means
    \begin{align*}
        h(0) &= b\text{Var}(\rho_0)-2c > 0
    \end{align*}
    Note that for every $x>0$
    \begin{align*}
        \lim_{K\rightarrow\infty} X_K(x) \rightarrow1, \quad \partial_K X_K = 2\alpha X_K^2(1-X_K)^2 \leq \alpha/8,
    \end{align*}
    therefore $X_K$ is uniformly Lipschitz in $K$ and bounded on $[0,1]$. In particular,
    \begin{align*}
        |X_{K_1}(x) -X_{K_2}(x)| \leq \frac{\alpha}{8}|K_1-K_2|,\quad \forall x \in [0,1].  
    \end{align*}
    For $X \sim \rho_0$, the first moment is
    \begin{align*}
        |m_1(K_1) -m_1(K_2)| &= |\mathbb{E}[X_{K_1}(X)] -\mathbb{E}[X_{K_2}(X)]|\\
        &\leq \mathbb{E}[|X_{K_1}(X) -X_{K_2}(X)]|\\
        &\leq \frac{\alpha}{8}|K_1-K_2|.
    \end{align*}
    It follows that the second moment is Lipschitz:
    \begin{align*}
        |m_2(K_1) -m_2(K_2)| &= |\mathbb{E}[X_{K_1}^2(X)] -\mathbb{E}[X^2_{K_2}(X)]|\\
        &= |\mathbb{E}[X_{K_1}^2(X) -X^2_{K_2}(X)]|\\
        &\leq \mathbb{E}[|X_{K_1}^2(X) -X^2_{K_2}(X)|]\\
        &\leq \mathbb{E}[2|X_{K_1}(X) -X_{K_2}(X)|]\\
        &\leq \frac{\alpha}{4}|K_1-K_2|
    \end{align*}
    
    Therefore the mean and variance are Lipschitz continuous in $K$, which means there is a unique solution to the IVP $K'=h(K), K(0)=0$. Under the limit as $K\rightarrow\infty$, $\mathbb{E}[X^2_K(X)]\rightarrow1$, so $\text{Var}(K) = \mathbb{E}[X^2_K(X)]-\mathbb{E}[X_K(X)]^2\rightarrow 0$. The left and right limits of $h$ are
    \begin{align*}
        h(0) >0,\qquad \lim_{K\rightarrow\infty} h(K) = -2c <0,
    \end{align*}
    so by the intermediate value theorem ($h$ is Lipschitz continuous), there exists a $K^*=\inf\{K>0\; | h(K)=0 \}$. For all $K < K^*,\; h(K) >0$ therefore $K$ is increasing. Therefore, $K \rightarrow K^*$. 
\end{proof}

\subsubsection*{Proof of Proposition \ref{prop: increase cooperation for some finite time}}
\begin{proof}
Let $f \in C^2([0,1])$ be an arbitrary test function, then
\begin{align*}
    \frac{d}{dt}\mathbb{E}[f(X_t)] = \mathbb{E}[\mathcal{L}f(X_t)]
\end{align*}
where $\mathcal{L}$ is the infinitesimal generator given by
\begin{align*}
    \mathcal{L}f(x) = A_\rho(t,x)f'(x) + \frac{1}{2}B_\rho^2(t,x) f''(x).
\end{align*}
In particular, letting $f(x)=x$ gives the evolution of the mean as
\begin{align*}
    \frac{d}{dt}\mathbb{E}[X_t] = \mathbb{E}[A_\rho(t,x)]
\end{align*}
which gives the ODE
\begin{align*}
    \frac{d}{dt}m_1(t) = \int_0^1 A_\rho(t,x)\rho(t,x) dx  &= 2\alpha \int_0^1 x^2(1-x)^2\Delta G[\rho] \rho(t,x) dx + O(\alpha^2).
\end{align*}
    To show that this is increasing, we begin by showing the $O(\alpha^2)$ term is uniformly bounded by a constant $C$. Each variance term is bounded by 
    \begin{align*}
        \text{Var}(U^h) & \leq \mathbb{E}[(U^h)^2] \leq (H(b+c)+|\beta|)^2
    \end{align*}
    and the covariance is bounded by 
    \begin{align*}
        |\text{Cov}(U^h,h^k)|  &\leq \sqrt{\text{Var}(U^h)\text{Var}(U^k)} \leq (H(b+c)+|\beta|)^2.
    \end{align*}
    Therefore, we have
    \begin{align*}
        |\Sigma_{CC}| &= \Big|\alpha^2\sum_{h=0}^{H-1} \text{Var}(U^h) + 2\alpha^2 \sum_{h<k}^{H-1}\text{Cov}(U^h,U^k)\Big|\\
        &\leq \alpha^2(H + H(H-1))(H(b+c)+|\beta|)^2\\
        &= \alpha^2H^2(H(b+c)+|\beta|)^2
    \end{align*}
    This give a simple bound on the $O(\alpha^2)$ term as
    \begin{align*}
    |2x(1-x)(1-2x)\Sigma_{CC}|  &\leq \frac{1}{4}\alpha^2H^2(H(b+c)+|\beta|)^2.
\end{align*}
Define
\begin{align*}
    I(t) &= \int_0^1 x^2(1-x)^2\Delta G[\rho] \rho(t,x) dx
\end{align*}
and let $\alpha^*=\frac{4I(0)}{H^2(H(b+c)+|\beta|)^2}$. Since $I(t)$ is continuous and $I(0) >0$, then there exists a $T$ such that
\begin{align*}
    I(t) > I(0)/2 >0 \qquad \forall t \in [0,T].
\end{align*}
Hence on the interval $[0,T]$, 
\begin{align*}
    \frac{d}{dt}m_1(t) &= 2 \alpha I(t) +O(\alpha^2)\\
    &\geq\alpha I(0) -\frac{1}{4}\alpha^2H^2(H(b+c)+|\beta|)^2\\
    &>0
\end{align*}
for all $\alpha < \alpha^*$.
\end{proof}

To show the regularity conditions required for a well-posed solution to the steady-state equation, we provide the following Lipschitz continuity result on moments
\begin{align*}
    |m_i(\eta) - m_i(\nu)| &= |\int_0^1 y^id (\eta -\nu) (y)| \leq i W_1(\eta,\nu) .
\end{align*}
Next, we use this moment regularity to show the conditional reward terms are also Lipschitz.
\begin{lemma}
    The drift $A^\varepsilon_\eta$ and diffusion $B^{2,\varepsilon}_\eta$ are Lipschitz continuous in both $x$ and $\eta$. Specifically,
    \begin{align*}
        \|A^\varepsilon_\eta- A^\varepsilon_\nu\|_\infty \leq L_A W_1(\eta,\nu), \quad  \|B^{2,\varepsilon}_\eta- B^{2,\varepsilon}_\nu\|_\infty \leq L_B W_1(\eta,\nu),
    \end{align*}
    and
    \begin{align*}
        |A^\varepsilon_\eta(x)- A^\varepsilon_\eta(y)| \leq L_A |x-y|, \quad  |B^{2,\varepsilon}_\eta(x)- B^{2,\varepsilon}_\eta(y)| \leq L_B |x-y|,
    \end{align*}
    where $L_A,L_B \in \mathbb{R}^+$ and are independent of $\varepsilon$.
\end{lemma}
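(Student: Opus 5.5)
The plan is to exploit the structure of the drift and diffusion. Each is a fixed polynomial in $x$ multiplied by a scalar depending on $\eta$ (and on $x$ only polynomially):
\begin{align*}
A^\varepsilon_\eta(x) = 2\alpha x^2(1-x)^2\Delta G[\eta] + 2x(1-x)(1-2x)\Sigma_{CC}[\eta](x),\qquad B^{2,\varepsilon}_\eta(x) = 4x^2(1-x)^2\Sigma_{CC}[\eta](x)+\varepsilon.
\end{align*}
Since $\varepsilon$ is an additive constant, it cancels in every difference. Hence all constants can be chosen independently of $\varepsilon$, and it suffices to prove Lipschitz bounds for $\Delta G[\eta]$ and $\Sigma_{CC}[\eta](x)$.

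\textbf{Key observation.} For fixed $H$, both $\Delta G[\eta]$ and $\Sigma_{CC}[\eta](x)$ are polynomials in $x$ and in finitely many moments $\mu_1,\dots,\mu_{p}$ of $\eta$. For $H=2$ this can be read off the tables in Appendix \ref{appendix: derivations}, where moments up to $\mu_3$ appear. For general $H$ it follows from the recursions in Table \ref{table: iterative form and rewards for PS rules} and from Lemma \ref{lemma: recursion simplification oft, k=0}: each step multiplies by $y$ and subtracts an expectation, so $m^{k,h}_a$, $\mathbb{E}[Y_lY_k\mid\cdot]$ and $M^{h,k}_{a,b}$ are polynomials of degree at most $H+1$ in the moments and in $x$.

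\textbf{Step 1: boundedness.} Every variable lies in a compact set: $x\in[0,1]$ and $\mu_i\in[0,1]$. So these polynomials, together with all their partial derivatives, are bounded by a constant $C_H$ that depends only on $b,c,\beta,\alpha,H$.

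\textbf{Step 2: Lipschitz in $\eta$.} By the mean value theorem on the compact cube of moments,
\[
|P(x,\mu(\eta))-P(x,\mu(\nu))|\le C_H\sum_{i\le p}|\mu_i(\eta)-\mu_i(\nu)|.
\]
Combining this with the moment estimate $|m_i(\eta)-m_i(\nu)|\le iW_1(\eta,\nu)$ stated just before the lemma gives $\|\Delta G[\eta]-\Delta G[\nu]\|_\infty,\ \|\Sigma_{CC}[\eta]-\Sigma_{CC}[\nu]\|_\infty\le C'W_1(\eta,\nu)$. Multiplying by the bounded polynomial prefactors, which are at most $1/16$ and $1/4$ on $[0,1]$, yields $L_A$ and $L_B$ for the $\eta$-dependence. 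This mirrors the computation in Lemma \ref{lem: wasserstein}, where the variance was handled explicitly.

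\textbf{Step 3: Lipschitz in $x$.} For fixed $\eta$, $A^\varepsilon_\eta$ and $B^{2,\varepsilon}_\eta$ are polynomials in $x$ on $[0,1]$. By Step 1 their $x$-derivatives are bounded uniformly in $\eta$, so the mean value theorem gives the spatial bounds, as in Lemma \ref{lem: advection spatially lipschitz}. Taking $L_A$ and $L_B$ as the maximum of the constants from Steps 2 and 3 finishes the argument.

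\textbf{Main obstacle.} The hard part is making the polynomial-in-moments claim rigorous for the covariance terms $\text{Cov}(U^h,U^l)$ and $\text{Cov}(r_l,r_k)$ when $H$ is general. These involve joint expectations $\mathbb{E}[Y_lY_k\mid a_h]$ across the Markov chain of opponents. My first approach is an induction on $k$: condition on $Y_{k-1}$ and use the one-step kernel, which is affine in $y$ times $\rho$. This shows that $\mathbb{E}[\phi(Y_{l})Y_k\mid\cdot]$ reduces to moments of $\eta$ of degree one higher per step. If this turns out messy, there is a fallback. Every quantity is an integral of a bounded function (rewards are bounded by $H(b+c)+|\beta|$) against products of at most $H$ copies of $\eta$ and kernels that are Lipschitz in $y$. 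A telescoping coupling argument then gives Lipschitz dependence in $W_1$ directly, with constant growing like $H$ times the bound.
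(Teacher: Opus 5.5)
Your proposal is correct and follows essentially the paper's proof. The paper also writes $A^\varepsilon_\eta$ and $B^{2,\varepsilon}_\eta$ as a polynomial $f(x,m_1,m_2,m_3)$ on $[0,1]^4$ (the $H=2$ case), bounds $\nabla f$ on the cube, and combines this with $|m_i(\eta)-m_i(\nu)|\le iW_1(\eta,\nu)$ and a bounded $\partial_x f$; your explicit $\varepsilon$-cancellation and general-$H$ discussion go beyond what the paper states but are not needed for the lemma as it is used there.
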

\begin{proof}
    First note that $\Delta G[\rho]$ is a polynomial in $m_1$ and $m_2$, and $\Sigma$ is a polynomial in $x, m_1, m_2, m_3$. Since the product of polynomials is a polynomial, we have $A_\eta^\varepsilon = f(x,m_1,m_2,m_3)$ is a polynomial such that $x,m_1,m_2,m_3 \in[0,1]$. Therefore, $f$ is Lipschitz in each of its arguments such that 
    \begin{align*}
        L_f = \sup_{[0,1]^4}\| \nabla f(x,m_1,m_2,m_3)\|_\infty < \infty.
    \end{align*}
    Hence,
    \begin{align*}
        |f(x,m_1(\eta),m_2(\eta),m_3(\eta)) - f(x,m_1(\nu),m_2(\nu),m_3(\nu))| &\leq L_f (|m_1(\eta) - m_1(\nu)| + |m_2(\eta) - m_2(\nu)|\\ &+ |m_3(\eta) - m_3(\nu)|)\\
        &\leq L_f (W_1(\eta,\nu) + 2W_1(\eta,\nu) + 3W_1(\eta,\nu))\\
        &= 6L_fW_1(\eta,\nu)
    \end{align*}
    For Lipschitz continuity in $x$, note that since $f$ is a polynomial, its derivative is bounded. In particular,
    \begin{align*}
        \sup_{[0,1]^4}|\partial_x f| < \infty
    \end{align*}
    and therefore for any $\eta$,
    \begin{align*}
        |A_\eta(x)-A_\eta(y)| \leq \sup_{[0,1]^4}|\partial_x f| |x-y|
    \end{align*}
    so $A_\eta$ is Lipschitz in $x$. Taking $L_A$ as the maximum of each Lipschitz bound gives the result. The proof for $B^{2,\varepsilon}_\eta$ is identical, since this is also a polynomial.
\end{proof}
Note that since the drift and diffusion are Lipschitz and the domain is bounded, then both the drift and diffusion are bounded above such that $A^\varepsilon_\eta(x) \leq M_A$ and $B^{2,\varepsilon}_\eta(x) \leq M_B$. The next result shows the iterative operator is well-defined.
\begin{lemma}\label{lem: F operator is well defined}
    The operator $\mathcal{F}^\varepsilon[\eta]: S^\varepsilon \rightarrow S^\varepsilon$ is well-defined.
\end{lemma}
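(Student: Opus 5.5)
We need to verify that for $\eta\in S^\varepsilon$, the image $\mathcal{F}^\varepsilon[\eta]$ is continuous, nonnegative, integrates to one, and satisfies $\|\mathcal{F}^\varepsilon[\eta]\|_\infty\le M^\varepsilon$ for a suitable choice of $M^\varepsilon$ that does not depend on $\eta$. The plan is to fix $M^\varepsilon$ at the end from bounds that are uniform over $\eta$. These bounds come from the preceding lemma, which gives $|A^\varepsilon_\eta(x)|\le M_A$ and $B^{2,\varepsilon}_\eta(x)\le M_B$ uniformly in $\eta$ and $x$. Since $\Sigma_{CC}\ge 0$ (it is a variance), we also have the lower bound $B^{2,\varepsilon}_\eta(x)=4x^2(1-x)^2\Sigma_{CC}+\varepsilon\ge\varepsilon>0$.

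First we show that $w^\varepsilon_\eta$ is well defined and continuous. The integrand $2A^\varepsilon_\eta/B^{2,\varepsilon}_\eta$ is continuous on $[0,1]$, because both functions are Lipschitz in $x$ and the denominator is at least $\varepsilon$. It is also bounded in absolute value by $2M_A/\varepsilon$. Hence $x\mapsto\int_0^x 2A^\varepsilon_\eta/B^{2,\varepsilon}_\eta\,dy$ is continuous and lies in $[-2M_A/\varepsilon,\,2M_A/\varepsilon]$. Consequently $w^\varepsilon_\eta$ is continuous and strictly positive, with two-sided bounds
\[
\frac{1}{M_B}e^{-2M_A/\varepsilon}\;\le\; w^\varepsilon_\eta(x)\;\le\;\frac{1}{\varepsilon}e^{2M_A/\varepsilon}\qquad\forall x\in[0,1].
\]

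Next we normalise. The lower bound gives $\int_0^1 w^\varepsilon_\eta\ge M_B^{-1}e^{-2M_A/\varepsilon}>0$, so the denominator in $\mathcal{F}^\varepsilon[\eta]$ is strictly positive and finite. The quotient $\mathcal{F}^\varepsilon[\eta]$ is therefore continuous and nonnegative (indeed strictly positive), and it integrates to one by construction. For the sup bound we combine the upper bound on $w^\varepsilon_\eta$ with the lower bound on its integral:
\[
\|\mathcal{F}^\varepsilon[\eta]\|_\infty\le \frac{M_B}{\varepsilon}e^{4M_A/\varepsilon}=:M^\varepsilon .
\]
We take this as the definition of $M^\varepsilon$ in $S^\varepsilon$. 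It depends only on $\varepsilon$, $M_A$ and $M_B$, and not on $\eta$. We should also check that $S^\varepsilon$ is nonempty, which holds since the uniform density $1$ belongs to it whenever $M^\varepsilon\ge 1$; we can always enlarge $M^\varepsilon$ to ensure this. It then follows that $\mathcal{F}^\varepsilon$ maps $S^\varepsilon$ into $S^\varepsilon$.

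The main point needing care is that $M_A$ and $M_B$ are genuinely uniform over all admissible $\eta$, and not just over a fixed $\eta$. The preceding lemma delivers this uniformity because $A_\eta$ and $B^2_\eta$ are polynomials in $x,m_1(\eta),m_2(\eta),m_3(\eta)$, and all of these arguments lie in $[0,1]$ for any probability density $\eta$. We therefore bound these polynomials by their suprema over $[0,1]^4$ rather than by the Lipschitz constants alone. We also need $\Sigma_{CC}\ge0$ for the lower bound on the diffusion; we invoke its definition as $\mathrm{Var}(\Delta\psi_C)$ rather than the expanded formula, since the expansion is not obviously nonnegative.
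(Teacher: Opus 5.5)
Your proposal is correct and follows the paper's proof step for step. You use $B^{2,\varepsilon}_\eta\ge\varepsilon$ to bound the exponent by $2M_A/\varepsilon$, derive two-sided bounds on $w^\varepsilon_\eta$, and then combine the upper bound on $w^\varepsilon_\eta$ with the lower bound on its integral to get an $\eta$-independent $M^\varepsilon$. The paper writes $M_B+\varepsilon$ where you write $M_B$, which only reflects how $M_B$ is defined. Your explicit checks that $\Sigma_{CC}\ge 0$ (as a variance) and that $M_A,M_B$ are uniform over all probability densities are points the paper leaves implicit, and they are the right ones to make.
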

\begin{proof}
    Note that 
    \begin{align*}
        \bigg|\frac{2A^\varepsilon_\eta(x)}{B^{2,\varepsilon}_\eta(x)}\bigg| &\leq \bigg|\frac{2A^\varepsilon_\eta(x)}{\varepsilon}\bigg| \leq \frac{2M_A}{\varepsilon}
    \end{align*}
    so for every $x \in [0,1]$
    \begin{align*}
        \bigg |\int_0^x\frac{2A^\varepsilon_\eta(y)}{B^{2,\varepsilon}_\eta(y)} dy \bigg |&\leq \frac{2M_A}{\varepsilon}
    \end{align*}
    and 
    \begin{align*}
        e^{-\frac{2M_A}{\varepsilon}} \leq \exp\bigg(\int_0^x\frac{2A^\varepsilon_\eta(y)}{B^{2,\varepsilon}_\eta(y)} dy\bigg) \leq e^{\frac{2M_A}{\varepsilon}}
    \end{align*}
    This means 
    \begin{align*}
        0 < w^\varepsilon_\eta(x) \leq \frac{e^{\frac{2M_A}{\varepsilon}}}{\varepsilon}
    \end{align*}
    We have that $w^\varepsilon_\eta(x) \in C[0,1]$, $w^\varepsilon_\eta(x)>0$ and the integral across the domain is finite and strictly positive. Moreover,
    \begin{align*}
        \int_0^1 \mathcal{F}^\varepsilon[\eta](x) dx = 1.
    \end{align*}
    Therefore $\mathcal{F}^\varepsilon:S^\varepsilon \rightarrow S^\varepsilon$ is a well-defined mapping. Since $B^{2,\varepsilon}_\eta(x) \leq M_B + \varepsilon$, we also have the bound
    \begin{align*}
        \frac{1}{M_B + \varepsilon} e^{-\frac{2M_A}{\varepsilon}} &\leq w^\varepsilon_\eta(x) \leq \frac{e^{\frac{2M_A}{\varepsilon}}}{\varepsilon}\\
        \frac{1}{M_B + \varepsilon} e^{-\frac{2M_A}{\varepsilon}} &\leq \int_0^1w^\varepsilon_\eta(x) dx \leq \frac{e^{\frac{2M_A}{\varepsilon}}}{\varepsilon}
    \end{align*}
    hence $0 \leq \mathcal{F}^\varepsilon[\eta](x) \leq M^\varepsilon$ for some constant $M^\varepsilon$ which is independent of $\eta$.
\end{proof}

\begin{lemma}\label{lemma: F is relatively compact}
    The set $\mathcal{F}^\varepsilon(S):= \{\mathcal{F}^\varepsilon[\eta]:\eta \in S^\varepsilon\} \subset C([0,1])$ is relatively compact.
\end{lemma}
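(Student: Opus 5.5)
The plan is to apply the Arzelà--Ascoli theorem on $C([0,1])$. Relative compactness follows once the family $\mathcal{F}^\varepsilon(S)$ is shown to be uniformly bounded and equicontinuous. Uniform boundedness is already available: Lemma \ref{lem: F operator is well defined} gives $0\le \mathcal{F}^\varepsilon[\eta](x)\le M^\varepsilon$ with $M^\varepsilon$ independent of $\eta$. The substantive work is equicontinuity, and I would prove the stronger statement that every $\mathcal{F}^\varepsilon[\eta]$ is Lipschitz on $[0,1]$ with a constant $L^\varepsilon$ independent of $\eta\in S^\varepsilon$.

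To get this, I would differentiate the explicit formula. Write $w^\varepsilon_\eta = \frac{1}{B^{2,\varepsilon}_\eta}\exp(\Phi_\eta)$ with $\Phi_\eta(x)=\int_0^x 2A^\varepsilon_\eta/B^{2,\varepsilon}_\eta\,dy$. The preceding Lipschitz lemma makes $A^\varepsilon_\eta$ and $B^{2,\varepsilon}_\eta$ Lipschitz in $x$ with constants $L_A, L_B$ independent of $\eta$. They are also bounded by $M_A, M_B$, and $B^{2,\varepsilon}_\eta\ge\varepsilon$. Hence $1/B^{2,\varepsilon}_\eta$ is Lipschitz with constant $L_B/\varepsilon^2$ and bounded by $1/\varepsilon$. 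The function $\Phi_\eta$ is Lipschitz with constant $2M_A/\varepsilon$ and satisfies $|\Phi_\eta|\le 2M_A/\varepsilon$. On that bounded range $\exp$ is Lipschitz with constant $e^{2M_A/\varepsilon}$, so $\exp(\Phi_\eta)$ is Lipschitz with constant $e^{2M_A/\varepsilon}\,2M_A/\varepsilon$ and bounded by $e^{2M_A/\varepsilon}$. A product of bounded Lipschitz functions is Lipschitz, so $w^\varepsilon_\eta$ has a Lipschitz constant $\tilde L^\varepsilon$ depending only on $\varepsilon, M_A, M_B, L_B$. Alternatively, since $B^{2,\varepsilon}_\eta$ is a polynomial in $x$, one can note $w$ is $C^1$ and bound $|w'|$ directly.

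Finally I would divide by the normalisation. Lemma \ref{lem: F operator is well defined} supplies the lower bound $\int_0^1 w^\varepsilon_\eta\ge \frac{1}{M_B+\varepsilon}e^{-2M_A/\varepsilon}$, which is uniform in $\eta$. Therefore $|\mathcal{F}^\varepsilon[\eta](x)-\mathcal{F}^\varepsilon[\eta](y)|\le (M_B+\varepsilon)e^{2M_A/\varepsilon}\tilde L^\varepsilon|x-y|$, which gives equicontinuity, and Arzelà--Ascoli concludes.

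The only delicate point is making sure every constant is independent of $\eta$. This rests on two facts. First, the moments $m_1,m_2,m_3$ of any $\eta\in S^\varepsilon$ lie in $[0,1]$. Second, $A$ and $B^2$ are polynomials in $(x,m_1,m_2,m_3)$ on a compact set, so $M_A, M_B, L_A, L_B$ are uniform. The bounds will deteriorate as $\varepsilon\to0$, but that is harmless here since $\varepsilon$ is fixed.
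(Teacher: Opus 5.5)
Your proposal is correct and follows essentially the same route as the paper: Arzel\`a--Ascoli, with uniform boundedness taken from Lemma~\ref{lem: F operator is well defined}, and equicontinuity from an $\eta$-independent Lipschitz bound on $w^\varepsilon_\eta$ divided by the uniform lower bound on the normalisation. The only cosmetic difference is that the paper bounds the derivative of $\log w^\varepsilon_\eta = \psi^\varepsilon_\eta - \log B^{2,\varepsilon}_\eta$ rather than treating $w^\varepsilon_\eta$ as a product of bounded Lipschitz functions; your product version makes explicit the uniform upper bound on $w^\varepsilon_\eta$ that the paper's step from $\log w^\varepsilon_\eta$ back to $w^\varepsilon_\eta$ uses only implicitly.
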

\begin{proof}
    From the Arzela-Ascoli Theorem \cite{rudin_principles}, it suffices to show that the set $\mathcal{F}^\varepsilon(S^\varepsilon)$ is equibounded and equicontinuous. Firstly, by Lemma \ref{lem: F operator is well defined}, the set is uniformly bounded by $M^\varepsilon$. 
    Now let
    \begin{align*}
        \psi^\varepsilon_\eta(x)&= \int_0^x \frac{2A^\varepsilon_\eta(y)}{B^{2,\varepsilon}_\eta(y)}dy
    \end{align*}
    Then, 
    \begin{align*}
        |\partial_x\psi^\varepsilon_\eta(x)| &= \bigg|\frac{2A^\varepsilon_\eta(x)}{B^{2,\varepsilon}_\eta(x)}\bigg|\\
        &\leq \frac{2M_A}{\varepsilon}
    \end{align*}
    To pass the bound to the operator, we use that $B^{2,\varepsilon}_\eta(x)$ is uniformly Lipschitz in $x$ and has a lower bound $\varepsilon$ to get
    \begin{align*}
        |\partial_x \log(B^{2,\varepsilon}_\eta)| &= |\frac{\partial_xB^{2,\varepsilon}_\eta(x)}{B^{2,\varepsilon}_\eta(x)}|\leq \frac{L_B}{\varepsilon}.
    \end{align*}
    then since $w^\varepsilon_\eta(x) = e^{\psi^\varepsilon_\eta(x)}/B^{2,\varepsilon}_\eta(x)$,
    \begin{align*}
        \log w^\varepsilon_\eta(x) &= \psi^\varepsilon_\eta(x) - \log(B^{2,\varepsilon}_\eta(x))
    \end{align*}
    has a uniform Lipschitz bound. By Lemma \ref{lem: F operator is well defined}, the normalisation is bounded away from zero. Therefore, $\mathcal{F}^\varepsilon[\eta](x)$ has a uniform Lipschitz bound. Hence, $\mathcal{F}^\varepsilon[\eta](x)$ is uniformly bounded and equicontinuous and therefore relatively compact in $C([0,1])$.
\end{proof}

\begin{proposition}\label{prop: F is Lipschitz continuous}
    The mapping $\mathcal{F}^\varepsilon: S^\varepsilon \rightarrow S^\varepsilon$ is Lipschitz continuous.
\end{proposition}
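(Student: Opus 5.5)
The plan is to prove Lipschitz continuity of $\mathcal{F}^\varepsilon$ with respect to the $W_1$ distance on the input, which is then converted to the sup norm on $S^\varepsilon$. Write $\mathcal{F}^\varepsilon[\eta] = w^\varepsilon_\eta / Z_\eta$ with $Z_\eta = \int_0^1 w^\varepsilon_\eta$. Each factor in this expression is Lipschitz in $\eta$, and the lower bounds of Lemma~\ref{lem: F operator is well defined} keep all denominators away from zero. The only inputs needed are the preceding lemma, which gives $\|A^\varepsilon_\eta-A^\varepsilon_\nu\|_\infty \le L_A W_1(\eta,\nu)$ and $\|B^{2,\varepsilon}_\eta-B^{2,\varepsilon}_\nu\|_\infty \le L_B W_1(\eta,\nu)$, together with the bounds $|A^\varepsilon_\eta|\le M_A$ and $\varepsilon \le B^{2,\varepsilon}_\eta \le M_B+\varepsilon$.

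\textbf{Step 1 (the exponent).} The ratio $2A/B^{2,\varepsilon}$ is controlled pointwise by
\begin{align*}
\bigg|\frac{A_\eta}{B^{2,\varepsilon}_\eta}-\frac{A_\nu}{B^{2,\varepsilon}_\nu}\bigg| \le \frac{|A_\eta-A_\nu|}{B^{2,\varepsilon}_\eta} + |A_\nu|\frac{|B^{2,\varepsilon}_\nu-B^{2,\varepsilon}_\eta|}{B^{2,\varepsilon}_\eta B^{2,\varepsilon}_\nu} \le \Big(\frac{L_A}{\varepsilon}+\frac{M_A L_B}{\varepsilon^2}\Big)W_1(\eta,\nu).
\end{align*}
Integrating over $[0,x]\subset[0,1]$ shows that $\|\psi^\varepsilon_\eta-\psi^\varepsilon_\nu\|_\infty \le C_1 W_1(\eta,\nu)$.

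\textbf{Step 2 (the weight).} Both exponents lie in $[-2M_A/\varepsilon, 2M_A/\varepsilon]$, and $e^s$ has Lipschitz constant $e^{2M_A/\varepsilon}$ on that interval. Hence $\|e^{\psi_\eta}-e^{\psi_\nu}\|_\infty \le e^{2M_A/\varepsilon}C_1W_1$. The reciprocal $1/B^{2,\varepsilon}$ changes by at most $L_BW_1/\varepsilon^2$. Combining these through the product rule $|ab-a'b'|\le |a||b-b'|+|b'||a-a'|$ gives $\|w_\eta-w_\nu\|_\infty \le C_2W_1$, and integrating gives $|Z_\eta-Z_\nu|\le C_2W_1$.

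\textbf{Step 3 (normalisation).} Write
\begin{align*}
\bigg|\frac{w_\eta}{Z_\eta}-\frac{w_\nu}{Z_\nu}\bigg| \le \frac{|w_\eta-w_\nu|}{Z_\eta} + w_\nu\frac{|Z_\nu-Z_\eta|}{Z_\eta Z_\nu}.
\end{align*}
Lemma~\ref{lem: F operator is well defined} gives $Z\ge e^{-2M_A/\varepsilon}/(M_B+\varepsilon)$ and $w\le e^{2M_A/\varepsilon}/\varepsilon$. Therefore $\|\mathcal{F}^\varepsilon[\eta]-\mathcal{F}^\varepsilon[\nu]\|_\infty \le C_3 W_1(\eta,\nu)$, where $C_3$ depends on $\varepsilon$ but not on $\eta$ or $\nu$.

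\textbf{Step 4 (changing metrics).} The main subtlety is the choice of metric. $S^\varepsilon$ carries the sup-norm topology of $C([0,1])$, which is what Schauder's theorem will need. On $[0,1]$, duality gives $W_1(\eta,\nu)=\int_0^1|\Phi_\eta-\Phi_\nu|\,dx$ with $\Phi$ the cumulative distribution functions. This integral is at most $\int_0^1|\eta-\nu| \le \|\eta-\nu\|_\infty$. So $\|\mathcal{F}^\varepsilon[\eta]-\mathcal{F}^\varepsilon[\nu]\|_\infty \le C_3\|\eta-\nu\|_\infty$, which is Lipschitz continuity on $S^\varepsilon$ in the $C([0,1])$ norm.

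The constants blow up like $e^{4M_A/\varepsilon}/\varepsilon^{k}$. This is harmless because $\varepsilon$ is fixed here, and the $\varepsilon\to0$ limit is handled separately through compactness rather than through this Lipschitz bound.
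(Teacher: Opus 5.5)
Your proposal is correct and follows the paper's proof essentially step for step. It uses the same splitting of the difference of the ratios $2A^\varepsilon_\eta/B^{2,\varepsilon}_\eta$, the same mean-value estimate for the exponential combined with the reciprocal of $B^{2,\varepsilon}$ to control $w^\varepsilon_\eta$, and the same quotient splitting with the bounds of Lemma~\ref{lem: F operator is well defined} for the normalisation, giving $\|\mathcal{F}^\varepsilon[\eta]-\mathcal{F}^\varepsilon[\nu]\|_\infty\le L_{\mathcal{F}}W_1(\eta,\nu)$.

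Your Step 4, $W_1(\eta,\nu)\le\int_0^1|\eta-\nu|\,dx\le\|\eta-\nu\|_\infty$, is a worthwhile addition. The paper stops at the $W_1$ bound and only tacitly treats it as continuity in the $C([0,1])$ topology that Schauder's theorem needs. Your closing remark slightly understates the blow-up: these crude bounds give a factor of order $e^{8M_A/\varepsilon}$ rather than $e^{4M_A/\varepsilon}$, but nothing depends on it.
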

\begin{proof}

 Consider the integrand
  \begin{align*}
     \bigg|\frac{2A^\varepsilon_\eta}{B^2_\eta+\varepsilon} - \frac{2A^\varepsilon_\nu}{B^2_\nu+\varepsilon}\bigg| &\leq \bigg|\frac{2(A^\varepsilon_\eta-A^\varepsilon_\nu)}{B^2_\eta+\varepsilon}\bigg| + \frac{2|A^\varepsilon_\nu||B^2_\eta-B^2_\nu|}{|(B^2_\eta+\varepsilon)(B^2_\nu+\varepsilon)|}\\
     &\leq \frac{2|A^\varepsilon_\eta-A^\varepsilon_\nu|}{\varepsilon} + \frac{2M_A|B^2_\eta-B^2_\nu|}{\varepsilon^2}\\
     &\leq (\frac{2L_A}{\varepsilon} + \frac{2M_AL_B}{\varepsilon^2})W_1(\eta,\nu)
    \end{align*}   
    where in the final line we take the supremum over $x$. This implies
    \begin{align*}
        \|\psi^\varepsilon_\eta -\psi^\varepsilon_\nu\|_\infty &\leq (\frac{2L_A}{\varepsilon} + \frac{2M_AL_B}{\varepsilon^2})W_1(\eta,\nu)
    \end{align*}
    Now take
    \begin{align*}
        |w^\varepsilon_\eta(x)-w^\varepsilon_\nu(x)\big| &= \bigg|\frac{\exp{(\psi^\varepsilon_\eta)}}{B^{2,\varepsilon}_\eta}-\frac{\exp{(\psi^\varepsilon_\nu)}}{B^{2,\varepsilon}_\nu}\bigg|\\
        &\leq \bigg|\frac{\exp{(\psi^\varepsilon_\eta)}-\exp{(\psi^\varepsilon_\nu)}}{B^{2,\varepsilon}_\eta}\bigg|+\frac{|\exp{(\psi^\varepsilon_\nu)||B^{2,\varepsilon}_\eta-B^{2,\varepsilon}_\nu|}}{|(B^{2,\varepsilon}_\eta)(B^{2,\varepsilon}_\nu)|}\\
        &\leq \frac{\exp(\max\{\psi^\varepsilon_\eta, \psi^\varepsilon_\nu\})|\psi^\varepsilon_\eta-\psi^\varepsilon_\nu|}\varepsilon + \frac{\exp(\psi^\varepsilon_\eta)L_B W_1(\eta,\nu)}{\varepsilon^2}\\
        &\leq e^{\frac{2M_A}{\varepsilon}}\Big(\frac{2L_A}{\varepsilon^2} + \frac{2M_AL_B}{\varepsilon^3} + \frac{L_B}{\varepsilon^2}\Big) W_1(\eta,\nu)
    \end{align*}
    where in the penultimate line we use mean-value theorem, and in the final line the uniform upper bound on $\psi^\varepsilon_\eta$ and Lipschitz bound above. Denote this value by $C_w$, then normalisation has the same Lipschitz bound. Hence
    \begin{align*}
        \big| \mathcal{F}^\varepsilon[\eta]-\mathcal{F}^\varepsilon[\nu] \big| &= \bigg|\frac{w^\varepsilon_\eta(x)}{\int_0^1 w^\varepsilon_\eta(x)dx} - \frac{w^\varepsilon_\nu(x)}{\int_0^1 w^\varepsilon_\nu(x)dx}\bigg| \\
        &\leq \frac{\big|w^\varepsilon_\eta(x)-w^\varepsilon_\nu(x)\big|}{\big|\int_0^1 w^\varepsilon_\eta(x)dx\big|} + \frac{|w^\varepsilon_\nu(x)||\int_0^1 w^\varepsilon_\eta(x)-w^\varepsilon_\nu(x)dx|}{|\big(\int_0^1 w^\varepsilon_\eta(x)dx\big)(\int_0^1 w^\varepsilon_\nu(x)dx)|}\\
        \big\| \mathcal{F}^\varepsilon[\eta]-\mathcal{F}^\varepsilon[\nu] \big\|_\infty&\leq (M_B+ \varepsilon)e^{\frac{2M_A}{\varepsilon}}C_w W_1(\eta,\nu) + \frac{(M_B+ \varepsilon)^2}{\varepsilon}e^{\frac{6M_A}{\varepsilon}}C_w W_1(\eta,\nu)\\
        &:= L_\mathcal{F}W_1(\eta,\nu) 
    \end{align*}
\end{proof}

\subsubsection*{Proof of Proposition \ref{prop: existence to regularised problem}}
\begin{proof}
    First note $S^\varepsilon$ is a non-empty, closed, bounded and convex subset of a Banach space. A solution to the steady state equation is uniquely expressed by the fixed point of the mapping $\mathcal{F}^\varepsilon:S^\varepsilon\rightarrow S^\varepsilon$. By Lemma \ref{lem: F operator is well defined}, this operator is well-defined and an invariant mapping. By Proposition \ref{prop: F is Lipschitz continuous} it is continuous and by Lemma \ref{lemma: F is relatively compact} its relatively compact in $C([0,1])$. Applying Schauder's Fixed Point Theorem \cite{shapiro_fixed_point}, there exists at least one fixed point of the mapping $\mathcal{F}^\varepsilon:S^\varepsilon\rightarrow S^\varepsilon$, and therefore a solution to the regularised steady state equation.
\end{proof}

\subsubsection*{Proof of Theorem \ref{thm: existence to un-regularised problem}}
\begin{proof}
    Let $\mu^\varepsilon(dx) = \rho^\varepsilon(x)dx$. Since $[0,1]$ is compact, the space of probability measures $\mathcal{P}([0,1])$ is sequentially compact under weak convergence.  Along any sequence $\varepsilon_n \rightarrow0$, there exists a subsequence and measure $\mu \in \mathcal{P}([0,1])$ such that \cite{billing_probability_measures}
    \begin{align*}
        \mu^{\varepsilon_{n_k}} \rightharpoonup \mu. 
    \end{align*}
    The weak form of the regularised steady-state equation at the stationary distribution $\mu^\varepsilon$ is
    \begin{align*}
        \int_0^1 \Big[A_{\mu^\varepsilon}(x) \varphi'(x) + \frac{1}{2}(B^2_{\mu^\varepsilon}(x)+\varepsilon)\varphi''(x)\Big] d\mu^{\varepsilon}(x) &= 0
    \end{align*}
    for all $\varphi \in C^2([0,1])$ such that $\varphi'(0) = \varphi'(1)=0$. Along the subsequence, we have
    \begin{align*}
        \int_0^1 \Big[A_{\mu^{\varepsilon_{n_k}}}(x) \varphi'(x) + \frac{1}{2}(B^2_{\mu^{\varepsilon_{n_k}}}(x)+\varepsilon_{n_k})\varphi''(x)\Big] d\mu^{\varepsilon_{n_k}}(x) &= 0.
    \end{align*}
    Since the drift, $A_\mu(x)$, and diffusion terms $B^2_\mu(x)$ are continuous and only depend on finitely many bounded moments, weak convergence implies 
    \begin{align*}
        A_{\mu^{\varepsilon_{n_k}}}(x) \varphi'(x) + \frac{1}{2}(B^2_{\mu^{\varepsilon_{n_k}}}(x)+\varepsilon_{n_k})\varphi''(x) \to A_{\mu}(x) \varphi'(x) + \frac{1}{2}B^2_{\mu}(x)\varphi''(x)
    \end{align*}
    uniformly on $[0,1]$. We can take the limit as $k\rightarrow \infty$ and obtain
        \begin{align*}
        \int_0^1 \Big[A_{\mu}(x) \varphi'(x) + \frac{1}{2}B^2_{\mu}(x)\varphi''(x)\Big]d\mu(x) &= 0
    \end{align*}
    which is the weak form of the stationary unregularised problem.
\end{proof}

\section{Empirical Study}\label{appendix: empirical study}
All simulations and numerical solutions have been run on a laptop with 12th Gen Intel(R) Core(TM) i7-1260P CPU. The total compute time for the experiments was approximately 40 hours. For the learning rate time scaling, the simulation and numerical solution for $\alpha = 0.001$ was run for ten times longer than for the $\alpha = 0.01$ case.

\begin{figure}[h!]
     \centering
    \includegraphics[width=0.9\textwidth]{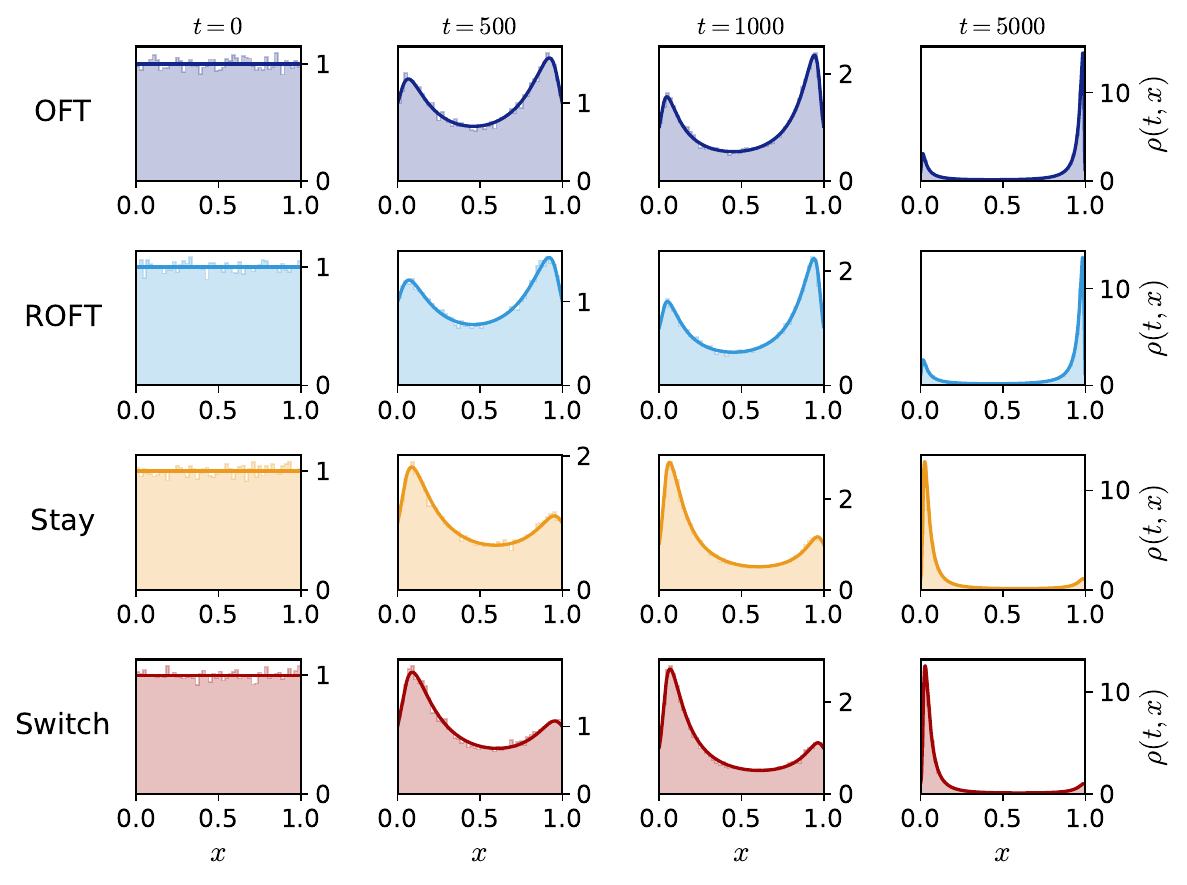}
     \caption{Evolution of the strategy distribution where the population is initialised with $X\sim \text{Uni}(0,1)$ for the 4 rules. The theoretical solution (solid line) matches the simulations (histogram).}
     \label{fig: comparison of evolutions for 4 PS rules, uniform}
\end{figure}

\begin{figure}[h!]
     \centering
    \includegraphics[width=0.9\textwidth]{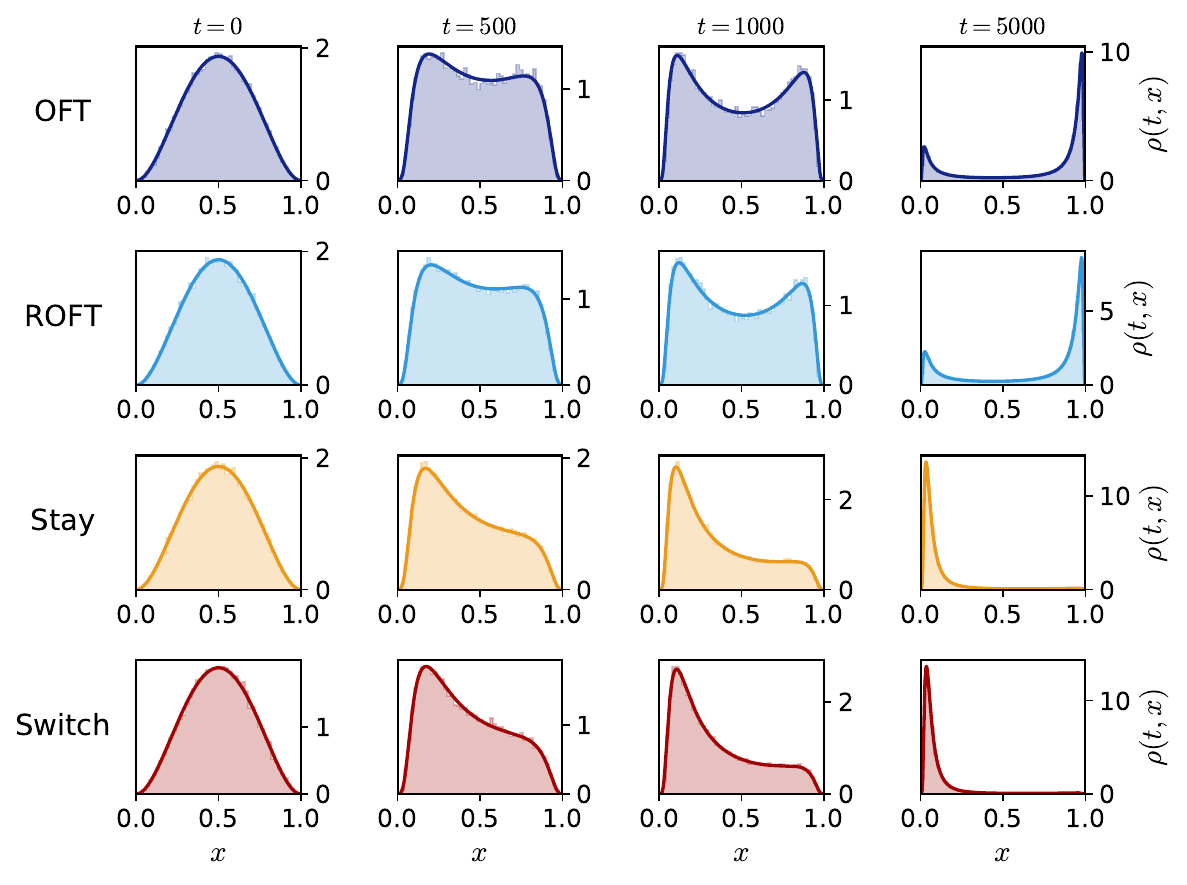}
     \caption{Evolution of the strategy distribution where the population is initialised with $X\sim$ Beta(3,3) for the 4 rules. The theoretical solution (solid line) matches the simulations (histogram).}
     \label{fig: comparison of evolutions for 4 PS rules, uniform}
\end{figure}

\end{document}